\documentclass[11pt]{article}
\usepackage{amssymb}
\usepackage{amsmath}
\usepackage{theorem}
\usepackage{latexsym}
\usepackage{graphicx}
\usepackage{mathtools}
\topmargin-1cm \textheight 220mm
\textwidth  166mm
\oddsidemargin.2cm \evensidemargin.2cm
\parskip .6ex
\parindent7mm
\reversemarginpar
\theoremheaderfont{\bf}
\theorembodyfont{\sl}
\newtheorem{theo}{Theorem}[section]
{\theorembodyfont{\rm} \newtheorem{defi}[theo]{Definition}}
{\theorembodyfont{\rm} \newtheorem{exa}[theo]{Example}}
{\theorembodyfont{\rm} \newtheorem{rem}[theo]{Remark}}

\newtheorem{cor}[theo]{Corollary}
\newtheorem{conj}[theo]{Conjecture}

{\theorembodyfont{\rm} }
\newtheorem{lemma}[theo]{Lemma}
{\theorembodyfont{\rm}}
{\theorembodyfont{\rm}}
\numberwithin{equation}{section}
\newenvironment{proof}{{\sc Proof:}}{\mbox{}\hfill$\Box$\medskip}
\newcommand{\QED}{\mbox{}\hfill$\Box$\medskip}

\newcommand{\junk}[1]{}



\newcommand{\F}{{\mathbb F}}
\newcommand{\Z}{{\mathbb Z}}

\newcommand{\cA}{{\mathcal A}}

\newcommand{\cC}{{\mathcal C}}

\newcommand{\cR}{{\mathcal R}}
\newcommand{\cS}{{\mathcal S}}
\newcommand{\cT}{{\mathcal T}}
\newcommand{\cU}{{\mathcal U}}

\newcommand{\cX}{{\mathcal X}}
\newcommand{\cY}{{\mathcal Y}}
\newcommand{\cZ}{{\mathcal Z}}
\newcommand{\Bf}{{\mathfrak B}}
\newcommand{\jk}{{[j,k)}}
\newcommand{\zerob}{{\mathbf 0}}
\newcommand{\ab}{{\mathbf a}}

\renewcommand{\sb}{{\mathbf s}}
\newcommand{\inner}[1]{\langle{#1}\rangle}

\newcounter{alp}
\newcounter{ara}
\newcounter{rom}
\newenvironment{romanlist}{\begin{list}{(\roman{rom})\hfill}{\usecounter{rom}
     \topsep0ex \labelwidth.7cm \leftmargin.7cm \labelsep0cm
     \rightmargin0cm \parsep0ex \itemsep.3ex
     \partopsep1.6ex}}{\end{list}}


\begin{document}
\title{Local Irreducibility of Tail-Biting Trellises}

\author{Heide Gluesing-Luerssen and G. David Forney, Jr.\footnote{H. Gluesing-Luerssen was partially supported
by the National Science Foundation Grants DMS-0908379 and DMS-1210061.
The authors are with the Department of Mathematics, University of Kentucky, Lexington KY 40506-0027
(email: heide.gl@uky.edu),
and the Laboratory for Information and Decision Systems,
Massachusetts Institute of Technology,
Cambridge, MA 02139 (email: forneyd@comcast.net), respectively.}}
\date{\today}

\vspace*{-5ex}
{\let\newpage\relax\maketitle}

\begin{abstract}
This paper investigates tail-biting trellis realizations for linear block codes.
Intrinsic trellis properties are used to characterize irreducibility on given intervals of
the time axis.
It proves beneficial to always consider the trellis and its dual simultaneously.
A major role is played by trellis properties that amount to
observability and controllability for fragments of the trellis of various lengths.
For fragments of length less than the minimum span length of the code it is shown that
fragment observability and fragment controllability are equivalent to irreducibility.
For reducible trellises, a constructive reduction procedure is presented.
The considerations also lead to a characterization for when the dual of a trellis
allows a product factorization into elementary (``atomic'') trellises.
\end{abstract}

\section{Introduction}\label{S-Intro}
The powerful performance of iterative decoding algorithms for codes on graphs has made graphical models
a major topic in coding theory.
In particular, it has led to a vivid interest in optimal graphical representations of (linear)
block codes.
For cycle-free graphs, this realization theory is by now well understood.
In this case, a realization is minimal if and only if it is trim and proper
(i.e., every state occurs in some constraint codeword and no constraint codeword is supported by
a single state variable), and minimal realizations are unique up to state space isomorphisms.
Moreover, every non-minimal realization can be reduced to a minimal one by a process of trimming and merging.
For details on all of this, see~\cite{FGL12,Ka09}, as well as the excellent survey~\cite{Va98} for the special case of
conventional trellis realizations.

The focus of this paper is on linear tail-biting trellises, which form the simplest type of realizations
on graphs with cycles.
Tail-biting trellises gained a lot of attention after the appearance of~\cite{CFV99}, where it was shown that,
for a given code, the complexity of a tail-biting trellis realization may be considerably lower than that of the
best conventional (i.e., cycle-free) trellis.
This resulted in increased study of (minimal) tail-biting trellises
\cite{LiSh00,ShBe00,Koe02,KoVa03,NoSh06,GLW11,GLW11a,Con12} as well as for normal realization on general
graphs~\cite{Fo01,MaoKsch05,Ka09,AlMao11}.

A systematic theory of tail-biting trellis realizations was initiated by Koetter/Vardy in their landmark paper~\cite{KoVa03}.
Among other things, they highlighted the fundamental problem that all meaningful concepts
of complexity measures for tail-biting trellises lead to different (pre-)orderings, all of which are only partial.
As a consequence, a given code does not have a unique minimal trellis realization.
By extending factorization ideas from conventional trellises, Koetter/Vardy showed in~\cite{KoVa02} that every
reduced trellis (i.e., all states and branches appear on valid trajectories) is a product trellis, that is, it can
can be obtained as the product of elementary trellises.
In~\cite{KoVa03}, they showed that product trellises based on ``shortest generators'' of the code in a circular
interval sense, form a reasonably small class of trellises which is guaranteed to contain all minimal trellises.

In~\cite{GLW11,GLW11a} it is shown that these trellises, called KV-trellises, enjoy nice properties;
for instance, they are non-mergeable, and the dual of a KV-trellis (in the sense of normal realization dualization~\cite{Fo01})
is a KV-trellis.
As a consequence, the dual of a KV-trellis is a product trellis representing the dual code.
The last property is remarkable because in general the dual of a product trellis (even if non-mergeable)
is not a product trellis.
Yet, this invariance under dualization does not characterize KV-trellises, and in fact, no intrinsic characterizations for
being a KV-trellis or being minimal are known.
As a consequence, no constructive method for reducing a given realization to a minimal one has been found yet.

In this paper, we study tail-biting trellises, and make some progress on these aforementioned topics.
We derive constructive procedures to reduce the complexity of a trellis, and derive
irreducibility criteria.

We began our investigation of irreducibility in~\cite{FGL12}, where we considered normal realizations on general graphs.
In that paper, a local reduction was defined as the replacement of one state space by a smaller one, and an adjustment
of the adjacent constraint codes, without changing the rest of the realization or the code that is realized.
We showed that a realization that is not trim, proper, observable and controllable (TPOC) may be locally reduced by a trimming
or merging operation on an appropriate state space.
We note that in the prior literature, starting with~\cite{VaKsch96}, merging has been studied much more than trimming, no
doubt because it seems obvious to trim unused states;  however, because trimming and merging are dual operations~\cite{FGL12},
we weight them equally.

This paper continues our investigation for the special case of linear tail-biting trellises.
In this case, the reductions considered in~\cite{FGL12} may be regarded as
reductions of trellis fragments of length~$2$ (two consecutive constraint codes and the state space involved in
both of them), and it appears natural to extend such reductions to fragments of any length.
As we will see, reducibility is then closely related to properties of the dual trellis realization.
Indeed, if the given trellis or its dual lacks certain basic, easily detectable properties,
then both can be reduced simultaneously.
In this sense, the dual trellis may reveal defects that are not immediately apparent in the primal trellis.
It is therefore natural to treat a trellis and its dual on an equal footing, so that all reductions come
with an analogue for the dual trellis.

In this fashion we can show that the necessary properties for irreducibility on fragments of length~$2$, presented
in~\cite{FGL12}, extend to necessary conditions for irreducibility on longer fragments.
They amount to fragment observability and controllability, and are closely related to trimness of the trellis and its dual
in a fragment sense.
We also prove that these conditions are sufficient for irreducibility on fragments of length less than the minimum
span length of the code, which is a measure of the lengths of zero runs in the codewords.
We then discuss the remaining case of reducibility on longer fragments, and illustrate with an example
how this problem may be approached; however, this case remains largely open.

Finally, we relate our results to the approach taken by Koetter and Vardy in~\cite{KoVa02,KoVa03}.
As mentioned earlier, they investigated the class of reduced trellises and narrowed it further down to KV-trellises in their
search for minimal trellises.
In this paper, we do not assume reducedness because the dual of a reduced trellis is not
necessarily reduced.
In fact, our results will indeed provide an easy-to-check criterion for when the dual of an observable reduced
trellis is a reduced trellis.
We also present a summary of various trellis classes and their relationship.

\section{Codes, Trellises, and Reducibility}\label{S-Basics}
In this section we introduce the basic notions for trellises as needed in this paper.
Most of it is simply a specialization of the terminology used in~\cite{FGL12} for general normal realizations.
In addition, we also define local reductions of trellises; this will be more refined than the definition used in~\cite{FGL12}.

Throughout, a linear block code $\cC$ over a finite field $\F$ is a subspace of a \emph{symbol configuration space}
$\cA = \Pi_{i=0}^{m-1} \cA_i$, where each \emph{symbol alphabet} $\cA_i$ is a finite-dimensional vector space over $\F$.

A \emph{linear tail-biting trellis realization~$\cR$ of length~$m$} is a normal linear realization, in the sense of~\cite{Fo01}
or~\cite{FGL12}, on a graph that is a single cycle of length~$m$.
Thus, it consists of a set of symbol spaces~$\cA_i$, a set of state spaces~$\cS_i$, and a set of constraint codes
$\cC_i \subseteq\cS_i \times\cA_i\times\cS_{i+1}$, where all index sets are equal to~$\Z_m$ and index arithmetic is modulo~$m$.
Every constraint code thus involves precisely two state variables.
All variable alphabets are finite-dimensional vector spaces over~$\F$.
The elements of~$\cC_i$  (called \emph{constraint codewords}, or \emph{transitions}, or \emph{branches}) will be written as
$(s_i,a_i,s_{i+1})$.

As noted in~\cite{FGL12}, if any state space $\cS_i$ is trivial, then we may simply delete it.
The graph of~$\cR$ then becomes a finite path, and the realization becomes a conventional linear trellis realization of length~$m$.
Thus finite conventional trellis realizations may be regarded as special cases of tail-biting trellis realizations.

Henceforth, we will call a linear tail-biting trellis realization simply a \emph{trellis}.

The space~$\cS=\prod_{i=0}^{m-1} \cS_i$ is called the \emph{state configuration space} of~$\cR$.
The \emph{behavior}  is the set~$\Bf$ of all \emph{trajectories} (or \emph{configurations}) $(\ab, \sb) \in \cA \times\cS$
such that all constraints are satisfied, i.e., $(s_i,a_i,s_{i+1}) \in \cC_i$ for all $i\in\Z_m$.
The pairs  $(\ab, \sb)\in\Bf$ are called \emph{valid} trajectories (or configurations).
The code $\cC$ \emph{generated by the trellis} is the set of all symbol trajectories $\ab \in \cA$ that appear in some
$(\ab, \sb) \in \Bf$.
A code generated by a linear trellis is linear.

The \emph{dual trellis of a trellis}~$\cR$, denoted by~$\cR^{\circ}$, is defined as the trellis with the same index set in which
the symbol and state spaces~$\cA_i$ and~$\cS_i$ are replaced by their linear algebra duals $\hat{\cA}_i,\,\hat{\cS}_i$
(which are unique up to isomorphism), the constraint codes~$\cC_i$ are replaced by their orthogonal codes~$\cC_i^{\perp}$
under the standard inner product, and the sign of each dual state variable is inverted in one of the two constraints in
which it is involved.
For trellises it is convenient to apply the sign inversion to~$\hat{s}_{i+1}$;
thus a dual trajectory $(\hat{\ab},\hat{\sb})$ is valid if and only if $(\hat{s}_i,\hat{a}_i,-\hat{s}_{i+1})\in\cC_i^{\perp}$
for all $i\in\Z_m$.
The behavior~$\Bf^{\circ}$ of~$\cR^{\circ}$ is the space of all such valid dual trajectories.
The \emph{Normal Realization Duality Theorem}~\cite{Fo01} states that
if~$\cR$ realizes the code~$\cC$, then its dual~$\cR^{\circ}$ realizes the orthogonal code~$\cC^\perp$.
For other proofs see~\cite{MaoKsch05,AlMao11,Fo11,FGL12}.

Trellises~$\cR$ and~$\tilde{\cR}$ of length~$m$ with state spaces and constraint codes
$\cS_i,\,\cC_i,\,\tilde{\cS}_i,\,\tilde{\cC}_i$ are called \emph{isomorphic} if there exist state-space isomorphisms
$\varphi_i:\cS_i\to\tilde{\cS}_i$
such that $\tilde{\cC}_i=\{(\varphi_i(s_i),a_i,\varphi_{i+1}(s_{i+1}))\mid (s_i,a_i,s_{i+1})\in\cC_i\}$.\footnote{It is interesting
to note that for connected trellises this definition may be relaxed to requiring that the maps~$\varphi_i$ only be
bijections rather than isomorphisms.
Indeed, Conti proved recently~\cite[Thm.~3.28, Cor.~6.6]{Con12} that the isomorphism classes of
connected trellises coincide for these two notions.}
Evidently, isomorphic trellises realize the same code.
Following~\cite[Def.~3.1]{KoVa03}, we say a trellis~$\tilde{\cR}$ is \emph{smaller} than~$\cR$ if their state spaces
satisfy $\dim\tilde{\cS}_i\leq\dim\cS_i$ for all~$i$, and we call~$\tilde{\cR}$
\emph{strictly smaller} than~$\cR$ if we have at least one strict inequality.
A trellis is called \emph{minimal} if there is no strictly smaller trellis realizing the same code.

In~\cite[Thm.~3]{FGL12} it has been shown that realizations on cycle-free graphs are minimal if and only if they are trim and
proper (see the next section for the definitions), and that a nonminimal realization can be reduced in a constructive way.
It is well known that trimness and properness are not sufficient for minimality of realizations on graphs with cycles.
The  goal of this paper is to study the particular case of single-cycle graphs,
develop constructive methods of reducing the complexity of a given realization on such a graph, and present irreducibility
criteria.
Definition~\ref{D-LocRed} below will be the central concept of this paper.

A major part of our approach will be the analysis of trellis fragments.
We use the following notation.
For $j,\, k \in\Z_m$ and $j\neq k$ let
$[j,k):=\{j,j+1,\ldots,k-1\}\subseteq\Z_m$ denote a (possibly circular) subinterval of~$\Z_m$;
thus~$\Z_m$ is the disjoint union of the two complementary subintervals $\jk$ and $[k,j)$.
Correspondingly, a trellis~$\cR$ of length~$m$ may be divided into two cycle-free \emph{complementary fragments},
denoted by $\cR^\jk$ and~$\cR^{[k,j)}$, by cutting the edges associated with states~$\cS_j$ and~$\cS_k$.
The fragment $\cR^\jk$ includes all symbol spaces and constraint codes with indices in~$\jk$,
and $\cR^{[k,j)}$ includes all with indices in $[k,j)$.
The fragment~$\cR^\jk$ also contains the state spaces~$\cS_i$ with indices in~$(j,k)$ as \emph{internal state
spaces}, and similarly~$\cR^{[k,j)}$ contains the internal state spaces~$\cS_i$ with $i\in(k,j)$.
The two boundary state spaces~$\cS_j$ and~$\cS_k$ may be regarded as \emph{external state spaces} in both fragments.
Note that the internal state spaces have degree~$2$ and correspond to normal edges, whereas the external state
spaces have degree~$1$ and correspond to half-edges, like symbol spaces.

We extend this notation by defining $[j,j+m)$ to be the entire time axis~$\Z_m$ ``starting at~$j$'', and the
complementary interval $[j+m,j)$ to be the empty interval ``starting at $j$''.
Then $\cR^{[j+m,j)}$ denotes the cycle-free fragment consisting of all of~$\cR$ except for the edge~$\cS_j$, while
$\cR^{[j+m,j)}$ denotes the complementary fragment consisting only of the edge~$\cS_j$.
Both fragments have two external state variables with common alphabet~$\cS_j$, as we will discuss further in
Section~\ref{S-Frag}.

\begin{defi}\label{D-LocRed}
Let $[j,k)$ be a non-empty interval.
A $[j,k)$-\emph{reduction}~$\tilde{\cR}$ of a trellis~$\cR$ is a
replacement of the state spaces $\cS_{j+1},\cS_{j+2},\ldots,\cS_{k-1}$ by state spaces
$\tilde{\cS}_{j+1},\tilde{\cS}_{j+2},\ldots,\tilde{\cS}_{k-1}$ of
at most the same size  and the adjacent constraint
codes $\cC_{j},\ldots,\cC_{k-1}$ by suitable constraint codes  $\tilde{\cC}_{j},\ldots,\tilde{\cC}_{k-1}$
of any size, without changing the rest of the realization or the code~$\cC$ that it realizes.
We also call this a \emph{$t$-reduction}, where $t=(k-j)\;\text{mod}\;m\in\{1,\ldots,m\}$.
The reduction will be called \emph{strict} if at least one of the state space sizes decreases
strictly, and \emph{conservative} if none of the constraint code sizes increases.
A trellis~$\cR$ is called \emph{$t$-irreducible} if each $t$-reduction is isomorphic to~$\cR$.
\end{defi}

Note that a $\jk$-reduction affects the constraint codes and the internal state spaces in the fragment $\cR^\jk$, but not
its external state spaces~$\cS_j$ and~$\cS_k$.
Thus a $t$-reduction affects~$t$ constraint codes and $t-1$ state spaces.
An $m$-reduction affects all constraint codes, and all but one state space.

If a trellis~$\cR$ is (strictly) $t$-reducible, then it is (strictly) $t'$-reducible for all $t'\geq t$.
Furthermore, it is immediate from the dualization of trellis realizations, that a trellis~$\cR$ is (strictly) $t$-reducible
if and only if the dual trellis~$\cR^{\circ}$ is (strictly) $t$-reducible.

We note in passing that a minimal trellis may have a conservative $t$-reduction for some $t>1$.
This is due to the fact that a code may have non-isomorphic minimal trellises with the same state
space and constraint code dimensions~\cite[Ex.~III.16]{GLW11}.
However, we will see later that minimal trellises are always $1$-irreducible, which is to say that no single constraint
code can be replaced by any other constraint code without changing the code realized by the trellis.

The primary goal of our reduction procedures will be the reduction of state spaces, but we will
also address the constraint code dimensions.
While state space dimensions do not change under dualization, this is not the case for the constraint
code dimensions.
This causes the constraint code dimensions to be less predictable in general.

It will become clear later that non-strict $t$-reductions form indeed a useful concept
--- even though they may not immediately lead to a net decrease of the state space sizes and
may even increase the constraint code sizes.
Non-strict reductions will be used to produce strictly reducible trellises so that ultimately a net reduction
in state space sizes is achieved.
A particular instance of a non-strict reduction is a $1$-reduction that consists of the replacement of a single
constraint code (by one that may be bigger or smaller), and thus does not alter the state complexity profile of the realization.

The most important instances of strict and conservative $2$-reductions are the mutually dual processes of trimming and merging.
We briefly recall these concepts.
Let~$\cR$ be a trellis with state spaces~$\cS_i$ and constraint codes~$\cC_i$.
Fix~$j\in\Z_m$ and let~$\cY_j$ be a subspace of~$\cS_j$.
We say that~$\cR$ is \emph{trimmed to~$\cY_j$} if we restrict the state space~$\cS_j$
to~$\cY_j$ and restrict the two adjacent constraint codes~$\cC_{j-1}$ and~$\cC_j$ accordingly.
We say that~$\cR$ is \emph{merged to the quotient space~$\cS_j/\cY_j$} if we replace the state
space~$\cS_j$ by the quotient space $\cS_j/\cY_j$ and replace the states at time~$j$ in
the two adjacent constraint codes~$\cC_{j-1}$ and~$\cC_j$ by their cosets modulo~$\cY_j$.
Projection/cross-section duality (given in~\eqref{e-PCS} below) implies that
the trellis~$\cR'$ is obtained from~$\cR$ by trimming~$\cS_j$ to
the subspace~$\cY_j$ if and only if~$(\cR')^{\circ}$ is obtained from~$\cR^{\circ}$ by merging~$\hat{\cS}_j$ to
$\hat{\cS}_j/\cY_j^{\perp}$.
For a proof, further details and a graphical illustration of the duality of trimming and merging see \cite[Sec.~III.B]{FGL12}.
In general, the trimmed/merged realization generates a different code than the original realization.
We will, of course, be interested in the case where the code does not change after trimming/merging.
In this case, trimming and merging obviously form simultaneous strict and conservative $2$-reductions of the trellis and
its dual.
For the notions of non-mergeability and non-trimmability, see the next section.

We close this section by briefly recalling the projection/cross-section duality theorem.
This identity is one of the most fundamental and useful duality relationships for linear
codes and will be used frequently throughout this paper.
Let~$\cC$ be a subspace contained in a vector space $\cT=\cT_1\times\cT_2$.
The \emph{projection} and \emph{cross-section} of~$\cC$ on $\cT_1$ are defined as
$\cC_{|\cT_1}:=\{t_1\mid \exists\; t_2\in\cT_2:\, (t_1,t_2)\in\cC\}$ and
$\cC_{:\cT_1}:=\{t_1\mid (t_1,0)\in\cC\}$, respectively.
Suppose we have inner products between~$\cT_i$ and~$\hat{\cT}_i$ for each~$i$ which we extend in the natural way to
$\cT_1\times\cT_2$ and its dual $\hat{\cT}_1\times\hat{\cT}_2$.
Projection/cross-section duality~\cite{Fo01} (see also~\cite[Sec.~II.H]{FGL12}) states that
\begin{equation}\label{e-PCS}
   \big(\cC_{:\cT_i}\big)^{\perp}=(\cC^\perp)_{|\hat{\cT}_i}
\end{equation}
for any subspace $\cC\subseteq\cT_1\times\cT_2$ and its orthogonal subspace $\cC^\perp\subseteq\hat{\cT}_1\times\hat{\cT}_2$.

\section{Local and Global Trellis Properties}\label{S-LocGlob}
We recall some basic properties of trellises as they have been discussed in detail in~\cite{FGL12}
for general normal graphs, and discuss some subtleties related to these notions.

We begin with local trellis properties.
A trellis is called \emph{trim at state space}~$\cS_i$ if $(\cC_{i-1})_{|\cS_i}=\cS_i=(\cC_i)_{|\cS_i}$,
and \emph{proper at state space}~$\cS_i$ if the cross-sections $(\cC_{i-1})_{:\cS_i}$ and $(\cC_i)_{:\cS_i}$ are both trivial.
The former means that each state in~$\cS_i$ has an incoming branch and an outgoing branch while the
latter means that there are no nontrivial branches of the form $(0,0,s_i)$ in~$\cC_{i-1}$ and none of the form
$(s_i,0,0)$ in $\cC_i$.
As in~\cite{FGL12} we call a trellis \emph{trim} (resp.\ \emph{proper}) if it is trim (resp.\ proper) at each state space.
(In the prior literature, e.g.~\cite{KoVa03} and~\cite{Va98}, ``proper'' is often called ``biproper''.)

Using projection/cross-section duality~\eqref{e-PCS} one obtains immediately the following.
\begin{theo}[\cite{GLW11a,FGL12}]\label{T-trimproper}
Let $j\in\{i-1,i\}$.
The projection of~$\cC_j$ on~$\cS_i$ is surjective if and only if the cross-section of $\cC_j^{\perp}$ on~$\hat{\cS}_i$
is trivial.
Thus,~$\cR$ is trim at~$\cS_i$ if and only if~$\cR^{\circ}$ is proper at~$\hat{\cS}_i$.
\end{theo}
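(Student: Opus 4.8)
The plan is to derive the theorem directly from projection/cross-section duality \eqref{e-PCS}, since the statement is essentially a translation of that identity into the language of trim and proper. The key observation is that the constraint code $\cC_j$ lives in a product space one of whose factors is $\cS_i$ (for $j=i-1$ this is $\cS_{i-1}\times\cA_{i-1}\times\cS_i$ with $\cS_i$ the last factor, and for $j=i$ it is $\cS_i\times\cA_i\times\cS_{i+1}$ with $\cS_i$ the first factor). In either case I would write the ambient space as $\cS_i\times\cT_2$ (grouping the remaining symbol and state spaces into a single complementary factor $\cT_2$), so that the projection of $\cC_j$ on $\cS_i$ is exactly $(\cC_j)_{|\cS_i}$ and the cross-section of the orthogonal code on $\hat{\cS}_i$ is exactly $(\cC_j^{\perp})_{:\hat{\cS}_i}$.

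First I would prove the projection/cross-section claim. Applying \eqref{e-PCS} with $\cT_1=\cS_i$ gives $\big((\cC_j)_{:\cS_i}\big)^{\perp}=(\cC_j^{\perp})_{|\hat{\cS}_i}$, and applying it to the orthogonal code $\cC_j^{\perp}$ in place of $\cC$ gives $\big((\cC_j^{\perp})_{:\hat{\cS}_i}\big)^{\perp}=(\cC_j)_{|\cS_i}$. The latter identity is the one I want: the projection $(\cC_j)_{|\cS_i}$ equals the orthogonal complement of the cross-section $(\cC_j^{\perp})_{:\hat{\cS}_i}$. Hence the projection is surjective, i.e.\ equals all of $\cS_i$, if and only if its orthogonal complement is trivial, i.e.\ if and only if $(\cC_j^{\perp})_{:\hat{\cS}_i}=\{0\}$. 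This is precisely the asserted equivalence between surjectivity of the projection of $\cC_j$ and triviality of the cross-section of $\cC_j^{\perp}$.

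Second I would assemble the trim/proper statement from the two cases $j=i-1$ and $j=i$. By definition, $\cR$ is trim at $\cS_i$ exactly when both projections $(\cC_{i-1})_{|\cS_i}$ and $(\cC_i)_{|\cS_i}$ are surjective onto $\cS_i$. By the equivalence just established, this holds if and only if both cross-sections $(\cC_{i-1}^{\perp})_{:\hat{\cS}_i}$ and $(\cC_i^{\perp})_{:\hat{\cS}_i}$ are trivial. Now I must check that these are the cross-sections that appear in the definition of properness for the dual trellis $\cR^{\circ}$. The one subtlety is the sign inversion in the dual constraint codes: $\cR^{\circ}$ uses $(\hat{s}_i,\hat{a}_i,-\hat{s}_{i+1})\in\cC_i^{\perp}$. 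Since the cross-section fixes the relevant state coordinate to $0$, and negation fixes $0$, the sign inversion does not change which elements have a trivial state entry; thus the cross-sections of the sign-adjusted dual constraint codes coincide with those of $\cC_{i-1}^{\perp}$ and $\cC_i^{\perp}$ on $\hat{\cS}_i$. Therefore triviality of both cross-sections is exactly properness of $\cR^{\circ}$ at $\hat{\cS}_i$, completing the equivalence.

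The only genuine obstacle is bookkeeping: making sure that in each of the two cases $j\in\{i-1,i\}$ the space $\cS_i$ (resp.\ $\hat{\cS}_i$) is correctly identified as the singled-out factor $\cT_1$, and that the sign inversion attached to dual state variables does not interfere with the cross-section condition. Both are routine once the factorization $\cS_i\times\cT_2$ is set up, so the proof is short and the main content is the clean invocation of \eqref{e-PCS} in its two symmetric forms.
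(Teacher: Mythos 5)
Your proof is correct and follows exactly the route the paper intends: the paper states this theorem with no written proof beyond the remark that it follows immediately from projection/cross-section duality~\eqref{e-PCS}, and your argument is precisely that invocation, spelled out with the factorization $\cS_i\times\cT_2$ and the observation that the sign inversion on dual state variables is immaterial for the cross-section condition. Nothing further is needed.
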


Let us now turn to global trellis properties.
A trellis is said to be \emph{state-trim} if each state appears on a valid trajectory, i.e.,
$\Bf_{|\cS_i}=\cS_i$ for all~$i$.
It is clear that a state-trim trellis is trim.
We call a trellis \emph{branch-trim} if each branch appears on a valid trajectory, i.e.,
$\Bf_{|\cS_i\times\cA_i\times\cS_{i+1}}=\cC_i$ for all~$i$.
It is well known that a state-trim trellis need not be branch-trim; see, e.g., Fig.~\ref{F-BCJR}(b).
In~\cite{KoVa03}, a trellis that is both state-trim and branch-trim is called \emph{reduced}, and
all linear trellises are assumed to be reduced.
We do not adopt this stance here since the dual of a reduced trellis is not necessarily reduced (see,
for example, Fig.~\ref{F-MergTrellis}).

If a trellis or its dual is not branch-trim, then both are $1$-reducible, since in this case there exists a $1$-reduction
that is not isomorphic to the given trellis.
The converse will be discussed in Theorem~\ref{T-1Red}.

We define a trellis to be \emph{nonmergeable} (in the linear sense) if no state space can be merged to a proper quotient space without
changing the code generated by the trellis.
This is the only kind of merging that preserves linearity of the trellis.
Our notion of non-mergeability differs from the definition in~\cite{KoVa03}, which simply requires that no two states
can be merged without changing the code.
For example, the non-state-trim trellis of Fig.~\ref{F-MergTrellis}(b) below is nonmergeable in our sense, but is
mergeable in the sense of~\cite{KoVa03}, since states $00$ and $11$ in $\cS_2$ can be merged to produce a smaller
nonlinear trellis that realizes the same code.
However, it is easy to show that for state-trim trellises this situation cannot arise (since whenever
$s, s' \in \cS_i$ can be merged, then $\cS_i$ can be merged to the quotient space $\cS_i/\inner{s - s'}$ without changing the code);
see also~\cite[Observ.~2.8]{Con12}, where ``state-trim'' is called ``almost reduced''.

Dually, a trellis is called \emph{non-trimmable} if it does not allow a proper trimming (in the sense of Section~\ref{S-Basics})
resulting in a trellis for the same code.
Duality of trimming and merging, as described in Section~\ref{S-Basics}, implies that a trellis~$\cR$ is non-mergeable if
and only if its dual~$\cR^{\circ}$ is non-trimmable.
It is worth noting that a state-trim trellis may be trimmable.
This will be addressed in Theorem~\ref{P-STrimNMerg}.

A trellis~$\cR$ that realizes a code~$\cC$ is called \emph{observable} (or \emph{one-to-one}) if for each
$\ab \in \cC$ there is precisely one valid trajectory $(\ab, \sb) \in \Bf$.
The trellis is called \emph{controllable} if its dual~$\cR^{\circ}$ is observable.
These definitions are discussed in~\cite[Sec.~IV-C]{FGL12}, where it is shown that controllability in this sense is
equivalent to having independent constraints.

We will use the following controllability test from \cite[Thm.~6]{FGL12}.

\begin{theo}\label{T-ContrTest}
For every trellis~$\cR$ we have $\sum_i \dim \cC_i\leq \dim \Bf+\sum_i\dim \cS_i$, with equality if and only if~$\cR$
is controllable.
\end{theo}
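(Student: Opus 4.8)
The plan is to realize the behavior $\Bf$ as the kernel of a single ``gluing'' map, read off the inequality from rank--nullity, and then recover controllability from the cokernel by dualizing the gluing condition. First I would assemble the constraint codes into the product $\cP:=\prod_{i\in\Z_m}\cC_i\subseteq\prod_{i\in\Z_m}(\cS_i\times\cA_i\times\cS_{i+1})$, so that $\dim\cP=\sum_i\dim\cC_i$, and define the gluing map $\partial\colon\cP\to\cS=\prod_i\cS_i$ by $\partial(c)_i=s_i^{(i)}-s_i^{(i-1)}$, where $s_i^{(i)}$ is the left state of the branch $c_i\in\cC_i$ and $s_i^{(i-1)}$ is the right state of $c_{i-1}\in\cC_{i-1}$ (all indices mod $m$). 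A tuple $c\in\cP$ lies in $\ker\partial$ exactly when consecutive branches agree on their shared state variable, which is precisely the condition for $c$ to assemble into a valid trajectory; hence $\ker\partial$ is naturally isomorphic to $\Bf$, giving $\dim\ker\partial=\dim\Bf$.

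Rank--nullity then yields $\sum_i\dim\cC_i=\dim\Bf+\dim(\operatorname{im}\partial)$, and since $\operatorname{im}\partial\subseteq\cS$ we obtain $\sum_i\dim\cC_i\le\dim\Bf+\sum_i\dim\cS_i$, with equality if and only if $\partial$ is surjective. This disposes of the inequality and reduces the theorem to identifying surjectivity of $\partial$ with controllability, which is the step I expect to be the main obstacle.

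To handle it I would pass to the adjoint: $\partial$ is surjective iff $(\operatorname{im}\partial)^{\perp}=\{\zerob\}$ in $\hat{\cS}=\prod_i\hat{\cS}_i$. Computing the pairing $\inner{\hat d,\partial c}=\sum_i\inner{\hat d_i,s_i^{(i)}}-\sum_i\inner{\hat d_i,s_i^{(i-1)}}$ and reindexing the second sum by $i\mapsto i+1$ collects the two state terms of each single branch $c_i$; because $\cP$ is a direct product, the vanishing condition decouples across $i$ and becomes $\inner{\hat d_i,s_i}=\inner{\hat d_{i+1},s_{i+1}}$ for all $(s_i,a_i,s_{i+1})\in\cC_i$. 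This is exactly the statement $(\hat d_i,0,-\hat d_{i+1})\in\cC_i^{\perp}$ for all $i$, the sign convention on dual states being precisely what makes the reindexing close up around the cycle with no boundary term. Thus $(\operatorname{im}\partial)^{\perp}$ is nothing but the space of dual state configurations $\hat\sb$ carrying a dual trajectory $(\zerob,\hat\sb)\in\Bf^{\circ}$ with zero symbol part.

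Finally I would recognize this last space as the obstruction to observability of $\cR^{\circ}$: a nonzero element of $(\operatorname{im}\partial)^{\perp}$ is a nonzero valid dual trajectory over the zero dual codeword, so by linearity $(\operatorname{im}\partial)^{\perp}=\{\zerob\}$ holds iff every dual codeword has a unique valid dual trajectory, i.e.\ iff $\cR^{\circ}$ is observable. Since $\cR$ is by definition controllable exactly when $\cR^{\circ}$ is observable, combining this with the equality criterion from rank--nullity gives the theorem. The only delicate bookkeeping is the sign convention and the cyclic reindexing in the adjoint computation; everything else is routine linear algebra.
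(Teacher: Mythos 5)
Your argument is correct and complete: the gluing map $\partial\colon\prod_i\cC_i\to\cS$ has kernel isomorphic to $\Bf$, rank--nullity gives the inequality, and your adjoint computation correctly identifies $(\operatorname{im}\partial)^{\perp}$ with the unobservable dual state configurations $\{\hat{\sb}\mid(\zerob,\hat{\sb})\in\Bf^{\circ}\}$, whose triviality is (by linearity) exactly observability of $\cR^{\circ}$, i.e.\ controllability of $\cR$ as defined here. Note that the paper does not prove this theorem itself but imports it from \cite[Thm.~6]{FGL12}; your construction is essentially the argument given there (the map $\partial$ is the state-matching homomorphism whose surjectivity expresses independence of the constraints), specialized to the single-cycle case, and your dualization step is precisely the observability/controllability duality of that reference, with the sign convention on $\hat{s}_{i+1}$ handled correctly.
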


In other words, Theorem~\ref{T-ContrTest} says that the total constraint dimension $\sum_i \dim \cC_i$ is maximized
if and only if the trellis is controllable.

As noted in \cite[Footnote~4]{FGL12}, this theorem is related to Theorem~4.6 of~\cite{KoVa03} as follows.
Koetter and Vardy show that if a reduced (i.e., state-trim and branch-trim) product trellis is observable
(so $\dim\Bf=\dim\cC$), and no generator has a degenerate span equal to the entire time axis~$\Z_m$, then (in our notation)
$\sum_i \dim \cC_i= \dim \Bf+\sum_i\dim \cS_i$, so the realization is controllable (in our terminology).

A key property of an uncontrollable trellis~$\cR$ is that, under weak conditions, its valid trajectories partition
into disconnected cosets (for examples, see Figs.~\ref{F-MergDual1}(b) and~\ref{F-BCJR2}(b)).
In Appendix~\ref{S-App1}, we show that this property holds provided that~$\cR$ is state-trim, which
is a weaker condition than the ``reduced'' condition of~\cite{KoVa03}.
Moreover, we give examples showing that state-trimness is indeed a necessary condition.

\section{Basic Results about Irreducibility and Motivating Examples}\label{S-Exa}
We start by presenting a list of necessary conditions for $2$-irreducibility that has been derived
in~\cite{FGL12}.
The rest of the section is devoted to examples illustrating that these properties are not sufficient and
that the dual of a trellis may be helpful in revealing the reducibility of both the trellis and its dual.

Recall that a (strict) $2$-reduction consists of reducing one state space and altering the two adjacent
constraint codes.
Obviously trimming and merging are $2$-reductions.
\begin{theo}[\mbox{\cite[Thm.~2, Thm.~9]{FGL12}}]\label{T-TPOC}
A $2$-irreducible realization~$\cR$ must be trim, proper, observable, and controllable, else
there exists a strict and conservative $2$-reduction of~$\cR$, whose dual is a strict and conservative
$2$-reduction of~$\cR^{\circ}$.
\end{theo}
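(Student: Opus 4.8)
The statement is the specialization to trellises of \cite[Thm.~2 and Thm.~9]{FGL12}, so the plan is to check that the four defining properties and the reductions of that paper take the claimed local form here, and to exploit duality to halve the work. I would argue by contraposition. Since trimming and merging are mutually dual strict and conservative $2$-reductions, and since $\cR$ is trim at~$\cS_i$ iff $\cR^{\circ}$ is proper at~$\hat{\cS}_i$ (Theorem~\ref{T-trimproper}) while $\cR$ is observable iff $\cR^{\circ}$ is controllable (immediate from the definition of controllability), it suffices to exhibit a strict conservative $2$-reduction of~$\cR$ in the two cases ``$\cR$ is not trim'' and ``$\cR$ is not controllable.'' The failures of properness and of observability are then obtained by applying these two constructions to~$\cR^{\circ}$ and dualizing, using that~$\cR$ is (strictly) $t$-reducible iff~$\cR^{\circ}$ is.

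For the trim case: if $\cR$ is not trim at~$\cS_i$, say $(\cC_{i-1})_{|\cS_i}=\cY_i\subsetneq\cS_i$, I would trim~$\cS_i$ to~$\cY_i$, i.e.\ restrict~$\cC_{i-1}$ and~$\cC_i$ to the branches whose $\cS_i$-component lies in~$\cY_i$. Every valid trajectory already has its $\cS_i$-component in $(\cC_{i-1})_{|\cS_i}=\cY_i$, so no trajectory, and hence no codeword, is lost, while the branches deleted from~$\cC_i$ never occurred on a valid trajectory; thus the realized code is unchanged. This is strict (the state space shrinks) and conservative (the constraint codes are restricted to subspaces), and by projection/cross-section duality~\eqref{e-PCS} its dual is the merging of~$\hat{\cS}_i$ to $\hat{\cS}_i/\cY_i^{\perp}$, a strict conservative $2$-reduction of~$\cR^{\circ}$.

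The controllability case is the heart of the matter, and I expect it to be the main obstacle. If~$\cR$ is not controllable, Theorem~\ref{T-ContrTest} gives the strict inequality $\sum_i\dim\cC_i<\dim\Bf+\sum_i\dim\cS_i$; equivalently $\cR^{\circ}$ is not observable, so there is a nonzero $(\zerob,\hat{\sb})\in\Bf^{\circ}$, i.e.\ dual states~$\hat{s}_i$, not all zero, with $(\hat{s}_i,0,-\hat{s}_{i+1})\in\cC_i^{\perp}$ for all~$i$. I would pick~$i$ with $\hat{s}_i\neq0$ and replace~$\cS_i$ by a space of dimension $\dim\cS_i-1$, redesigning~$\cC_{i-1}$ and~$\cC_i$ so that the behavior of the short fragment $\cC_{i-1},\cS_i,\cC_i$ changes while the glued global code does not. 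The difficulty is precisely that this must be a \emph{genuine} $2$-reduction and not a trimming or merging: restricting~$\cS_i$ to the hyperplane $\ker\hat{s}_i$ dictated by the witness can destroy codewords, and the dual naive merge can create them, so the redundancy exposed by the controllability test has to be removed by re-routing the two adjacent constraints rather than by a sub- or quotient-space operation. This is exactly the local reduction of \cite[Thm.~9]{FGL12}, which on a single-cycle graph affects one internal state space and its two neighboring constraints and is therefore a $2$-reduction; I would specialize that construction and verify directly that it is strict and conservative.

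Finally I would assemble the four cases and check the duality bookkeeping: state-space dimensions are invariant under dualization, so strictness is preserved, and each reduction above was produced together with (or as the dual of) a strict conservative $2$-reduction of~$\cR^{\circ}$, which supplies the ``whose dual is\dots'' clause. The non-trim and non-controllable constructions cover, through the dual, the non-proper and non-observable cases as well, completing the contrapositive.
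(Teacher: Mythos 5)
Your skeleton is fine --- contraposition, the four properties, duality halving the work --- and your non-trim case is correct: trimming $\cS_i$ to $(\cC_{i-1})_{|\cS_i}$ loses no valid trajectory, is strict and conservative, and dualizes via\eqnref{e-PCS} to a strict conservative merging of $\cR^{\circ}$; properness then follows through Theorem~\ref{T-trimproper}. (Note that the paper does not reprove this theorem at all; it imports it from [FGL12], so the only material to compare against is what it records about that proof in Remark~\ref{R-TrimUnObs}.)

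The genuine gap is the observability/controllability pair, which you correctly identify as the heart of the matter but then leave to an unspecified ``local reduction of [FGL12, Thm.~9]'' while asserting that it must be a re-routing of the two adjacent constraints ``rather than a sub- or quotient-space operation.'' That assertion is wrong, and it is precisely where your plan stalls. The actual construction, recorded in Remark~\ref{R-TrimUnObs}, \emph{is} a trimming: if $\cR$ is unobservable, take a nontrivial $(\zerob,\sb)\in\Bf$, pick $i$ with $s_i\neq 0$, and trim $\cS_i$ to any complement $\cT_i$ with $\cT_i\oplus\inner{s_i}=\cS_i$. The code is preserved because every valid trajectory through a state $t_i+\alpha s_i$ (with $t_i\in\cT_i$) can be translated by $-\alpha(\zerob,\sb)$ to a valid trajectory with the same symbol sequence passing through $\cT_i$; this is strict and conservative, and its dual is a strict and conservative merging of $\cR^{\circ}$ whose constraint-code dimensions do not change. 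The uncontrollable case is then obtained purely by duality: $\cR$ uncontrollable means $\cR^{\circ}$ unobservable, so trim $\cR^{\circ}$ as above and let the reduction of $\cR$ be the dual merging; code preservation for $\cR$ is automatic from normal realization duality, so your worry that ``the dual naive merge can create codewords'' does not arise. In short, the natural pairing is \{not trim, not observable\} handled directly by trimmings of $\cR$ and \{not proper, not controllable\} obtained as their duals (mergings); by choosing to attack non-controllability directly you manufactured an obstacle that the duality you invoke elsewhere already removes. You are right only on the narrow point that restricting $\cS_i$ to the hyperplane $\ker\hat{s}_i$ would not work --- but the correct primal operation is a quotient, not a subspace, and no constraint re-routing beyond trimming/merging is needed. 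As it stands, half of the four cases rest on a construction you have neither supplied nor correctly described.
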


In view of Theorem~\ref{T-TPOC}, we will abbreviate ``trim, proper, observable and controllable'' by~TPOC.

The following fact has been mentioned already in~\cite[Sec.~4]{Koe02}.
\begin{rem}\label{R-StateTrimNec}
Each $2$-irreducible realization must be state-trim and nonmergeable, else there exists a $2$-reduction in
form of a state-trimming or state-merging.
\end{rem}

Since $2$-reductions of unobservable trellises will be crucial later, we present the
process for further reference in the next remark.

\begin{rem}\label{R-TrimUnObs}
In the proof of \cite[Thm.~9]{FGL12} it has been shown that whenever a trellis~$\cR$ contains a nontrivial
unobservable valid trajectory $(\zerob, \sb)$, one may pick any of its nonzero states, say $s_i\in\cS_i$,
and trim the state space~$\cS_i$ to any subspace~$\cT_i$ satisfying $\cT_i\oplus\inner{s_i}=\cS_i$.
This results in a trellis realization of the same code (which may be not trim at time $i-1$ or $i+1$, in which
case it can be further reduced).
Being a trimming, this process forms a strict and conservative $2$-reduction, and the dual merging process is
a strict and conservative $2$-reduction of~$\cR^{\circ}$.
More precisely, in the trimming of~$\cR$ the branches $(s_{i-1},0,s_i)$ and
$(s_i,0,s_{i+1})$ in the constraint codes~$\cC_{i-1}$ and $\cC_i$ are deleted, and thus the dimensions of these
constraint codes decrease by~$1$.
On the other hand, the constraint code dimensions of the dual merging stay the same,
since $\dim\cC_i^{\perp}=\dim\cS_i+\dim\cS_{i+1}+\dim\cA_i-\dim\cC_i$.
\end{rem}

The following theorem is essentially due to Koetter~\cite[Thm.~9]{Koe02}.
For the second statement recall that non-mergeability is defined in the linear sense (see Section~\ref{S-LocGlob}) and
is thus dual to non-trimmability.

\begin{theo}\label{P-STrimNMerg}
A trellis is non-trimmable if and only if it is observable and state-trim.
Dually, a trellis~$\cR$ is nonmergeable if and only if the dual trellis~$\cR^{\circ}$ is
observable and state-trim.
\end{theo}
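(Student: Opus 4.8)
The plan is to prove the first biconditional directly and then obtain the second for free by dualization. Indeed, by the duality of trimming and merging recalled in Section~\ref{S-Basics}, a trellis $\cR$ is nonmergeable if and only if $\cR^{\circ}$ is non-trimmable; applying the first statement to the trellis $\cR^{\circ}$ then rewrites this as ``$\cR^{\circ}$ is observable and state-trim,'' which is precisely the second assertion. So I would stress that the entire substance lies in the equivalence ``non-trimmable $\iff$ observable and state-trim,'' which I would split into two implications.

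For the direction observable and state-trim $\Rightarrow$ non-trimmable, I would argue by contradiction. Suppose some $\cS_j$ admits a proper trimming to $\cY_j\subsetneq\cS_j$ that preserves the code. A short inspection of how the two adjacent constraint codes $\cC_{j-1}$ and $\cC_j$ are restricted shows that the behavior of the trimmed trellis is exactly $\{(\ab,\sb)\in\Bf : s_j\in\cY_j\}$, and hence its code is the projection of this set onto $\cA$. By state-trimness, $\Bf_{|\cS_j}=\cS_j$, so I may pick a valid trajectory $(\ab,\sb)\in\Bf$ with $s_j\notin\cY_j$. Observability then forces $(\ab,\sb)$ to be the \emph{only} trajectory carrying $\ab$, so once it is removed by the trimming, $\ab$ is no longer realized, contradicting the assumption that the code is unchanged.

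For the converse I would prove the contrapositive: if $\cR$ fails to be observable or to be state-trim, then it is trimmable. If it is not state-trim, set $\cY_j:=\Bf_{|\cS_j}$ for an index $j$ where this is a proper subspace; trimming $\cS_j$ to $\cY_j$ leaves $\Bf$ (hence the code) unchanged, since every valid trajectory already has its $j$-th state in $\Bf_{|\cS_j}$. If it is not observable, then by linearity there is a nonzero unobservable trajectory $(\zerob,\sb)$ with $\sb\neq\zerob$; choosing any nonzero state $s_i$ of $\sb$ and invoking Remark~\ref{R-TrimUnObs} yields a proper trimming of $\cS_i$ (to any complement of $\inner{s_i}$) realizing the same code. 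In either case $\cR$ is trimmable.

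I do not expect a serious obstacle here. The only routine bookkeeping is verifying that trimming $\cS_j$ to $\cY_j$ produces exactly the behavior $\{(\ab,\sb)\in\Bf : s_j\in\cY_j\}$ while tracking the two affected constraint codes; the one genuinely nontrivial input, the observability case, is supplied ready-made by Remark~\ref{R-TrimUnObs}. The point requiring the most care is instead the logical plumbing for the dual statement: I must state the first equivalence for an \emph{arbitrary} trellis so that it may legitimately be applied to $\cR^{\circ}$ and combined with trimming/merging duality to deliver the nonmergeability characterization.
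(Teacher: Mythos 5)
Your proof is correct and follows essentially the same route as the paper: unobservability yields trimmability via Remark~\ref{R-TrimUnObs}, the observable case reduces to state-trimness, and the nonmergeability statement follows by trimming/merging duality. The only difference is that the paper outsources the step ``an observable trellis is trimmable if and only if it is not state-trim'' to a citation of Koetter, whereas you prove it explicitly (and correctly) by identifying the trimmed behavior as $\{(\ab,\sb)\in\Bf : s_j\in\cY_j\}$.
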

\begin{proof}
As we have just seen, an unobservable trellis is trimmable.
As noted by Koetter~\cite[Sec.~4]{Koe02}, an  observable trellis is trimmable if
and only if it is not state-trim.
The last part follows from the duality of state-merging and state-trimming discussed in Section~\ref{S-Basics}.
\end{proof}

In order to derive further necessary conditions for strict irreducibility, we present two examples of mutually dual
trellises.
Both illustrate that the dual may reveal some shortcomings of the trellis and its dual
that are not directly discernible from the primal trellis
(an observation that has been made already by Koetter in~\cite{Koe02}), and which
cause the trellis and its dual to be strictly reducible.
Motivated by these phenomena, we will study strict reducibility simultaneously for a trellis and its dual.

We assume that the reader is familiar with product trellises
\cite{KoVa02},~\cite[Sec.~IV.C]{KoVa03},~\cite[Sec.~III]{KschSo95}.
Every ``reduced'' (i.e., state-trim and branch-trim) trellis is a product trellis in the sense that
its behavior~$\Bf$ has a basis consisting of $\dim\Bf$ one-dimensional ``atomic'' sub-behaviors,
each characterized by a codeword and its (possibly circular) span.

\begin{exa}\label{E-MergDual}
We present a trellis that is TPOC, and yet strictly $2$-reducible.
This shows that Theorem~\ref{T-TPOC} provides necessary but not sufficient conditions for $2$-irreducibility.
We will see that the dual trellis, which is also TPOC, is not state-trim, and thus the primal trellis is
mergeable due to Theorem~\ref{P-STrimNMerg}.
As a consequence, both trellises are strictly $2$-reducible.
It will become clear that the non-state-trimness of the dual trellis is easy to detect, whereas the mergeability of the
primal trellis is much less obvious.

Fig.~\ref{F-MergTrellis}(a) shows a trellis realization of the
code~$\cC=\{000,110,\,101,\,011\}\subseteq\F_2^3$.\footnote{
{\bf Trellis Illustration Conventions.}
Throughout the paper, we will use the following conventions for drawing trellises:
dashed (resp.\ solid) lines denote branches with symbol variable~$0$ (resp.~$1$).
Most of the time, especially for dual pairs of trellises, a choice of state labels will be shown as well.
It should be kept in mind that the state labeling does not play any particular role and may be changed at each
index using a state space isomorphism.
The states in~$\cS_0$ will always appear in the same ordering at the beginning and end of the trellis.
}
The symbol spaces are $\cA_i=\F_2$ for $i\in\Z_3$.
The realization is the product trellis obtained from the generators $\underline{101},\,\underline{1}\,\underline{10}$ with the
indicated circular spans.
The trellis appeared first in \cite[Fig.~5]{KoVa03}, and it has been used subsequently for various purposes in
\cite[Ex.~1]{NoSh06}, \cite[Ex.~IV.8]{GLW11}, and \cite[Sec.~7.2]{Con12}; however,
its dual (Fig.~\ref{F-MergTrellis}(b)) has not been discussed in any of these papers.

\begin{figure}[ht]
\centering
    \includegraphics[height=2.5cm]{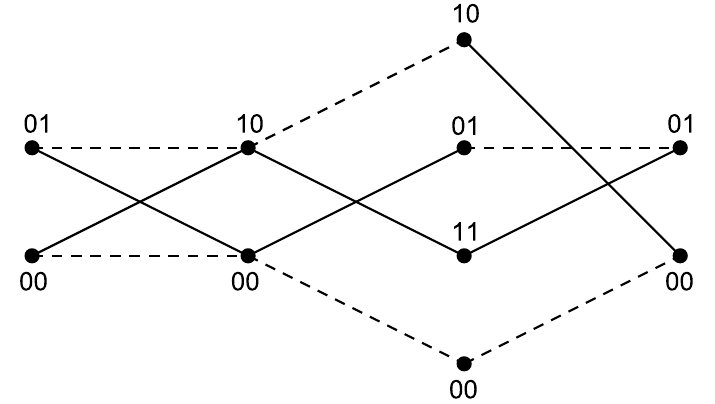}\qquad\qquad \includegraphics[height=2.5cm]{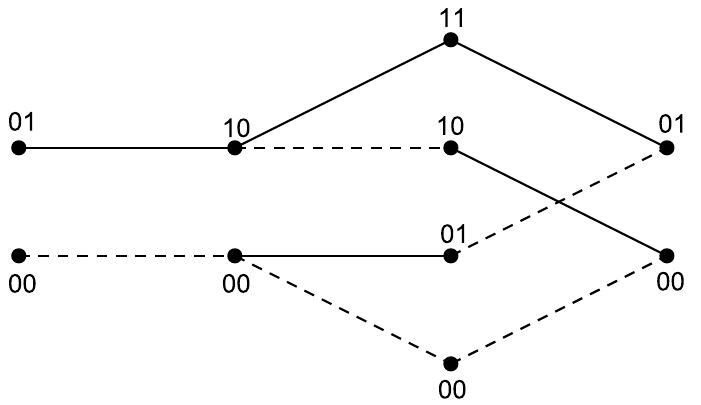}
    \\
    (a) \hspace*{4cm} (b)
    \caption{TPOC, state-trim and branch-trim trellis and its non-state-trim dual}
    \label{F-MergTrellis}
\end{figure}

\noindent Fig.~\ref{F-MergTrellis}(b) shows the dual trellis, which generates the dual code~$\cC^{\perp}=\{000,\,111\}$.
For the dualization we choose~$\hat{\cS}_i=\cS_i$ and the standard inner product for all dual state spaces; no sign
inverter is needed.
Both trellises are TPOC.
The trellis in~(a) is state-trim, whereas the dual trellis in~(b) is not state-trim:
the states~$10$ and~$01$ at time~$2$ are not on any valid trajectory of this realization.
Thus by Remark~\ref{R-StateTrimNec} both trellises are strictly $2$-reducible, and in particular,
the trellis in~(a) is mergeable due to Theorem~\ref{P-STrimNMerg}.
Trimming the dual trellis to the state space $\{00,11\}$ at time~$2$ and dually merging the primal trellis
leads to the pair of mutually dual trellises shown in Figure~\ref{F-MergDual1}.
They form $[1,0)$-reductions of the trellises in Fig.~\ref{F-MergTrellis}.

\begin{figure}[ht]
\centering
    \includegraphics[height=1.7cm]{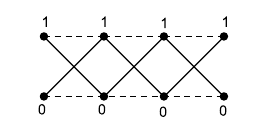}\qquad\qquad
    \includegraphics[height=1.7cm]{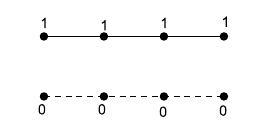}
    \\
    (a) \hspace*{5cm} (b)
    \caption{State-merged trellis and state-trimmed dual trellis}
    \label{F-MergDual1}
\end{figure}

The trellises in Fig.~\ref{F-MergDual1}(a) and~\ref{F-MergDual1}(b) are standard small examples of unobservable
and uncontrollable trellises, respectively.
As discussed in Remark~\ref{R-TrimUnObs}, both can be further reduced; the reduced trellises in this case are
conventional, trim and proper, and therefore minimal.
\QED
\end{exa}

\begin{exa}\label{E-NonMergDual}
We present a trellis that is TPOC, state-trim, branch-trim, nonmergeable, and yet strictly $2$-reducible.
The dual trellis is also TPOC, state-trim, and nonmergeable (due to the previous duality results), but
is not branch-trim.
This tells us that both trellises are reducible.
As in the previous example, the reducibility of these trellises will be obvious from the dual trellis, but is
not at all evident from the primal trellis.

The example appeared first in~\cite{GLW11,GLW11a}.
Fig.~\ref{F-BCJR}(a) shows the product trellis obtained from the generators
$0\underline{111}0,\,\underline{1}00\underline{10},\,\underline{01}\,\underline{101}$ with the
indicated  circular spans.
Its dual is shown in Fig.~\ref{F-BCJR}(b).
\begin{figure}[ht]
\centering
    \includegraphics[height=2.5cm]{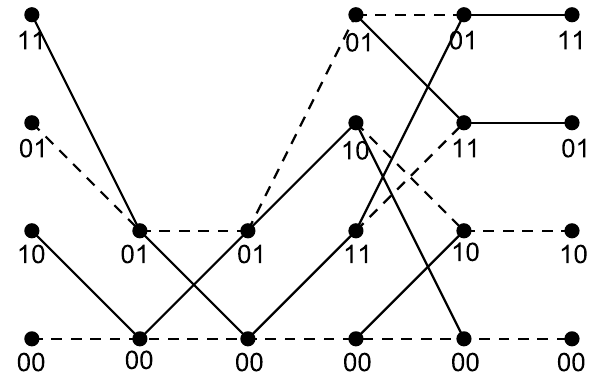}\qquad\qquad \includegraphics[height=2.5cm]{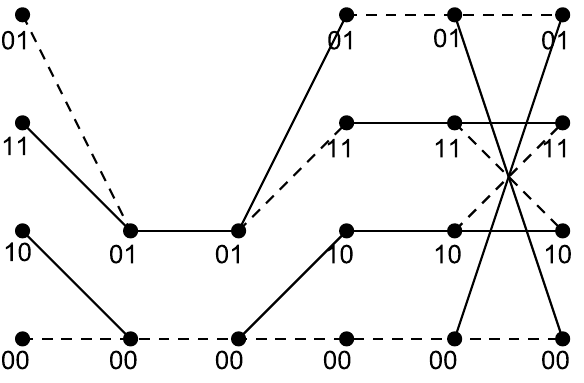}
    \\
    (a) \hspace*{4.5cm} (b)
    \caption{TPOC, state-trim, branch-trim, nonmergeable trellis and its dual}
    \label{F-BCJR}
\end{figure}

\noindent Both trellises are TPOC; they are easily seen to be state-trim, and thus both are nonmergeable due to
Theorem~\ref{P-STrimNMerg}.
The trellis in~(a) is also branch-trim, but the dual trellis is not:
the diagonal branches of the last constraint code are not on any valid trajectory, and deleting them
does not change the code generated by that trellis.
This provides us with a $1$-reduction where we replace the $3$-dimensional constraint code
$\cC_4=\inner{00|1|01,\,10|1|10,\,11|1|11}$ by the subspace $\tilde{\cC}_4=\inner{10|1|10,\,11|1|11}$.
It results in the trellis in Fig.~\ref{F-BCJR2}(b).
By duality, the trellis in Fig.~\ref{F-BCJR}(a) must also be $1$-reducible.
The dual process consists of replacing $\cC_4^{\perp}=\inner{10|0|10,\,11|1|01}$ by the
supercode $\tilde{\cC}_4^{\perp}=\inner{10|0|10,\,11|1|01,\,01|0|01}$.
This results in the trellis in Fig.~\ref{F-BCJR2}(a), which then is the dual of that in Fig.~\ref{F-BCJR2}(b).
\begin{figure}[ht]
\centering
    \includegraphics[height=2.5cm]{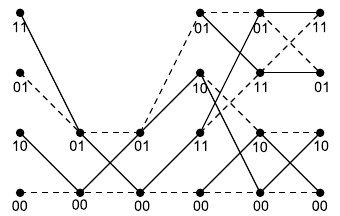}\qquad\qquad \includegraphics[height=2.5cm]{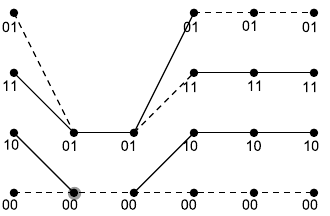}
    \\
    (a) \hspace*{5.2cm} (b)
    \caption{Branch-expanded and branch-trimmed dual trellis}
    \label{F-BCJR2}
\end{figure}

Evidently, the trellis in Fig.~\ref{F-BCJR2}(a) is unobservable and by duality (or by Theorem~\ref{T-ContrConn}) the trellis
in Fig.~\ref{F-BCJR2}(b) is uncontrollable.
Thus, we may apply the procedure from Remark~\ref{R-TrimUnObs} and trim the first trellis in a suitable way.
We pick time~$4$ and trim to the subspace $\{00,11\}$; dually, we merge the dual state space to $\F_2^2/\{00,11\}$.
This results in the mutually dual trellises shown in Fig.~\ref{F-BCJR3}.\footnote{Observe that we could also have
trimmed the state space~$\cS_4$ to~$\{00,10\}$ or the state space~$\cS_0$
suitably. Evidently these various options could lead to different results; for another one, see~\cite[Ex.~4]{FGL12}.}

\begin{figure}[ht]
\centering
    \includegraphics[height=2.5cm]{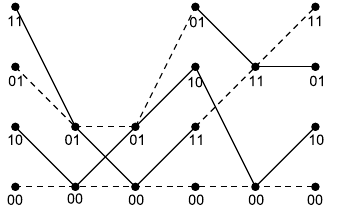}\qquad\qquad \includegraphics[height=2.5cm]{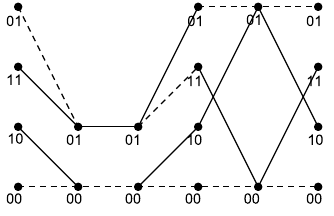}
        \\
    (a) \hspace*{5.2cm} (b)
    \caption{State-trimmed trellis and state-merged dual trellis}
    \label{F-BCJR3}
\end{figure}

\noindent These trellises still generate the original code~$\cC=\inner{01110,\,10010,\,01101}$ and its dual~$\cC^{\perp}$,
respectively.
They form strict and conservative $2$-reductions of the trellises in Fig.~\ref{F-BCJR} because no state space has changed
except~$\cS_4$, which is smaller, and only the constraint codes at times~$3$ and~$4$ have changed, and none is larger.
\QED
\end{exa}

Summarizing, we observe that the trellis in Fig.~\ref{F-BCJR}(a) is strictly $2$-reducible, even though it is
TPOC, state-trim, branch-trim, and nonmergeable.
We first had to perform an auxiliary branch-addition (a non-conservative $1$-reduction) before a state-trimming resulted in
a strict and conservative $2$-reduction.
In the next sections we will derive the appropriate concepts for a systematic study of these phenomena.

It is also worth noting that even though the trellises in Fig.~\ref{F-MergTrellis}(a) and Fig.~\ref{F-BCJR}(a)
are both product trellises, their duals are not -- simply because they are not branch-trim, a property
all product trellises share.
In Theorem~\ref{T-DualProd} we will give an intrinsic characterization of those product trellises that have a product
trellis dual.

\section{Trellis Fragments and Global Trellis Properties}\label{S-Frag}
In this section we study trellis fragments and show the duality of controllability and observability of
such fragments.
This will allow us to present some relations between various trellis properties, which then in turn leads to a
characterization of when the dual of a product trellis is a product trellis.

Let~$\cR$ be a trellis of length~$m$, and thus with symbol spaces~$\cA_i$, state spaces~$\cS_i$, and constraint
codes~$\cC_i\subseteq\cS_i\times\cA_i\times\cS_{i+1}$, all with index sets equal to~$\Z_m$.

Recall from Section~\ref{S-Basics} the definition of trellis fragments~$\cR^\jk$, where $\jk$ is any subinterval
of~$\Z_m$.
In particular, the fragment~$\cR^{[j+m,j)}$ is defined as the edge~$\cS_j$, i.e., it consists of two external
state variables, denoted by $s_{j+m}$ and~$s_j$, with a common alphabet~$\cS_j$ and an equality constraint
$\cC_= =\{(s_{j+m},s_j)\mid s_{j+m}=s_j\}$ between them.
This fragment contains no symbol spaces or internal state spaces.

The complementary fragment to~$\cR^{[j+m,j)}$ is the fragment~$\cR^{[j,j+m)}$, which consists of~$\cR$ with
the edge~$\cS_j$ cut out.
It contains all constraint codes and symbol spaces of~$\cR$ and has internal state
spaces~$\cS_i$, where $i\in(j,j+m)$, and two external state variables, with values $s_j\in\cS_j$ and
$s_{j+m}\in\cS_j$.

The \emph{internal behavior} $\Bf^\jk$ of a fragment~$\cR^\jk$ is the set of all trajectories
$(\ab^\jk,\sb^{[j,k]})\in\cA^\jk\times\cS^{[j,k]}:=\prod_{i\in[j,k)}\cA_i\times\prod_{i\in[j,k]}\cS_i$,
that satisfy all  constraints, hence $(s_i,a_i,s_{i+1})\in\cC_i$ for all $i\in[j,k)$.
Such trajectories will be called \emph{valid $\jk$-paths}, or simply \emph{valid paths}.
The \emph{external behavior} $\cC^\jk$ is the projection of $\Bf^\jk$ on $\cA^\jk\times\cS_j\times\cS_k$.
A fragment~$\cR^\jk$ is thus a normal realization of its external behavior~$\cC^\jk$.
Note that  $\Bf^{[j+m,j)}=\cC^{[j+m,j)}=\cC_=$, whereas $\Bf^{[j,j+m)}$ is the set of all valid $[j,j+m)$-paths of~$\cR$.

We will say that a valid $\jk$-path $(\ab^\jk,\sb^{[j,k]})\in\Bf^\jk$  \emph{lies on a valid trajectory} if there exists some
valid trajectory $(\ab,\sb)\in\Bf$ whose projection on $\cA^\jk\times\cS^{[j,k]}$ is $(\ab^\jk,\sb^{[j,k]})$.
In particular, a valid $[j,j+m)$-path lies on a valid trajectory if and only if $s_j=s_{j+m}$.
Notice that a traditional trellis diagram of a tail-biting trellis~$\cR$ of length~$m$, such as any trellis
diagram in this paper, actually illustrates the fragment $\cR^{[0,m)}$, and the reader has to identify~$\cS_m$ with~$\cS_0$.
The behavior $\Bf^{[0,m)}$ is not in general the same as~$\Bf$, because, again, the latter consists only of the valid
$[0,m)$-paths for which $s_0=s_m$.

All trellis fragments are cycle-free.
It follows from~\cite[Thm.~3]{FGL12} that a cycle-free fragment is a minimal realization of its external behavior
if and only if all of its constraint codes are trim and proper.

The \emph{dual fragment}~$(\cR^{\circ})^\jk$ is the dual normal realization to $\cR^\jk$, where the external state spaces
(whose degree is one) are taken as symbol spaces.
Thus $(\cR^\circ)^{[j,k)}$ comprises the dual constraint codes $(\cC_i)^\perp,  i \in [j,k)$, the dual symbol spaces
$\hat{\cA}_i, i \in [j,k)$, the internal dual state spaces $\hat{\cS}_i, i \in (j,k)$, and the external dual state spaces
$\hat{\cS}_j$ and $\hat{\cS}_k$.
In the dual realization~$(\cR^{\circ})^\jk$, we recall that a sign inversion is applied to the value of~$\hat{s}_k$ in~$\cC_{k-1}^\perp$,
but not to the value of~$\hat{s}_k$ in~$\cC_k^\perp$.
Consequently, a sign inversion is applied to one of the two occurrences  of each internal state space~$\hat{\cS}_i,\,i\in(j,k)$,
and to the one occurrence of~$\hat{\cS}_k$, but not to the one occurrence of~$\hat{\cS}_j$.

The internal behavior $(\Bf^\circ)^{[j,k)}\subseteq\hat{\cA}^\jk\times\hat{\cS}^{[j,k]}$ of~$(\cR^{\circ})^\jk$ consists of all \emph{valid}
$[j,k)$-paths $(\hat{\ab}^{[j,k)}, \hat{\sb}^{[j,k]})$, i.e., $(\hat{s}_i, \hat{a}_i, -\hat{s}_{i+1}) \in (\cC_i)^\perp$ for all
$i \in [j,k)$.
Its external behavior $(\cC^\circ)^{[j,k)}\subseteq\hat{\cS}_j\times\hat{\cA}^\jk\times\hat{\cS}_k$ is the set of all
$(\hat{s}_j,\hat{\ab}^{\jk},-\hat{s}_k)$ such that $(\hat{\ab}^{\jk},\hat{s}_j,\hat{s}_k)\in\big((\Bf^\circ)^\jk\big)_{|\hat{\cA}^\jk\times\hat{\cS}_j\times\hat{\cS}_k}$.
By normal realization duality, the external behavior of the dual fragment $(\cR^\circ)^{[j,k)}$ satisfies
$(\cC^\circ)^{[j,k)}=(\cC^{[j,k)})^\perp$.

For example, for the fragment~$\cR^{[j+m,j)}$, representing the single edge~$\cS_j$ and with behavior $\Bf^{[j+m,j)}=\cC^{[j+m,j)}=\cC_=$,
this reads as follows.
Note that the dual code to~$\cC_=$ is the \emph{sign inversion constraint code}
$\cC_{\sim}=\{(\hat{s}_{j+m},\hat{s}_j)\in\hat{\cS}_j\times\hat{\cS}_j\mid \hat{s}_{j+m}=-\hat{s}_j\}$.
Thus, the dual fragment $(\cR^\circ)^{[j+m,j)}$  has behavior
$(\Bf^{\circ})^{[j+m,j)}=\{(\hat{s}_{j+m},\hat{s}_j)\mid (\hat{s}_{j+m},-\hat{s}_j)\in\cC_{\sim}\}
=\{(\hat{s}_{j+m},\hat{s}_j)\mid \hat{s}_{j+m}=\hat{s}_j\}$, which is the equality constraint on~$\hat{\cS}_j$.
In other words, the dual fragment to an edge corresponding to~$\cS_j$ is an edge corresponding to~$\hat{\cS}_j$.
The external behavior is $(\cC^{\circ})^{[j+m,j)}=\cC_{\sim}$, the sign inversion constraint.

For any interval $[j,k)$, we define the \emph{transition space}~$\cT^\jk$ of the fragment~$\cR^\jk$ as the projection
$\mbox{$\cC^\jk$}_{|\cS_j\times\cS_k}$, and the \emph{unobservable transition space} $\cU^\jk$  as the cross-section
$\mbox{$\cC^\jk$}_{:\cS_j\times\cS_k}$, hence $\cU^\jk=\{(s_j,s_k)\in\cS_j\times\cS_k\mid (\zerob,s_j,s_k)\in\cC^\jk\}$.
Thus~$\cT^\jk$ consists of all state pairs $(s_j,s_k)$ such that there exists a valid $\jk$-path
$(\ab^\jk,s_j,\sb^{(j,k)},s_k)$, while~$\cU^\jk$ consists of all such pairs for which there is a valid $\jk$-path
with $\ab^\jk=\zerob^\jk$.
If the fragment lacks symbol spaces, as with the edge fragment~$\cR^{[j+m,j)}$, then the cross-section equals the projection,
i.e., $\cU^\jk=\cT^\jk$.

We note immediately that if~$\cC^{\jk}$ is trim and proper (i.e., $\mbox{$\cC^{\jk}$}_{|\cS_\ell}=\cS_\ell$ and
$\mbox{$\cC^{\jk}$}_{:\cS_\ell}=\{0\}$ for $\ell=j,\,k$), then $\cT^{\jk}$ is trim and $\cU^{\jk}$ is proper.

A trellis~$\cR$ will be called \emph{$\jk$-controllable} if $\cT^\jk=\cS_j\times\cS_k$, and \emph{$\jk$-observable} if
$\cU^\jk=\{(0,0)\}$.
When $[j,k)$ is a subinterval of the conventional discrete time axis~$\Z$, these definitions correspond to classical notions
of controllability and observability in linear system theory.

For a dual fragment $(\cR^\circ)^{[j,k)}$, we similarly define its transition spaces as the set of all state pairs
in $\hat{\cS}_j\times\hat{\cS}_k$ for which there exists a valid $\jk$-path, and where in addition, 
for the unobservable transition space, the symbol sequence $\ab^\jk$ is~$\zerob^\jk$.
Precisely, the \emph{transition space} is defined as
$(\cT^\circ)^{[j,k)}=\{(\hat{s}_j,\hat{s}_k)\in\hat{\cS}_j\times\hat{\cS}_k\mid \exists\;
  (\ab^{\jk},\hat{s}_j,\sb^{(j,k)},\hat{s}_k)\in(\Bf^\circ)^\jk\}=\{(\hat{s}_j,\hat{s}_k)\mid (\hat{s}_j,-\hat{s}_k)\in\big((\cC^\circ)^\jk\big)_{|\hat{\cS}_j\times\hat{\cS}_k}\}$,
and the \emph{unobservable transition space} is
$(\cU^\circ)^{[j,k)}=
  \{(\hat{s}_j,\hat{s}_k)\mid (\hat{s}_j,-\hat{s}_k)\in\big((\cC^\circ)^\jk\big)_{:\hat{\cS}_j\times\hat{\cS}_k}\}$.

A dual trellis~$\cR^\circ$ is $\jk$-\emph{controllable} if $(\cT^\circ)^{[j,k)}=\hat{\cS}_j\times\hat{\cS}_k$, and
$\jk$-\emph{observable} if $(\cU^\circ)^{[j,k)}=\{(0,0)\}$.
As a consequence, we have, just as for~$\cR$, that $\cR^\circ$ is $\jk$-controllable if
$\big((\cC^\circ)^\jk\big)_{|\hat{\cS}_j\times\hat{\cS}_k}=\hat{\cS}_j\times\hat{\cS}_k$,
and $\jk$-observable if $\big((\cC^\circ)^\jk\big)_{:\hat{\cS}_j\times\hat{\cS}_k}=\{(0,0)\}$, since the sign
inversion is evidently immaterial.
Now we obtain

\begin{theo}\label{T-FragObsContr}
If~$\cR$ and~$\cR^{\circ}$ are dual trellises, then
$\{(\hat{s}_j,\hat{s}_k)\mid (\hat{s}_j,-\hat{s}_k)\in(\cT^\circ)^\jk\}=(\cU^\jk)^{\perp}$.
In particular,~$\cR$ is $\jk$-observable if and only if its dual~$\cR^{\circ}$ is $\jk$-controllable.
\end{theo}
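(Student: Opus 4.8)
The plan is to reduce both sides of the asserted identity to the single projection $\big((\cC^\circ)^\jk\big)_{|\hat{\cS}_j\times\hat{\cS}_k}$ and then to bridge them using the two duality results already in hand: the fragment version of normal realization duality, $(\cC^\circ)^\jk=(\cC^\jk)^\perp$, together with projection/cross-section duality~\eqref{e-PCS}.

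First I would unwind the left-hand set. By the definition of the dual transition space, $(\hat{s}_j,-\hat{s}_k)\in(\cT^\circ)^\jk$ holds precisely when $(\hat{s}_j,\hat{s}_k)\in\big((\cC^\circ)^\jk\big)_{|\hat{\cS}_j\times\hat{\cS}_k}$, because the sign inversion built into $(\cT^\circ)^\jk$ cancels the one imposed here. Thus the left-hand set equals $\big((\cC^\circ)^\jk\big)_{|\hat{\cS}_j\times\hat{\cS}_k}$ outright. Next I would substitute $(\cC^\circ)^\jk=(\cC^\jk)^\perp$ and apply~\eqref{e-PCS} with the coordinate split $\cT_1=\cS_j\times\cS_k$ and $\cT_2=\cA^\jk$, which turns the projection onto the dual external state spaces into the orthogonal complement of the cross-section of $\cC^\jk$ onto $\cS_j\times\cS_k$. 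Since that cross-section is by definition $\cU^\jk$, the chain
\[
  \big((\cC^\circ)^\jk\big)_{|\hat{\cS}_j\times\hat{\cS}_k}
  =\big((\cC^\jk)^\perp\big)_{|\hat{\cS}_j\times\hat{\cS}_k}
  =\big((\cC^\jk)_{:\cS_j\times\cS_k}\big)^\perp
  =(\cU^\jk)^\perp
\]
delivers the main identity.

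For the ``in particular'' statement I would argue that $\cR$ is $\jk$-observable, i.e.\ $\cU^\jk=\{(0,0)\}$, if and only if $(\cU^\jk)^\perp=\hat{\cS}_j\times\hat{\cS}_k$. By the identity just proved this means $\{(\hat{s}_j,\hat{s}_k)\mid(\hat{s}_j,-\hat{s}_k)\in(\cT^\circ)^\jk\}$ is the full space, and since $(\hat{s}_j,\hat{s}_k)\mapsto(\hat{s}_j,-\hat{s}_k)$ is a bijection of $\hat{\cS}_j\times\hat{\cS}_k$, this is equivalent to $(\cT^\circ)^\jk=\hat{\cS}_j\times\hat{\cS}_k$, which is exactly $\jk$-controllability of~$\cR^\circ$. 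The computation carries no real difficulty; the only points demanding care are the bookkeeping of the sign inversions (making sure the double negation cancels, and using the sign-flip bijection in the last step) and selecting the coordinate decomposition so that~\eqref{e-PCS} applies cleanly to the external behavior, which lives in the mixed space $\cA^\jk\times\cS_j\times\cS_k$ rather than in a pure state space.
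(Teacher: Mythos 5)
Your proof is correct and follows essentially the same route as the paper's: both reduce the left-hand set to $\big((\cC^\circ)^\jk\big)_{|\hat{\cS}_j\times\hat{\cS}_k}$ via the definition of $(\cT^\circ)^\jk$, then apply $(\cC^\circ)^\jk=(\cC^\jk)^\perp$ together with projection/cross-section duality~\eqref{e-PCS} to obtain $(\cU^\jk)^\perp$. Your treatment of the sign inversions and the final bijection argument is just a more explicit spelling-out of what the paper leaves implicit.
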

\begin{proof}
By $(\cC^{\circ})^\jk=(\cC^\jk)^{\perp}$ and projection/cross-section duality~\eqref{e-PCS}, we have
$\big((\cC^{\circ})^\jk\big)_{|\hat{\cS}_j\times\hat{\cS}_k}=(\mbox{$\cC^\jk$}_{:\cS_j\times\cS_k})^{\perp}
=(\cU^\jk)^{\perp}$, and the definition of $(\cT^\circ)^\jk$ yields the desired orthogonality.
The second statement follows from $\{(0,0)\}^{\perp}=\hat{\cS}_j\times\hat{\cS}_k$.
\end{proof}

We note that~$\cR$ is $[j+m,j)$-observable if and only if $\cS_j=\{0\}$.
Since~$\hat{\cS}_j=\{0\}$ if and only if~$\cS_j=\{0\}$, Theorem~\ref{T-FragObsContr} shows that~$\cR$ is also
$[j+m,j)$-controllable if and only if $\cS_j=\{0\}$.

It is worth stressing that all statements pertaining to valid $\jk$-paths, $\jk$-control\-la\-bility and
$\jk$-observability are equally valid for the primal trellis~$\cR$ and its dual~$\cR^\circ$.
The only slight asymmetry, due to sign inversion, is contained in the external behavior (and thus in the transition spaces),
and  has been dealt with in the previous result.
From this point on, no distinction needs to be made between a primal and a dual trellis.

We next discuss  various global notions of trimness.
A trellis~$\cR$ will be called \emph{${[j,k)}$-trim} if every valid $[j,k)$-path $(\ab^{[j,k)}, \sb^{[j,k]})$ lies
on a valid trajectory $(\ab,\sb)\in\Bf$.
Evidently~$\cR$ is $[j,k)$-trim if and only if all state pairs $(s_j, s_k)$ in the transition space~$\cT^{[j,k)}$
also occur in~$\cT^{[k,j)}$;  i.e., $\cT^{[j,k)} \subseteq \cT^{[k,j)}$.
Thus if~$\cR$ is $[k,j)$-controllable, so $\cT^{[k,j)} = \cS_j \times \cS_k$, then~$\cR$ is $[j,k)$-trim.
However, the converse is not true unless we require also that~$\cR$ be controllable.

\begin{theo}\label{T-ContrTrim}
(a) A $[k,j)$-controllable trellis is $[j,k)$-trim.
\\
(b) A controllable and $[j,k)$-trim trellis is $[k,j)$-controllable.
\end{theo}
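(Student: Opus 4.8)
For part~(a), I would simply invoke the characterization stated immediately before the theorem: $[k,j)$-controllability means $\cT^{[k,j)}=\cS_k\times\cS_j$, so every pair that occurs as the boundary of a valid $[j,k)$-path trivially also occurs as the boundary of a valid $[k,j)$-path. Since $\cR$ is $[j,k)$-trim if and only if $\cT^{[j,k)}\subseteq\cT^{[k,j)}$ (under the obvious identification of $\cS_j\times\cS_k$ with $\cS_k\times\cS_j$ obtained by swapping coordinates), $[j,k)$-trimness follows at once. For part~(b), the plan is to reduce the claim to the single statement $\cT^{[j,k)}+\cT^{[k,j)}=\cS_j\times\cS_k$, where both transition spaces are regarded as subspaces of $\cS_j\times\cS_k$ after reordering the coordinates of the $[k,j)$-fragment. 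Once this is known, $[j,k)$-trimness gives $\cT^{[j,k)}\subseteq\cT^{[k,j)}$, so the sum collapses to $\cT^{[k,j)}=\cS_j\times\cS_k$, which is exactly $[k,j)$-controllability. Everything then rests on a dimension count anchored by the global controllability identity of Theorem~\ref{T-ContrTest}.

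The first ingredient is a gluing formula. A valid trajectory in $\Bf$ is precisely a valid $[j,k)$-path together with a valid $[k,j)$-path agreeing on the two boundary state spaces $\cS_j$ and $\cS_k$; hence $\Bf$ is the kernel of the linear map $\Bf^{[j,k)}\times\Bf^{[k,j)}\to\cS_j\times\cS_k$ that sends a pair of paths to the difference of their two pairs of boundary states. Its image is $\cT^{[j,k)}+\cT^{[k,j)}$, so rank-nullity yields $\dim\Bf^{[j,k)}+\dim\Bf^{[k,j)}=\dim\Bf+\dim(\cT^{[j,k)}+\cT^{[k,j)})$.

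The second ingredient is a fragment version of the controllability bound: for the cycle-free fragment $\cR^{[j,k)}$ one has $\sum_{i\in[j,k)}\dim\cC_i\le\dim\Bf^{[j,k)}+\sum_{i\in(j,k)}\dim\cS_i$, and likewise for $\cR^{[k,j)}$. I would establish this by induction on the fragment length, peeling off the last constraint $\cC_{k-1}$ and applying the same fiber-product computation over the internal state space $\cS_{k-1}$, whose two projections span a subspace of dimension at most $\dim\cS_{k-1}$; the single-constraint base case is an equality. This is just the ``open'' analogue of Theorem~\ref{T-ContrTest} and could alternatively be cited.

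Finally I would assemble these. Adding the two fragment inequalities, substituting the gluing formula for $\dim\Bf^{[j,k)}+\dim\Bf^{[k,j)}$, and using that the internal state spaces of the two complementary fragments together account for all state spaces except the two boundary ones, gives $\sum_i\dim\cC_i\le\dim\Bf+\dim(\cT^{[j,k)}+\cT^{[k,j)})+\sum_i\dim\cS_i-\dim\cS_j-\dim\cS_k$. Controllability and Theorem~\ref{T-ContrTest} make the left-hand side equal to $\dim\Bf+\sum_i\dim\cS_i$, and combining this with the trivial bound $\dim(\cT^{[j,k)}+\cT^{[k,j)})\le\dim\cS_j+\dim\cS_k$ sandwiches everything into equalities, forcing $\cT^{[j,k)}+\cT^{[k,j)}=\cS_j\times\cS_k$; invoking trimness as above then finishes the proof. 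I expect the main obstacle to be establishing the fragment-level bound cleanly and keeping the coordinate reordering between $\cT^{[j,k)}$ and $\cT^{[k,j)}$ consistent; the remainder is a mechanical count in which the roles of the hypotheses are transparent, controllability forcing the dimension sum to be maximal and trimness turning the sum of transition spaces into a single one.
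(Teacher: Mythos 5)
Your proof is correct, but part (b) takes a genuinely different route from the paper. Part (a) is identical: both arguments just read off $\cT^{[k,j)}=\cS_j\times\cS_k$ against the characterization $\cT^{[j,k)}\subseteq\cT^{[k,j)}$ of $[j,k)$-trimness. For part (b), the paper passes to the dual trellis: by Theorem~\ref{T-FragObsContr}, the trimness inclusion $\cT^{[j,k)}\subseteq\cT^{[k,j)}$ dualizes (via orthogonal complements) to $(\cU^\circ)^{[k,j)}\subseteq(\cU^\circ)^{[j,k)}$, so any nonzero element of $(\cU^\circ)^{[k,j)}$ would glue two all-zero-symbol paths on complementary fragments into a nonzero unobservable trajectory of $\cR^\circ$; since controllability of $\cR$ means $\cR^\circ$ is observable, $(\cU^\circ)^{[k,j)}$ is trivial and Theorem~\ref{T-FragObsContr} converts this back to $[k,j)$-controllability of $\cR$. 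You instead stay entirely in the primal trellis and run a dimension count: the rank--nullity identity for the gluing map $\Bf^{[j,k)}\times\Bf^{[k,j)}\to\cS_j\times\cS_k$ together with fragment versions of the bound in Theorem~\ref{T-ContrTest} (which do hold for cycle-free fragments, by your peeling induction or by citing the general normal-realization version in~\cite{FGL12}) sandwiches $\dim(\cT^{[j,k)}+\cT^{[k,j)})$ at $\dim\cS_j+\dim\cS_k$, after which trimness collapses the sum to $\cT^{[k,j)}$. The two arguments use the controllability hypothesis in dual guises --- ``the dual is observable'' versus ``the total constraint dimension is maximal'' --- and both are sound. The paper's version is shorter given that Theorem~\ref{T-FragObsContr} is already in place and fits its policy of treating $\cR$ and $\cR^\circ$ symmetrically; yours is self-contained on the primal side, makes the quantitative mechanism explicit, and in passing establishes the slightly stronger fact that $\cT^{[j,k)}+\cT^{[k,j)}=\cS_j\times\cS_k$ for \emph{every} controllable trellis, independently of trimness (equivalently, via Theorem~\ref{T-FragObsContr}, that $(\cU^\circ)^{[j,k)}\cap(\cU^\circ)^{[k,j)}=\{(0,0)\}$, which is just observability of $\cR^\circ$ localized to the cut at $j$ and $k$). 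The only points needing care are the ones you already flag: the coordinate swap identifying $\cT^{[k,j)}$ with a subspace of $\cS_j\times\cS_k$, and a clean statement of the fragment inequality.
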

\begin{proof}
We have already shown (a).
To prove~(b), we use the dual trellis~$\cR^\circ$.
As we have seen above, $[j,k)$-trimness of~$\cR$ implies $\cT^{[j,k)} \subseteq \cT^{[k,j)}$.
By Theorem~\ref{T-FragObsContr} this yields $(\cU^\circ)^{[k,j)} \subseteq (\cU^\circ)^{[j,k)}$.
But this means that for every $(\hat{s}_j, \hat{s}_k) \in (\cU^\circ)^{[k,j)}$ there is a valid
trajectory in~$\Bf^\circ$ with $\hat{\ab} = \zerob$.
Since~$\cR^\circ$ is observable,~$(\cU^\circ)^{[k,j)}$ must be trivial; i.e.,~$\cR^\circ$ must be $[k,j)$-observable.
By Theorem~\ref{T-FragObsContr},~$\cR$ must be $[k,j)$-controllable.
\end{proof}

We now introduce notions of controller and observer memory similar to those of classical linear system theory.
For $1 \le t \le m$, we will say that~$\cR$ is \emph{$t$-controllable} (resp.\ \emph{$t$-observable}) if~$\cR$ is
$[j,j+t)$-controllable (resp.\ $[j,j+t)$-observable) for all length-$t$ intervals $[j,j+t)$.

In particular, a trellis~$\cR$ is $m$-controllable if and only if for all~$j$ there is a valid $[j,j+m)$-path of length~$m$
from any state $s_j \in \cS_j$ to any state $s_{j+m} \in \cS_j$.
Thus $m$-controllable trellises are not only trim, but also state-trim.
Dually, $m$-observable trellises are proper.
Moreover, $m$-observable trellises are evidently observable, which implies that $m$-controllable trellises are controllable.
However, the converses of these statements are not necessarily true, as we now proceed to show.

We first note that a trellis is state-trim at $\cS_j$ if and only if every valid
$[j+m,j)$-path $(s_{j+m}, s_j) \in \cC^{[j+m, j)}$ lies on a valid trajectory in $\Bf$.
Then we obtain the following corollary:

\begin{cor}\label{C-ContrTrim}
A controllable trellis is $m$-controllable if and only if it is state-trim.
Dually, an observable trellis~$\cR$ is $m$-observable if and only if its dual~$\cR^\circ$ is state-trim.
Consequently, a nonmergeable observable trellis is $m$-observable.
\end{cor}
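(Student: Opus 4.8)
The plan is to recognize the two global notions in the statement as special \emph{fragment} properties, so that Theorem~\ref{T-ContrTrim} can be invoked directly. The crucial dictionary is already supplied just before the statement: state-trimness of~$\cR$ at~$\cS_j$ is exactly $[j+m,j)$-trimness (trimness on the edge interval based at~$j$), since a valid $[j+m,j)$-path is just a pair $(s_j,s_j)$ and it lies on a valid trajectory precisely when $s_j\in\Bf_{|\cS_j}$. Dually, $m$-controllability is by definition $[j,j+m)$-controllability for every~$j$, i.e.\ controllability on the \emph{complementary} full interval. Because $[j+m,j)$ and $[j,j+m)$ are complementary, Theorem~\ref{T-ContrTrim} is tailored to link exactly these conditions.

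For the first assertion I would argue both directions at a fixed~$j$ and then quantify over~$j$. If~$\cR$ is controllable and state-trim, then it is $[j+m,j)$-trim, and Theorem~\ref{T-ContrTrim}(b), applied to the edge interval $[j+m,j)$ whose complement is $[j,j+m)$, gives that~$\cR$ is $[j,j+m)$-controllable; since~$j$ is arbitrary,~$\cR$ is $m$-controllable. Conversely, if~$\cR$ is $m$-controllable, then it is controllable (as already noted before the statement, $m$-controllability implies controllability), and for each~$j$ Theorem~\ref{T-ContrTrim}(a), applied to the full interval $[j,j+m)$ whose complement is the edge $[j+m,j)$, yields that~$\cR$ is $[j+m,j)$-trim, i.e.\ state-trim at~$\cS_j$. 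Quantifying over~$j$ proves the equivalence.

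The second assertion I would obtain purely by dualization. By Theorem~\ref{T-FragObsContr}, $\cR$ is $\jk$-observable if and only if~$\cR^\circ$ is $\jk$-controllable; applied to all length-$m$ intervals this shows~$\cR$ is $m$-observable if and only if~$\cR^\circ$ is $m$-controllable, and, since controllability is defined via the dual (and bidualization returns the original trellis),~$\cR$ is observable if and only if~$\cR^\circ$ is controllable. Assuming~$\cR$ observable, so~$\cR^\circ$ is controllable, the first assertion applied to~$\cR^\circ$ then says~$\cR^\circ$ is $m$-controllable (equivalently~$\cR$ is $m$-observable) if and only if~$\cR^\circ$ is state-trim. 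For the final consequence I would combine this with Theorem~\ref{P-STrimNMerg}: a nonmergeable trellis has an observable, state-trim dual, so if~$\cR$ is also observable then~$\cR^\circ$ is state-trim and the second assertion forces~$\cR$ to be $m$-observable.

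I do not expect a substantive obstacle here; once the two fragment identifications are recorded, the rest is bookkeeping with Theorems~\ref{T-ContrTrim}, \ref{T-FragObsContr}, and~\ref{P-STrimNMerg}. The only point that demands care is the interval accounting --- keeping straight that state-trimness is the \emph{empty/edge}-interval trimness $[j+m,j)$ and $m$-controllability the \emph{full}-interval controllability $[j,j+m)$, so that in each direction the correct clause (a) or (b) of Theorem~\ref{T-ContrTrim} is applied to the correct member of the complementary pair.
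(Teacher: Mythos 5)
Your proposal is correct and follows essentially the same route as the paper: identify state-trimness at $\cS_j$ with $[j+m,j)$-trimness, apply Theorem~\ref{T-ContrTrim} to the complementary pair $[j+m,j)$ and $[j,j+m)$, dualize via Theorem~\ref{T-FragObsContr}, and finish with Theorem~\ref{P-STrimNMerg}. The interval bookkeeping is handled correctly in both directions.
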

\begin{proof}
By Theorem~\ref{T-ContrTrim}, a controllable trellis~$\cR$ is $[j,j+m)$-controllable for all~$j$ if and only if it is
$[j+m,j)$-trim for all~$j$, which is to say if it is state-trim at~$\cS_j$ for all~$j$.
The second statement follows from Theorem~\ref{T-FragObsContr}.
The last statement follows from Theorem~\ref{P-STrimNMerg}, which shows that if~$\cR$ is non-mergeable, 
then $\cR^\circ$ must be state-trim.
\end{proof}

For example, Fig.~\ref{F-MergTrellis}(a) shows an observable trellis that is not $m$-observable (note the all-zero path
between states $s_2 = 01$ and $s_2' = 10$);
its dual in Fig.~\ref{F-MergTrellis}(b) is a controllable trellis that is not $m$-controllable, and not state-trim at~$\cS_2$.

The last observation of Corollary~\ref{C-ContrTrim} also appears  in \cite[Thm.~7.8]{Con12}, where an $m$-observable
trellis is called ``totally one-to-one."

Similarly, we note that a trellis is branch-trim at constraint code~$\cC_i$ if and only if every valid
$[i,i+1)$-path $(s_i, a_i, s_{i+1}) \in \cC_i$ lies on a valid trajectory in~$\Bf$.
Thus we obtain the following corollary:

\begin{cor}\label{C-ContrBranchTrim}
A controllable trellis is $(m\!-\!1)$-controllable if and only if it is branch-trim.
Dually, an observable trellis~$\cR$ is $(m\!-\!1)$-observable if and only if its dual~$\cR^\circ$ is branch-trim.
\end{cor}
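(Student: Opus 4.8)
The plan is to mirror the proof of Corollary~\ref{C-ContrTrim} exactly, replacing the edge fragment $\cR^{[j+m,j)}$ (which encodes a single state space) by the single-constraint fragment $\cR^{[i,i+1)}$ (which encodes a single constraint code and hence branches). The crucial observation to establish first is the one stated just before the corollary: a trellis is branch-trim at $\cC_i$ if and only if every valid $[i,i+1)$-path $(s_i,a_i,s_{i+1})\in\cC_i$ lies on a valid trajectory in~$\Bf$. This is essentially the definition of branch-trimness phrased in fragment language, since $\Bf^{[i,i+1)}=\cC_i$ and the projection of $\Bf$ onto $\cS_i\times\cA_i\times\cS_{i+1}$ equals $\cC_i$ precisely when every branch extends to a full valid trajectory.

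Once that equivalence is in hand, I would argue as follows for part~(a). By definition, $\cR$ is $(m\!-\!1)$-controllable if and only if it is $[i+1,i+m)$-controllable for every~$i$, i.e. $[j,j+(m-1))$-controllable for all $j$. Now $[i+1,i+m)$ is exactly the complement $[k,j)$ of the length-one interval $[j,k)=[i,i+1)$ (taking $j=i$, $k=i+1$, so the complementary interval wraps all the way around). Applying Theorem~\ref{T-ContrTrim}: since $\cR$ is controllable, Theorem~\ref{T-ContrTrim}(b) says that $[i,i+1)$-trimness implies $[i+1,i+m)$-controllability, while Theorem~\ref{T-ContrTrim}(a) says $[i+1,i+m)$-controllability implies $[i,i+1)$-trimness. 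Thus for a controllable trellis, being $[i+1,i+m)$-controllable is equivalent to being $[i,i+1)$-trim, and by the fragment characterization of branch-trimness, being $[i,i+1)$-trim for all $i$ is exactly branch-trimness. Combining these gives that a controllable trellis is $(m\!-\!1)$-controllable if and only if it is branch-trim.

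For the dual statement in part~(b), I would invoke Theorem~\ref{T-FragObsContr}, which provides the clean dictionary between $\jk$-observability of $\cR$ and $\jk$-controllability of $\cR^\circ$ for every interval~$\jk$. Applying part~(a) to the dual trellis $\cR^\circ$ (which is controllable precisely when $\cR$ is observable) yields that $\cR^\circ$ is $(m\!-\!1)$-controllable if and only if $\cR^\circ$ is branch-trim; translating the left-hand side back through Theorem~\ref{T-FragObsContr} gives that $\cR$ is $(m\!-\!1)$-observable if and only if $\cR^\circ$ is branch-trim. This is entirely parallel to how the dual statement of Corollary~\ref{C-ContrTrim} was deduced from its primal statement.

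The only genuine subtlety, and the step I would watch most carefully, is the bookkeeping of the complementary intervals: I must confirm that the length-$(m-1)$ interval used in the definition of $(m\!-\!1)$-controllability is precisely the set-theoretic complement in~$\Z_m$ of a single length-one interval $[i,i+1)$, so that Theorem~\ref{T-ContrTrim} applies with $[j,k)=[i,i+1)$ and $[k,j)=[i+1,i+m)$. Since $\Z_m$ is the disjoint union of $\jk$ and $[k,j)$, and a length-one interval has a length-$(m-1)$ complement, this matches exactly, and ranging $i$ over all of~$\Z_m$ ranges the complement over all length-$(m-1)$ intervals. Everything else is a direct transcription of the earlier corollary's argument, so no new obstacle arises.
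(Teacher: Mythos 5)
Your proof is correct and follows essentially the same route as the paper's: the paper likewise notes that branch-trimness at $\cC_i$ is equivalent to every valid $[i,i+1)$-path lying on a valid trajectory, applies Theorem~\ref{T-ContrTrim} with the complementary pair $[i,i+1)$ and $[i+1,i)$ to get the first statement, and obtains the dual statement via Theorem~\ref{T-FragObsContr}. Your interval bookkeeping ($[i+1,i+m)$ versus the paper's $[i+1,i)$) is just a notational variant of the same complement.
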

\begin{proof}  By Theorem~\ref{T-ContrTrim}, a  controllable trellis~$\cR$ is $[i+1,i)$-controllable for all~$i$
if and only if it is $[i,i+1)$-trim for all~$i$, which is to say if it is branch-trim at~$\cC_i$ for all $i$.
The second statement follows from Theorem~\ref{T-FragObsContr}.
\end{proof}

For example, Fig.~\ref{F-BCJR}(a) shows an observable trellis that is not $(m\!-\!1)$-observable (note the all-zero
path between states $s_5 = 01$ and $s_4 = 01$);  its dual in Fig.~\ref{F-BCJR}(b) is a controllable trellis
that is not $(m\!-\!1)$-controllable, and not branch-trim at~$\cC_4$.

We remark that in the first statement of Corollary 5.4, controllability is necessary;  for example, Figs.~\ref{F-MergDual1}(b)
and~\ref{F-BCJR2}(b) show uncontrollable trellises that are branch-trim, but not $t$-controllable for any $t \le m$.

Combining these corollaries, we obtain the following results.
Part~(b) can also be found in~\cite[Thm.~7.8]{Con12}.

\begin{cor}\label{C-ContrTrim2}
(a) An $(m\!-\!1)$-controllable trellis is branch-trim, state-trim and trim.
\\
(b)  A controllable $m$-observable trellis is nonmergeable and proper.
\end{cor}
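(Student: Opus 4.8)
The plan is to derive both parts purely from the fragment dictionary assembled in this section, so that no new computation beyond short projection arguments is required. I would treat~(a) via Theorem~\ref{T-ContrTrim}(a) together with the implication ``branch-trim plus an outgoing branch at every state gives state-trim,'' and~(b) via the duality statements of Theorem~\ref{P-STrimNMerg} and Corollary~\ref{C-ContrTrim}.

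For part~(a), suppose $\cR$ is $(m\!-\!1)$-controllable. First I would obtain branch-trimness directly: for each~$i$ the interval $[i+1,i)$ has length $m-1$, so the hypothesis gives $[i+1,i)$-controllability, and Theorem~\ref{T-ContrTrim}(a) (with $[k,j)=[i+1,i)$) yields $[i,i+1)$-trimness, which is exactly branch-trimness at~$\cC_i$; ranging over~$i$ shows $\cR$ is branch-trim. Next I would record that every state carries an outgoing branch: since~$\cR$ is $[i,i+m-1)$-controllable, the transition space $\cT^{[i,i+m-1)}$ equals $\cS_i\times\cS_{i+m-1}$, so its projection onto~$\cS_i$ is all of~$\cS_i$; as every valid $[i,i+m-1)$-path begins with a branch of~$\cC_i$, this gives $(\cC_i)_{|\cS_i}=\cS_i$. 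Note that both facts follow from $(m\!-\!1)$-controllability alone, without assuming global controllability.

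The key step is then that these two facts force state-trimness. Indeed, given $s_i\in\cS_i$, the preceding observation supplies an outgoing branch $(s_i,a_i,s_{i+1})\in\cC_i$, and branch-trimness guarantees that this branch lies on a valid trajectory in~$\Bf$; that trajectory passes through~$s_i$ at time~$i$, so $s_i\in\Bf_{|\cS_i}$. Hence~$\cR$ is state-trim, and since a state-trim trellis is automatically trim, all three conclusions of part~(a) follow. For part~(b), suppose~$\cR$ is controllable and $m$-observable. Properness is immediate because $m$-observable trellises are proper. For nonmergeability I would invoke Theorem~\ref{P-STrimNMerg}, which reduces the claim to showing that~$\cR^\circ$ is observable and state-trim: observability of~$\cR^\circ$ is precisely the definition of controllability of~$\cR$, while state-trimness of~$\cR^\circ$ follows by applying Corollary~\ref{C-ContrTrim} to the observable trellis~$\cR$ (an $m$-observable trellis being observable), which equates $m$-observability of~$\cR$ with state-trimness of~$\cR^\circ$. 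Theorem~\ref{P-STrimNMerg} then yields nonmergeability.

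I expect no serious obstacle, as both parts are bookkeeping over the results above. The only point demanding care is in part~(a): one must not conflate branch-trimness with trimness, since branch-trimness by itself does not guarantee that every state carries a branch. The separate projection argument establishing $(\cC_i)_{|\cS_i}=\cS_i$ is therefore genuinely needed before state-trimness (and hence trimness) can be deduced.
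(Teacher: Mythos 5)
Your proof is correct, and part (b) follows the paper's own route almost verbatim: the paper likewise reduces nonmergeability to Theorem~\ref{P-STrimNMerg} via state-trimness of $\cR^\circ$ (it phrases this as ``$\cR^\circ$ is $m$-controllable, hence state-trim by Corollary~\ref{C-ContrTrim}'', whereas you quote the dual half of that corollary directly), the only cosmetic difference being that the paper deduces properness from nonmergeability while you take it directly from $m$-observability. In part (a) your decomposition is genuinely a little different. The paper argues ``$(m\!-\!1)$-controllable $\Rightarrow$ $m$-controllable, hence branch-trim by Corollary~\ref{C-ContrBranchTrim} and state-trim by Corollary~\ref{C-ContrTrim}''; the step from $(m\!-\!1)$- to $m$-controllability is asserted without proof (it holds because a length-$m$ path is an outgoing branch followed by a length-$(m\!-\!1)$ path, and $(m\!-\!1)$-controllability supplies the outgoing branch), and Corollary~\ref{C-ContrTrim} as stated carries a controllability hypothesis that one must observe is automatic here. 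Your version sidesteps both points: you get branch-trimness straight from Theorem~\ref{T-ContrTrim}(a), extract the existence of an outgoing branch at every state from the projection of the transition space $\cT^{[i,i+m-1)}$, and then combine the two to land every state on a valid trajectory. That is a slightly more self-contained derivation, and your closing caveat --- that branch-trimness alone does not put every \emph{state} on a trajectory, so the outgoing-branch step is not redundant --- is exactly the right point to flag. Both arguments are sound; yours makes explicit a small amount of bookkeeping that the paper leaves implicit.
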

\begin{proof}
(a) If~$\cR$ is $(m\!-\!1)$-controllable, then it is also $m$-controllable, so it is not only branch-trim by
Corollary~\ref{C-ContrBranchTrim}, but also state-trim by Corollary~\ref{C-ContrTrim}, and thus trim.
(b)  If~$\cR$ is $m$-observable, then its dual~$\cR^\circ$ is $m$-controllable, and thus state-trim by
Corollary~\ref{C-ContrTrim}.
Theorem~\ref{P-STrimNMerg} implies that~$\cR$ is nonmergeable, which in turn implies properness.
\end{proof}

Controllability is necessary in~(b) because an uncontrollable trellis is mergeable;  for example,
the trellis of Fig.~\ref{F-MergDual1}(b) is $3$-observable but uncontrollable, hence mergeable.

Finally, we address the question of when the dual of a product trellis is a product trellis.
Koetter and Vardy [10] showed that a trellis is a product trellis if and only if it is ``reduced"
(state-trim and branch-trim), which gave them a powerful tool in their search for minimal trellises~\cite{KoVa03}.
But Examples~\ref{E-MergDual} and~\ref{E-NonMergDual} show that the dual of a product trellis is not
necessarily reduced.
For observable product trellises, Corollaries~\ref{C-ContrBranchTrim} and~\ref{C-ContrTrim2}
give us a nice characterization of when the dual is a product trellis.

\begin{theo}\label{T-DualProd}
(a) If~$\cR$ is an $(m\!-\!1)$-observable trellis, then its dual~$\cR^\circ$ is a product trellis.
\\
(b) If~$\cR$ is observable but not $(m\!-\!1)$-observable, then its dual~$\cR^\circ$ is not a product trellis.
\end{theo}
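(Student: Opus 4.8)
The plan is to deduce both parts directly from the fragment/trimness dictionary assembled in Corollaries~\ref{C-ContrTrim}--\ref{C-ContrTrim2}, together with the Koetter--Vardy characterization recalled just above the theorem: a trellis is a product trellis if and only if it is reduced, i.e.\ state-trim and branch-trim. In this light neither part requires any new construction; the work is entirely in routing the hypotheses through the dual trellis.

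For part~(a), I would first transport $(m\!-\!1)$-observability of~$\cR$ to the dual. Theorem~\ref{T-FragObsContr} says~$\cR$ is $\jk$-observable if and only if~$\cR^\circ$ is $\jk$-controllable; applying this to every length-$(m\!-\!1)$ interval shows that~$\cR$ is $(m\!-\!1)$-observable exactly when~$\cR^\circ$ is $(m\!-\!1)$-controllable. Now Corollary~\ref{C-ContrTrim2}(a) applies verbatim to~$\cR^\circ$: an $(m\!-\!1)$-controllable trellis is both state-trim and branch-trim, hence reduced, and a reduced trellis is a product trellis. Thus~$\cR^\circ$ is a product trellis. I would point out that no separate observability hypothesis on~$\cR$ is needed here, because $(m\!-\!1)$-controllability of~$\cR^\circ$ already forces controllability, which is all that Corollary~\ref{C-ContrTrim2}(a) consumes.

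For part~(b), I would argue by isolating a single defect. Since~$\cR$ is observable, the dual form of Corollary~\ref{C-ContrBranchTrim} states that~$\cR$ is $(m\!-\!1)$-observable if and only if~$\cR^\circ$ is branch-trim. As~$\cR$ is assumed \emph{not} $(m\!-\!1)$-observable, this equivalence forces~$\cR^\circ$ to fail branch-trimness. But every product trellis is reduced and hence branch-trim, so a non-branch-trim~$\cR^\circ$ cannot be a product trellis, which is exactly the claim.

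The two parts are therefore immediate once the corollaries are in hand, and the only genuine content has already been absorbed into Corollary~\ref{C-ContrTrim2}(a): namely the monotonicity ``$(m\!-\!1)$-observable $\Rightarrow$ $m$-observable'' (equivalently, its controllable counterpart $(m\!-\!1)$-controllable $\Rightarrow$ $m$-controllable), which is what lets state-trimness of~$\cR^\circ$ tag along in part~(a) even though branch-trimness is all that $(m\!-\!1)$-observability literally delivers. Were that step not already available, the crux would be to prove this monotonicity by hand: given an all-zero $[j,j+m)$-path through states $s_j,\dots,s_{j+m}$, apply $(m\!-\!1)$-observability to the two overlapping sub-fragments $[j,j+m-1)$ and $[j+1,j+m)$ to force $s_j=0$ and $s_{j+m}=0$ respectively, so that $\cU^{[j,j+m)}=\{(0,0)\}$. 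This overlap argument is the single place a short computation would be required; everything else is duality bookkeeping.
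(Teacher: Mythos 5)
Your proof is correct and follows essentially the same route as the paper: part~(a) via $(m\!-\!1)$-controllability of~$\cR^\circ$ and Corollary~\ref{C-ContrTrim2}(a), part~(b) via Corollary~\ref{C-ContrBranchTrim} and the fact that product trellises are branch-trim. The extra remarks (that observability of~$\cR$ is not needed in~(a), and the overlap argument for the monotonicity hidden in Corollary~\ref{C-ContrTrim2}(a)) are accurate but not a departure from the paper's argument.
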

\begin{proof}  (a) If~$\cR$ is $(m\!-\!1)$-observable, then~$\cR^\circ$ is $(m\!-\!1)$-controllable, thus reduced by
Corollary~\ref{C-ContrTrim2}, thus a product trellis.
(b)  If~$\cR$ is observable but not $(m\!-\!1)$-observable, then~$\cR^\circ$ is controllable but not $(m\!-\!1)$-controllable,
thus not branch-trim by Corollary~\ref{C-ContrBranchTrim}, thus not a product trellis.
\end{proof}

For example, the observable trellis of Fig.~\ref{F-MergTrellis}(a) is not $t$-observable for any $t \le m$;
its dual in Fig.~\ref{F-MergTrellis}(b) is neither state-trim nor branch-trim.
For another example, the observable trellis of Fig.~\ref{F-BCJR}(a) is $m$-observable but not $(m\!-\!1)$-observable;
its dual in Fig.~\ref{F-BCJR}(b) is not branch-trim.

We remark that Theorem~\ref{T-DualProd}(b) may be extended to unobservable proper trellises as follows.
(By Theorem~\ref{T-trimproper}, the dual of an improper trellis is not trim, hence not reduced.)
We have to redefine $[j,k)$-observability as follows.
Given a trellis~$\cR$ with behavior~$\Bf$, the \emph{unobservable state configuration space} is defined as
$\cS^{\mathrm{u}} = \Bf_{:\cS} = \{\sb \in \cS : (\zerob, \sb) \in \Bf\}$~\cite{FGL12}.
Then~$\cR$ is called $[j,k)$-observable if the unobservable transition space~$\cU^{[j,k)}$ equals the projection
$(\cS^{\mathrm{u}})_{|\cS_j \times \cS_k}$; i.e., if~$\cR$ does not contain any valid $[j,k)$-paths with
$\ab^{[j,k)} = \zerob^{[j,k)}$ other than those that lie on valid unobservable trajectories
$(\zerob, \sb) \in \Bf$.
Then we can show that if~$\cR$ is proper but not $(m\!-\!1)$-observable in this sense, then~$\cR^\circ$ is not a product trellis.

\section{Constructing Reductions}\label{S-LocIrr}
This section begins to establish the main results of this paper concerning whether a trellis~$\cR$ is $[k,j)$-reducible.
First, we shall give sufficient conditions for the $[k,j)$-irreducibility of~$\cR$, i.e., for~$\cR$ not having any
$[k,j)$-reduction other than itself, up to isomorphism.
Second, when these conditions are not met, but another auxiliary condition is met, we will construct a strict and
conservative $[k,j)$-reduction.

Due to Theorem~\ref{T-TPOC} we may assume without loss of generality that~$\cR$ is TPOC.

\begin{theo}\label{T-jkObsIrr}
Let $j\neq k{}\!\mod{}m$. Suppose that both~$\cR$ and~$\cR^{\circ}$ are TPOC and $[j,k)$-observable.
Then both~$\cR$ and~$\cR^{\circ}$ are $[k,j)$-irreducible.
\end{theo}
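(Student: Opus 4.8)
The plan is to show that any $[k,j)$-reduction $\tilde{\cR}$ of $\cR$ must have state spaces of the same dimension as $\cR$ at every internal index, and moreover be isomorphic to $\cR$. Recall a $[k,j)$-reduction only alters the fragment $\cR^{[k,j)}$—its internal state spaces $\cS_i$ for $i \in (k,j)$ and the constraint codes $\cC_k, \ldots, \cC_{j-1}$—while leaving the complementary fragment $\cR^{[j,k)}$ (and in particular the boundary spaces $\cS_j, \cS_k$) untouched and preserving the realized code $\cC$. The key structural idea is that $[j,k)$-observability of both $\cR$ and $\cR^\circ$ pins down the transition behavior across the unchanged fragment, and this must be matched by the reduced fragment on the other side.

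First I would exploit $[j,k)$-observability of $\cR^\circ$, which by Theorem~\ref{T-FragObsContr} is equivalent to $[j,k)$-controllability of $\cR$, i.e.\ $\cT^{[j,k)} = \cS_j \times \cS_k$. By Theorem~\ref{T-ContrTrim}(a) this forces $\cR$ to be $[k,j)$-trim: every valid $[k,j)$-path lies on a valid trajectory. Dually, $[j,k)$-observability of $\cR$ itself means $\cU^{[j,k)} = \{(0,0)\}$, and via Theorem~\ref{T-FragObsContr} this controls the unobservable transition space of the dual fragment on the complementary side. The strategy is that, since the reduction fixes $\cR^{[j,k)}$ entirely, the external behavior $\cC^{[j,k)}$ and its transition space $\cT^{[j,k)} = \cS_j \times \cS_k$ are fixed. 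Because the total code $\cC$ is preserved and $\Bf$ is the ``gluing'' of the two fragments along $\cS_j = \cS_{j+m}$ and $\cS_k$, the reduced fragment $\tilde{\cR}^{[k,j)}$ must realize exactly the same external behavior $\cC^{[k,j)}$ as the original: its transition space and unobservable transition space at the boundaries $(\cS_k, \cS_j)$ are determined by compatibility with $\cC^{[j,k)}$ and the requirement that the glued behavior projects onto $\cC$.

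The crux is then a minimality-and-dimension argument. Since $\cR$ and $\cR^\circ$ are both TPOC (in particular trim and proper at each state space) and the fragment $\cR^{[k,j)}$ is cycle-free, it is a minimal realization of its external behavior $\cC^{[k,j)}$ by the cycle-free minimality criterion (trim and proper constraint codes, from \cite[Thm.~3]{FGL12}). Any $[k,j)$-reduction $\tilde{\cR}$ realizing the same code $\cC$ with the fixed complementary fragment $\cR^{[j,k)}$ forces $\tilde{\cR}^{[k,j)}$ to realize the same external behavior $\cC^{[k,j)}$; but a reduction only allows internal state spaces of \emph{at most} the same dimension, and minimality of the original fragment forbids any strict decrease. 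Hence the internal state dimensions are unchanged, and the state-space isomorphisms of the minimal realization give the required isomorphism $\tilde{\cR} \cong \cR$. By the duality noted after Definition~\ref{D-LocRed} ($\cR$ is $t$-reducible iff $\cR^\circ$ is), the identical argument applied to $\cR^\circ$—which is TPOC and $[j,k)$-observable by hypothesis—yields $[k,j)$-irreducibility of $\cR^\circ$.

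The step I expect to be the main obstacle is establishing rigorously that the fixed external behavior $\cC^{[j,k)}$ together with the preserved code $\cC$ actually \emph{determines} the external behavior $\cC^{[k,j)}$ that the reduced fragment must realize, so that the minimality/dimension argument can be brought to bear. This is where $[j,k)$-observability is essential: it guarantees that the boundary state pairs $(s_k, s_j)$ are faithfully recoverable from the symbol data, so that no reduction can ``hide'' state distinctions by exploiting unobservable all-zero paths. Making precise how the two fragments glue—tracking the identification of $\cS_{j+m}$ with $\cS_j$ and the role of the transition and unobservable transition spaces at \emph{both} boundaries—is the delicate bookkeeping that the proof must handle carefully; everything else reduces to the cycle-free minimality theorem and the fragment duality of Theorem~\ref{T-FragObsContr}.
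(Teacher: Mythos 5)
Your proposal is correct and follows essentially the same route as the paper: fragment duality (Theorem~\ref{T-FragObsContr}) plus Theorem~\ref{T-ContrTrim} to convert $[j,k)$-observability of $\cR^\circ$ into $[k,j)$-trimness of $\cR$, then equality of the external behaviors $\cC^{[k,j)}=\tilde{\cC}^{[k,j)}$, then the cycle-free minimality theorem of \cite[Thm.~3]{FGL12} to conclude isomorphism. The step you flag as the main obstacle is closed in the paper exactly as you anticipate: lift a $[k,j)$-path to a full trajectory via $[k,j)$-trimness, use observability of $\tilde{\cR}$ (inherited because the $[j,k)$-fragment is unchanged) to obtain a matching $\tilde{\cR}$-trajectory with the same symbol sequence, and use triviality of the unobservable transition space $\cU^{[j,k)}$ to force the boundary states $s_j,s_k$ to agree.
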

\begin{proof}
It suffices to consider~$\cR$.
Let $\tilde{\cR}$ be a $[k,j)$-reduction of~$\cR$.
Without loss of generality we may assume that~$\tilde{\cR}$ is trim and proper at time $k+1,\ldots,j-1$ since otherwise
we may reduce further.
We must show that~$\cR$ and~$\tilde{\cR}$ are isomorphic.

Since $\tilde{\cR}^{[j,k)} = \cR^{[j,k)}$, the trellis~$\tilde{\cR}$ is $\jk$-observable and thus observable.
Furthermore,~$\tilde{\cR}$ is $\jk$-controllable, and thus $[k,j)$-trim by Theorem~\ref{T-ContrTrim}.
Using trimness and $[k,j)$-trimness of~$\cR$, we conclude that~$\cR$ is state-trim at times~$k$ and~$j$.
But then $\tilde{\cR}^{[j,k)} = \cR^{[j,k)}$ and $\jk$-observability imply that~$\tilde{\cR}$
is also state-trim, thus trim, at times~$k$ and~$j$.

Thus it remains to show that the fragments~$\cR^{[k,j)}$ and~$\tilde{\cR}^{[k,j)}$ are isomorphic.
In order to do so we show first $\cC^{[k,j)}=\tilde{\cC}^{[k,j)}$.
Let $(s_k,\ab^{[k,j)}, s_j)\in\cC^{[k,j)}$.
Then there is a path $(\ab^{[k,j)},\sb^{[k,j)})\in\Bf^{[k,j)}$, and
by $[k,j)$-trimness this path lies on a trajectory in~$\cR$, say $(\ab, \sb) \in \Bf$.
Observability of~$\tilde{\cR}$ implies that there is a unique trajectory $(\ab, \tilde{\sb})$ in 
the behavior~$\tilde{\Bf}$
of~$\tilde{\cR}$, and $\tilde{\cR}^{[j,k)} = \cR^{[j,k)}$ along with
$\jk$-observability yields $(\ab^{[j,k)}, \sb^{[j,k]})=(\ab^{[j,k)}, \tilde{\sb}^{[j,k]})$.
Hence $(s_j,s_k)=(\tilde{s}_j,\tilde{s}_k)$, and this proves that $(s_k,\ab^{[k,j)}, s_j)\in\tilde{\cC}^{[k,j)}$.
In the same way one concludes that $\tilde{\cC}^{[k,j)}\subseteq\cC^{[k,j)}$.

All of this shows that~$\cR^{[k,j)}$ and~$\tilde{\cR}^{[k,j)}$ are both
trim and proper cycle-free trellis fragments that realize the same external behavior.
By~\cite[Thm.~3]{FGL12} they are both minimal, and must be isomorphic.
But then~$\cR$ and~$\tilde{\cR}$ are isomorphic, and this concludes the proof.
\end{proof}

We note in passing that Theorem~\ref{T-jkObsIrr} is also true in the case where
$j=k{}\!\mod{}m$, in which it reproduces earlier results.
On the one hand, if~$\cR$ and~$\cR^\circ$ are $[j,j+m)$-observable, hence $[j,j+m)$-controllable,
then the proof of Corollary~\ref{C-ContrTrim} implies that they are both state-trim at time~$j$,
and this may be regarded as $[j+m,j)$-irreducibility (recall the equality constraint~$\cC_=$).
On the other hand, we have seen already that~$\cR$ and~$\cR^{\circ}$ are $[j+m,j)$-observable if and only
if~$\cS_j=\{0\}$;
in this case they are conventional TPOC trellises, hence minimal, and thus irreducible on any interval.

\medskip

Next, we recall that, as we have already seen in Example~\ref{E-NonMergDual} and Corollary~\ref{C-ContrBranchTrim}, a
controllable trellis that is not $(m\!-\!1)$-controllable is not branch-trim, and thus has a conservative $1$-reduction
consisting of branch-trimming some constraint code, as in Fig.~\ref{F-BCJR2}(b).
Dually, an observable trellis that is not $(m\!-\!1)$-observable has a non-conservative $1$-reduction consisting of
branch-expanding some constraint code, as in Fig.~\ref{F-BCJR2}(a).
But this ``reduced" trellis must be unobservable, and hence may be state-trimmed so as to make the reduction both strict and
conservative, as in Fig.~\ref{F-BCJR3}(a).
We record these observations as a lemma:

\begin{lemma}\label{L-m1UnObs}
If a trellis $\cR$ of length~$m$ is observable but not $(m\!-\!1)$-observable, then it has a strict and conservative $2$-reduction.
Its dual is a strict and conservative $2$-reduction of~$\cR^{\circ}$.
\end{lemma}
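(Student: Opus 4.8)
The plan is to make precise the informal argument already sketched in the paragraph immediately preceding the lemma, namely that an observable but not $(m\!-\!1)$-observable trellis can be reduced in two steps: first a non-conservative $1$-reduction (a branch expansion) that destroys observability, followed by a state-trimming that restores conservativeness and makes the net operation strict. By duality it suffices to exhibit the reduction on $\cR$ itself; the dual statement about $\cR^\circ$ then follows automatically from the remark in Section~\ref{S-Basics} that a trellis is (strictly) $t$-reducible if and only if its dual is.

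First I would invoke Corollary~\ref{C-ContrBranchTrim} in its dual form: since $\cR$ is observable but not $(m\!-\!1)$-observable, the dual $\cR^\circ$ is controllable but not $(m\!-\!1)$-controllable, hence not branch-trim. Thus there is a constraint code $\cC_i^\perp$ of $\cR^\circ$ containing a branch that lies on no valid trajectory of $\cR^\circ$; branch-trimming it yields a conservative $1$-reduction of $\cR^\circ$ (its constraint dimension strictly drops), realizing the same dual code. Dualizing, this is exactly a $1$-reduction of $\cR$ in which the constraint code $\cC_i$ is replaced by a strictly larger supercode $\tilde\cC_i$ (a branch expansion), without changing the code $\cC$. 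This mirrors the passage from Fig.~\ref{F-BCJR}(a) to Fig.~\ref{F-BCJR2}(a) in Example~\ref{E-NonMergDual}. The expanded trellis $\tilde\cR$ still realizes $\cC$ but can no longer be observable: adding genuinely new branches at time $i$ must create a nonzero valid trajectory $(\zerob,\sb)$, since otherwise the behavior projected to symbols would already have forced those branches to be redundant.

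Next I would apply the trimming procedure of Remark~\ref{R-TrimUnObs} to this unobservable trellis $\tilde\cR$. Picking a nonzero state $s_\ell$ on the unobservable trajectory $(\zerob,\sb)$ and trimming $\cS_\ell$ to a complement $\cT_\ell$ with $\cT_\ell\oplus\inner{s_\ell}=\cS_\ell$ produces a strict and conservative $2$-reduction of $\tilde\cR$, realizing the same code $\cC$; this is the step illustrated in the passage to Fig.~\ref{F-BCJR3}(a). Composing the two operations, the final trellis has a strictly smaller state space at time~$\ell$ than $\cR$, so the net operation is a strict $2$-reduction of $\cR$. I would then verify conservativeness of the composite directly by the dimension bookkeeping already recorded in Remark~\ref{R-TrimUnObs}: in the trimming step the two adjacent constraint-code dimensions each drop by one, which offsets the increase incurred in the branch-expansion step, so no constraint code in the final trellis is larger than in~$\cR$.

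The main obstacle I anticipate is the bookkeeping needed to guarantee that the composite reduction is genuinely conservative — the first step \emph{increases} a constraint dimension while only the second step decreases dimensions, and one must check that these decreases land on the right constraint codes and are large enough to restore $\dim\tilde{\tilde\cC}_i\le\dim\cC_i$ for every~$i$. Here I would lean on the precise accounting in Remark~\ref{R-TrimUnObs}, noting that the expanded branch at $\cC_i$ is precisely what creates the zero-input trajectory that the subsequent trimming removes, so the dimension increase and the two unit decreases are linked through the same state $s_\ell$; a careful choice of $s_\ell$ on the newly-created unobservable trajectory ensures the cancellation. Once this is in place the isomorphism-class comparison is routine, and the dual assertion is immediate from dualization.
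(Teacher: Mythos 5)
Your proposal follows the paper's proof essentially step for step: branch-trim the dual via Corollary~\ref{C-ContrBranchTrim}, dualize to a branch expansion of $\cC_i$, conclude the expanded trellis is unobservable, and then trim via Remark~\ref{R-TrimUnObs} at a state adjacent to the expanded constraint code so that the unit decreases offset the unit increase. The one soft spot is your justification that the expanded trellis $\tilde{\cR}$ must be unobservable: "otherwise the branches would be redundant" is not an argument, since a trellis may well contain branches lying on no valid trajectory. The paper closes this by applying Theorem~\ref{T-ContrTest} to $\tilde{\cR}^{\circ}$ --- the constraint dimension drops while $\Bf^{\circ}$ and the state spaces are unchanged, so equality fails and $\tilde{\cR}^{\circ}$ is uncontrollable, hence $\tilde{\cR}$ is unobservable; with that substitution your argument is complete.
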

\begin{proof}
The dual trellis $\cR^\circ$ is controllable but not $(m\!-\!1)$-controllable, and thus by Corollary~\ref{C-ContrBranchTrim}
is not branch-trim at some constraint code~$(\cC_i)^\perp$.
Replacing $(\cC_i)^\perp$ by a suitably branch-trimmed~$(\tilde{\cC}_i)^\perp$ reduces the dimension of~$(\cC_i)^\perp$ by one,
without changing the realized code~$\cC^\perp$.
Dually, replacing~$\cC_i$ by~$\tilde{\cC}_i$ expands~$\cC_i$ by one dimension, without changing the realized code~$\cC$.
Denote the resulting realizations by~$\tilde{\cR}^\circ$ and~$\tilde{\cR}$, respectively.

By Theorem~\ref{T-ContrTest}, $\tilde{\cR}^\circ$ must be uncontrollable, since we have reduced the dimension of a
constraint code without changing~$\Bf$ or~$\cS$.
Hence~$\tilde{\cR}$ is unobservable.
As shown in Remark~\ref{R-TrimUnObs}, we can trim~$\tilde{\cR}$ at any state space without changing the code~$\cC$.
Trimming the state space~$\cS_i$  reduces the dimensions of~$\cC_{i-1},\, \cS_i,$ and $\tilde{\cC}_i$ by one,
thus achieving a strict and conservative $2$-reduction.
By construction, the dual reduction is also strict and conservative.
\end{proof}

We now generalize Lemma~\ref{L-m1UnObs} to trellises that are not $(m\!-\!t)$-observable, provided that the trellis satisfies a
certain technical condition.
Without loss of generality, we may assume that the trellis is not $[0,m\!-\!t)$-observable.

\begin{theo}\label{T-ZeroRun}
Let $2 \leq t \leq m - 1$.
Let~$\cR$ be a TPOC trellis of length $m$ that is not $[0,m-t)$-observable;
i.e.,~$\cU^{[0,m-t)}$ is nontrivial, so (by properness) there is some $s_0 \neq 0,\, s_{m-t} \neq 0$ such that there is an
unobservable path from~$s_0$ to~$s_{m-t}$ in $\cR^{[0,m-t)}$.
Suppose~$\cR$ satisfies one of the following two conditions:
\\[.5ex]
\underline{Condition A:} In the fragment~$\cR^{[m-t,m-1)}$, there is no valid path from~$s_{m-t}$ to~$0 \in \cS_{m-1}$.
\\[.5ex]
\underline{Condition A$'$:} In the fragment~$\cR^{[m-t+1,0)}$, there is no valid path from $0 \in \cS_{m-t+1}$ to~$s_0$.
\\[.5ex]
Then~$\cR$ has a conservative $t$-reduction and a strict and conservative $(t+1)$-reduction.
For each of these reductions the dual is a reduction of the same type of the dual trellis~$\cR^{\circ}$.
\end{theo}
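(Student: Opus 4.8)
The plan is to generalize the expand-then-trim mechanism of Lemma~\ref{L-m1UnObs} from a single constraint code to the entire complementary fragment~$\cR^{[m-t,0)}$. First I would record the data handed to us: since $\cU^{[0,m-t)}$ is nontrivial and $\cR$ is proper, there is an unobservable path $(\zerob^{[0,m-t)},s_0,s_1,\ldots,s_{m-t})\in\Bf^{[0,m-t)}$ with both endpoints nonzero, $s_0\neq 0\neq s_{m-t}$. Because $\cR$ is observable, this path cannot close up into a valid trajectory; equivalently, there is no zero-symbol return path from $s_{m-t}$ to $s_0$ across~$\cR^{[m-t,0)}$. The idea is to \emph{manufacture} such a return path by a branch-expansion confined to~$\cR^{[m-t,0)}$ (routing $s_{m-t}$ back to $s_0$ through the always-present branches $(0,0,0)\in\cC_i$, adding an entry branch at~$\cC_{m-t}$ and an exit branch at~$\cC_{m-1}$), thereby producing a nontrivial unobservable configuration $(\zerob,\sb^{\ast})$ with $\sb^{\ast}$ nonzero on $\cS_0,\ldots,\cS_{m-t}$. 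The expanded trellis is then unobservable \emph{by construction}, so Remark~\ref{R-TrimUnObs} applies and lets me trim a state at which $\sb^{\ast}$ is nonzero without changing the realized code.

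The two claimed reductions should then emerge from bookkeeping the indices and dimensions. The branch-expansion and the compensating trimming affect only the boundary codes of the fragment together with one state; when a single boundary code must genuinely be modified the affected window has length~$t$, yielding the conservative $t$-reduction, whereas forcing a \emph{strict} state decrease requires a trim at the boundary state~$\cS_{m-t}$, which draws in the neighbouring code~$\cC_{m-t-1}$ and hence enlarges the window to length $t+1$, yielding the strict and conservative $(t+1)$-reduction. I expect that certifying conservativeness here is delicate: every dimension added by an expansion must be cancelled by a trimming (or re-expressed through a merge) touching the \emph{same} constraint code, so the choice of which state to trim, and in which order, has to be arranged so that no $\tilde{\cC}_i$ ends up larger than~$\cC_i$. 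This accounting, rather than the existence of the reduction, is one of the two main obstacles.

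The other, and deeper, obstacle is verifying that the branch-expansion does not change~$\cC$ --- and this is exactly where Conditions~A and~A$'$ are indispensable. A spurious codeword can be created only by combining a \emph{new} branch with \emph{old} branches into a trajectory carrying nonzero symbols, and there are precisely two mirror-image ways for this to happen: leaving the new entry branch at $0\in\cS_{m-t+1}$ and reaching~$s_0$ through old branches over~$\cR^{[m-t+1,0)}$, or reaching $0\in\cS_{m-1}$ through old branches over~$\cR^{[m-t,m-1)}$ before taking the new exit branch to~$s_0$. Condition~A forbids the second route (no valid path from~$s_{m-t}$ to $0\in\cS_{m-1}$ in~$\cR^{[m-t,m-1)}$) and Condition~A$'$ forbids the first, so under either hypothesis the only new trajectory added is the intended zero-symbol one, and the realized code is preserved. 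The core of a full proof is this case analysis, carried out carefully enough to cover trajectories that use \emph{both} new branches as well.

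Finally, the ``same type'' statement for the dual follows from the self-duality of the operations involved, so no separate argument is needed. Branch-expansion of~$\cC_i$ dualizes to branch-trimming of~$\cC_i^{\perp}$, and state-trimming of~$\cS_i$ dualizes to state-merging of~$\hat{\cS}_i$ (Section~\ref{S-Basics}); as already computed in Remark~\ref{R-TrimUnObs}, the compensating dimension changes match on the two sides, so a strict and conservative reduction of~$\cR$ over a window of a given length corresponds to a strict and conservative reduction of~$\cR^{\circ}$ over the same window. Since reducibility is invariant under dualization, it suffices to run the construction on whichever of~$\cR$, $\cR^{\circ}$ is more convenient (Theorem~\ref{T-ContrTest} may be invoked on the dual to recertify the uncontrollability that mirrors the unobservability produced above) and then read off the dual reduction.
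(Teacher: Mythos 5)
There is a genuine gap, and it sits exactly where you route the manufactured return path. You send $s_{m-t}$ back to $s_0$ ``through the always-present branches $(0,0,0)$'', i.e.\ through the \emph{existing} zero states, adding only an entry branch at $\cC_{m-t}$ and an exit branch at $\cC_{m-1}$. With that choice, preservation of the code really does require forbidding \emph{both} escape routes you identify — but the theorem assumes only \emph{one} of Conditions A, A$'$. Under Condition A alone, nothing prevents a trajectory from taking your new entry branch $(s_{m-t},0,0)$ into $0\in\cS_{m-t+1}$ and then following \emph{old} branches with nonzero symbols across $\cR^{[m-t+1,0)}$ back to some state of $\cS_0$, closing up into a valid trajectory whose symbol sequence need not lie in $\cC$. (Your case analysis is also too narrow: a spurious trajectory can close at \emph{any} $x_0\in\cS_0$, not only at $s_0$, so even assuming both conditions would not cover all cases.) The paper avoids this entirely by adjoining \emph{fresh} states $\tilde{s}_i\notin\cS_i$ at times $m-t+1,\ldots,m-1$, setting $\cS_i^+=\cS_i\oplus\langle\tilde{s}_i\rangle$ and $\cC_i^+=\cC_i\oplus\langle(\tilde{s}_i,0,\tilde{s}_{i+1})\rangle$; because the new path lives in a direct-sum complement, every valid trajectory of $\cR^+$ decomposes as an old trajectory plus a scalar multiple of $(\zerob,\tilde{\sb})$, so $\Bf^+=\Bf\oplus\langle(\zerob,\tilde{\sb})\rangle$ and the code is preserved \emph{unconditionally}. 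Conditions A/A$'$ play a different role than the one you assign them.

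That role is precisely the second obstacle you flag but do not resolve: conservativeness. Your expansion enlarges the two boundary codes $\cC_{m-t}$ and $\cC_{m-1}$ by one dimension each, and a single compensating state-trim (via Remark~\ref{R-TrimUnObs}) shrinks only the two constraint codes adjacent to the trimmed state, so at least one expanded code stays strictly larger than the original — and if you trim at $\cS_0$ you even touch $\cC_0$, which lies outside the permitted window. In the paper, Condition A guarantees (after the expansion) that no valid path leads from $\tilde{s}_{m-t}$ to $0\in\cS_{m-1}^+$, hence that $\cS_{m-1}^+$ can be trimmed to a complement $\cX$ of $\langle\tilde{s}_{m-1}\rangle$ avoiding everything reachable from $\tilde{s}_{m-t}$; non-trimness then propagates backwards through $\cS_{m-2}^+,\ldots,\cS_{m-t+1}^+$, so \emph{every} expanded state space and \emph{every} expanded constraint code loses one dimension. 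That cascade is what makes the $t$-reduction conservative, and one further trim at $\cS_{m-t}$ is what makes the $(t+1)$-reduction strict. Your dualization paragraph is fine, but without the fresh-state expansion and the cascading trim the two central claims of the theorem are not established.
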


The $t$-reduction stated in the theorem is the main step of the reduction process and leads immediately to the
strict $(t+1)$-reduction.
Both parts will be used in Theorem~\ref{T-Charac-chi}.

We give a sketch of the proof and outline the reduction procedure.
The details are carried out in Appendix~\ref{S-AppA}.

\noindent\textsc{Sketch of Proof}:
By assumption there exists an unobservable valid path from~$s_0 \neq 0$ to~$s_{m-t} \neq 0$ in $\cR^{[0,m-t)}$.
However, since $\cR$ is observable, there can be no unobservable valid path from~$s_{m-t}$ to~$s_0$ in the
complementary fragment $\cR^{[m-t,0)}$.

\noindent\underline{Step~1:}
We expand $\cR^{[m-t,0)}$ so that it contains an unobservable path $(\zerob^{[m-t,0)}, \tilde{\sb}^{[m-t,0]})$
from~$\tilde{s}_{m-t}:=s_{m-t} $ to~$\tilde{s}_0:=s_0$ via a sequence of new states $\tilde{s}_i \notin \cS_i$ for $i \in (m-t,0)$.
The internal state spaces and the constraint codes of $\cR^{[m-t,0)}$ are expanded to
\[
  \cS^+_i = \cS_i \oplus \langle\tilde{s}_i\rangle\text{ for } i \in (m-t,0)\ \text{ and }\
  \cC^+_i = \cC_i \oplus \langle (\tilde{s}_i, 0, \tilde{s}_{i+1})\rangle\text{ for } i \in [m-t,0).
\]
It is straightforward to show that the internal behavior of the fragment~$\cR^{[m-t,0)}$
consequently expands to $(\Bf^{[m-t,0)})^+ = \Bf^{[m-t,0)} \oplus \langle (\zerob^{[m-t,0)}, \tilde{\sb}^{[m-t,0]}) \rangle$.

By construction, the expanded trellis $\cR^+$ is unobservable and has a valid trajectory $(\zerob, \tilde{\sb})$ that
passes through~$s_0$ and~$s_{m-t}$.
It is straightforward to show that its behavior is $\Bf^+ = \Bf \oplus \langle(\zerob, \tilde{\sb})\rangle$,
so it continues to realize the same code.

\noindent\underline{Step~2:}
Assume that~$\cR$ satisfies Condition~A. Then this guarantees (see Appendix~\ref{S-AppA}) that we can find a strict subspace
$\tilde{\cS}_{m-1} \subset \cS^+_{m-1}$ such that there is no valid path from~$s_{m-t}$ to any state in~$\tilde{\cS}_{m-1}$.
We trim $\cS^+_{m-1}$ to $\tilde{\cS}_{m-1}$.

\noindent\underline{Step~3:}
By Condition~A, the resulting trellis will not be trim at times $m-2,\,m-3,\ldots, m-t$, and so one can successively trim the subspaces
$\cS^+_{m-2}, \ldots,\, \cS^+_{m-t+1},\,\cS_{m-t}$ by at least one dimension.
This will also reduce the adjacent constraint codes $\cC_{m-1}^+,\,\cC_{m-2}^+,\ldots,\,\cC_{m-t}^+,\,\cC_{m-t-1}$ by at least one dimension.
All of this results in a strict and conservative $[m-t-1,0)$-reduction whose dual is of the same form.
Trimming only the state spaces $\cS^+_{m-2}, \ldots, \cS^+_{m-t+1}$ and the adjacent constraint codes
$\cC_{m-1}^+,\,\cC_{m-2}^+,\ldots,\,\cC_{m-t}^+$ results in the stated conservative $t$-reduction.
\hfill$\Box$

\medskip

We illustrate this procedure with an example.

\noindent{\bf Example~\ref{E-NonMergDual}$'$}
The trellis of Fig.~\ref{F-BCJR3}(a) is TPOC but not $[0,3)$-observable.
It satisfies Condition~A, in that there is no transition in~$\cC_4$ from $01 \in \cS_3$ to
$00 \in \cS_4$.
Figure~\ref{F-BCJRRed} illustrates the initial expansion of~$\cS_4$ to $\cS_4^+ = \langle 1, \tilde{s} \rangle$ and the
corresponding expansion of the adjacent constraint codes,~$\cC_3$ and~$\cC_4$.

\pagebreak

\begin{figure}[ht]
\centering
    \includegraphics[height=3cm]{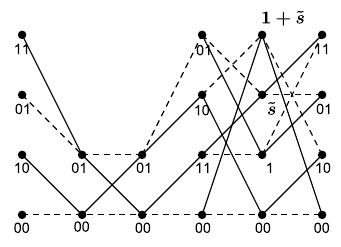}
    \caption{Expanding the trellis in Fig.~\ref{F-BCJR3}(a)}
    \label{F-BCJRRed}
\end{figure}

\noindent Next, we identify $\tilde{\cS}_{4} = \langle 1 + \tilde{s} \rangle$ as a strict subspace of~$\cS_4^+$ such that there is no
valid path from $01 \in \cS_3$ to $\tilde{\cS}_{4}$.
Then $\cS_4^+$ is trimmed to $\tilde{\cS}_{4}$, as in Fig.~\ref{F-BCJRRed2}, which also reduces $\cC_3^+$ and $\cC_4^+$.
\begin{figure}[ht]
\centering
    \includegraphics[height=3cm]{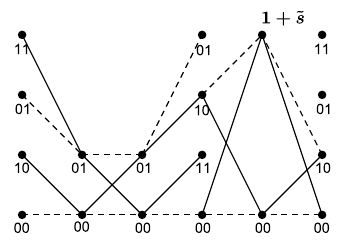}
    \\
    \caption{Trimming the trellis in Fig.~\ref{F-BCJRRed}}
    \label{F-BCJRRed2}
\end{figure}

\noindent The resulting trellis is guaranteed not to be trim at~$\cS_3$, so~$\cS_3$ and~$\cC_2$ may be trimmed to achieve a
strict and conservative 3-reduction.
Moreover, in this example, if we trim all states and branches that are not on any valid trajectories, then we reach the
conventional (hence minimal) trellis shown in Fig.~\ref{F-BCJRRed3}.
\begin{figure}[ht]
\centering
    \includegraphics[height=1.7cm]{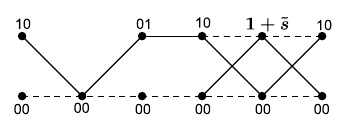}
    \\
    \caption{Trimming the trellis in Fig.~\ref{F-BCJRRed2}}
    \label{F-BCJRRed3}
\end{figure}

\vspace*{-5.5ex}
\QED

\section{$t$-Irreducibility}\label{S-tRed}
Now we are in a position to discuss $t$-irreducibility for all values of~$t$ not larger than a certain invariant of the code.
We will see that it is characterized by $(m-t)$-observability and $(m-t)$-controllability.
Thereafter we will illustrate the open problems that arise for larger values of~$t$.
Due to Theorem~\ref{T-TPOC} we may restrict attention to TPOC trellises.

We begin with $1$-reductions, that is, the replacement of one constraint code.
The following intrinsic characterization of $1$-irreducibility shows in particular that every $1$-reduction
is either a branch-trimming  (replacement of a constraint code by a strict subcode) or a branch-addition
(replacement of a constraint code by a strict supercode).

\pagebreak

\begin{theo}\label{T-1Red}
Let~$\cR$ be TPOC. Then the following are equivalent.
\begin{romanlist}
\item $\cR$ and~$\cR^{\circ}$ are $1$-irreducible,
\item $\cR$ and~$\cR^{\circ}$ are branch-trim,
\item $\cR$ and~$\cR^{\circ}$ are $(m\!-\!1)$-observable and $(m\!-\!1)$-controllable.
\end{romanlist}
Furthermore, if~$\cR$ is $1$-reducible, then it has a strict and conservative $2$-reduction~$\tilde{\cR}$
whose dual~$\tilde{\cR}^{\circ}$ is a strict and conservative $2$-reduction of~$\cR^\circ$.
\end{theo}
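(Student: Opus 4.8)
The plan is to reduce this final statement entirely to Lemma~\ref{L-m1UnObs}, which already supplies the constructive branch-expansion-and-trimming procedure; the only work is to verify its hypotheses. I would therefore prove the equivalences (i)--(iii) first, so that they are available here.

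First I would note that since~$\cR$ is $1$-reducible, condition (i) fails, and hence by the equivalence (i)$\,\Leftrightarrow\,$(ii) condition (ii) fails as well: $\cR$ and~$\cR^{\circ}$ are \emph{not} both branch-trim, so at least one of them is not branch-trim. I would also recall that $1$-reducibility is invariant under dualization, so the hypothesis is symmetric in~$\cR$ and~$\cR^{\circ}$; and the asserted conclusion --- a mutually dual pair of strict and conservative $2$-reductions --- is likewise symmetric. Hence I may assume without loss of generality that it is the dual~$\cR^{\circ}$ that fails to be branch-trim.

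Next I would invoke the dual form of Corollary~\ref{C-ContrBranchTrim}, namely that an observable trellis~$\cR$ is $(m\!-\!1)$-observable if and only if~$\cR^{\circ}$ is branch-trim. Since~$\cR$ is observable (being TPOC) but~$\cR^{\circ}$ is not branch-trim, this says exactly that~$\cR$ is observable but not $(m\!-\!1)$-observable. That is precisely the hypothesis of Lemma~\ref{L-m1UnObs}, so applying the lemma to~$\cR$ produces a strict and conservative $2$-reduction~$\tilde{\cR}$ of~$\cR$ whose dual is a strict and conservative $2$-reduction of~$\cR^{\circ}$, which is the claim.

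The hard part is not any new construction --- that is entirely delegated to Lemma~\ref{L-m1UnObs} --- but the logical bookkeeping. I expect the main points needing care to be: ensuring the equivalences (i)--(iii) are proved before they are used here, so the argument does not become circular with the rest of the theorem; selecting the correct (dual) direction of Corollary~\ref{C-ContrBranchTrim} to translate ``$\cR^{\circ}$ not branch-trim'' into ``$\cR$ not $(m\!-\!1)$-observable''; and justifying the ``without loss of generality'' step by the invariance of both hypothesis and conclusion under the dualization $\cR\leftrightarrow\cR^{\circ}$.
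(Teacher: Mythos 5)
Your treatment of the final claim is correct and coincides with the paper's: the paper likewise deduces from $1$-reducibility that $\cR$ or $\cR^{\circ}$ is $(m\!-\!1)$-unobservable and then invokes Lemma~\ref{L-m1UnObs}; your detour through (ii) and the dual form of Corollary~\ref{C-ContrBranchTrim} is an equivalent way of making that translation, and your justification of the ``without loss of generality'' step via the symmetry of hypothesis and conclusion under dualization is sound.

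The genuine gap is that the three-way equivalence (i)$\,\Leftrightarrow\,$(ii)$\,\Leftrightarrow\,$(iii) --- the core of the theorem --- is never proved: you announce that you ``would prove the equivalences first'' and then use them (your very first step, passing from $\neg$(i) to $\neg$(ii), already requires the implication (ii)$\,\Rightarrow\,$(i)), but no argument is supplied. Two of the three implications are short: (i)$\,\Rightarrow\,$(ii) holds because a non-branch-trim constraint code of $\cR$ or $\cR^{\circ}$ can be replaced by the strictly smaller subcode of branches lying on valid trajectories, a $1$-reduction not isomorphic to the original; and (ii)$\,\Rightarrow\,$(iii) is exactly Corollary~\ref{C-ContrBranchTrim} applied to $\cR$ and to $\cR^{\circ}$, both of which are observable and controllable since $\cR$ is TPOC. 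The substantial one is (iii)$\,\Rightarrow\,$(i): this does not follow from the trimness corollaries at all, but is the $t=1$ case of Theorem~\ref{T-jkObsIrr}, i.e., $[j,j+m-1)$-observability of $\cR$ and $\cR^{\circ}$ for every $j$ yields $[j+m-1,j)$-irreducibility, hence $1$-irreducibility. Without identifying Theorem~\ref{T-jkObsIrr} as the source of this implication, the equivalence --- and with it your own derivation of the ``furthermore'' part, which leans on it --- remains unsupported.
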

\begin{proof}
(i)~$\Rightarrow$~(ii) is obvious.
(ii)~$\Rightarrow$~(iii) follows from Corollary~\ref{C-ContrBranchTrim}, and
(iii)~$\Rightarrow$~(i) is a special case of Theorem~\ref{T-jkObsIrr}.
For the last statement, let $\cR$ be $1$-reducible.
Then either the trellis or its dual is $(m\!-\!1)$-unobservable and the result follows from
Lemma~\ref{L-m1UnObs}.
\end{proof}

An example for the reduction of an $(m\!-\!1)$-unobservable trellis has been shown already in Example~\ref{E-NonMergDual}.

Recall that minimality of a trellis is defined on the basis of its state space dimensions.
From the above we obtain the following corollary, which tells us that the constraint codes of a minimal trellis
cannot be replaced by smaller (or larger) constraint codes.
In particular, minimal trellises are branch-trim, which has also recently been established by different arguments
in~\cite[Thm.~7.1]{Con12}.
One may recall that in~\cite{KoVa03} Koetter/Vardy restrict themselves to state-trim and branch-trim
trellises, so that in their terminology minimal trellises are branch-trim by definition.

\begin{cor}\label{C-TMin}
A minimal trellis is $1$-irreducible.
\end{cor}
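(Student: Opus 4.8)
The plan is to derive Corollary~\ref{C-TMin} as an immediate consequence of Theorem~\ref{T-1Red} together with the definition of minimality. The key observation is that minimality is a statement about state space dimensions, whereas $1$-irreducibility concerns the replacement of a single constraint code without altering the realized code. I would argue by contraposition: suppose~$\cR$ is minimal but \emph{not} $1$-irreducible, and show this forces the existence of a strictly smaller trellis realizing~$\cC$, contradicting minimality.

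First I would invoke the last statement of Theorem~\ref{T-1Red}, which asserts that if~$\cR$ is $1$-reducible, then it admits a \emph{strict} and conservative $2$-reduction~$\tilde{\cR}$. By Definition~\ref{D-LocRed}, a strict $2$-reduction decreases at least one state space dimension strictly (here one of the state spaces~$\cS_i$), while leaving all other state spaces unchanged and preserving the realized code~$\cC$. Thus~$\tilde{\cR}$ realizes the same code~$\cC$ and satisfies $\dim\tilde{\cS}_\ell\le\dim\cS_\ell$ for all~$\ell$ with at least one strict inequality; that is,~$\tilde{\cR}$ is strictly smaller than~$\cR$. This directly contradicts the minimality of~$\cR$, so a minimal trellis must be $1$-irreducible.

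One subtlety I would address is that Theorem~\ref{T-1Red} is stated for TPOC trellises, so I should note that by Theorem~\ref{T-TPOC} we may assume~$\cR$ is TPOC without loss of generality: if~$\cR$ fails to be TPOC it already admits a strict $2$-reduction and hence cannot be minimal. A second point worth making explicit is that the statement of Theorem~\ref{T-1Red} phrases $1$-irreducibility as a joint property of~$\cR$ and~$\cR^\circ$; but since a trellis is $1$-reducible if and only if its dual is (as remarked after Definition~\ref{D-LocRed}), failure of $1$-irreducibility for the pair is equivalent to $1$-reducibility of~$\cR$ itself, which is what the last statement of Theorem~\ref{T-1Red} requires.

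I do not anticipate a genuine obstacle here, since the corollary is essentially a packaging of the final clause of Theorem~\ref{T-1Red}. The only care needed is bookkeeping: confirming that ``strict $2$-reduction'' in the sense of Definition~\ref{D-LocRed} really does produce a trellis that is \emph{strictly smaller} in the dimension-based ordering used to define minimality, and that the reduction preserves the realized code. Both are built into the definitions, so the proof reduces to citing Theorem~\ref{T-1Red} and matching its conclusion against the minimality hypothesis.
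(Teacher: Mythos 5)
Your proposal is correct and follows exactly the route the paper intends: the corollary is obtained by contraposition from the final clause of Theorem~\ref{T-1Red} (a $1$-reducible trellis admits a strict and conservative $2$-reduction, hence a strictly smaller realization of the same code), with Theorem~\ref{T-TPOC} disposing of the non-TPOC case. Your bookkeeping remarks about strictness matching the dimension-based minimality ordering and about the primal/dual symmetry of $1$-reducibility are accurate and consistent with the paper.
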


We now consider $t$-irreducibility for $t>1$.
We first introduce the following parameter of a code~$\cC$, which derives from the ``characteristic spans'' of
Koetter/Vardy~\cite{KoVa03} (see also~\cite{GLW11,GLW11a}).

\begin{defi}\label{D-span}
The \emph{minimum span length}~$\chi(\cC)$ of a code $\cC \subseteq \cA = \prod_i \cA_i$ is the minimum length of all possible spans
of the nonzero codewords~$\ab \in \cC$, where a span of $\ab \neq \zerob$ is any interval,
possibly circular, that covers the support of $\ab$.
\end{defi}

For example, the code $\cC=\{00000,\,10110,\,11001,\,01111\}\subset\F_2^5$  has  minimum span
length \mbox{$\chi(\cC)=3$}.

Now we can formulate the following characterization of $t$-irreducibility.
The proof provides us with a constructive reduction method for $t$-reducible trellises.

\begin{theo}\label{T-Charac-chi}
Let $\min\{\chi(\cC),\,\chi(\cC^{\perp})\}>t>1$ and let~$\cR$ be a TPOC trellis of~$\cC$.
Then the following are equivalent.
\begin{romanlist}
\item $\cR$ (and thus~$\cR^{\circ}$) is $(m-t)$-observable and $(m-t)$-controllable.
\item $\cR$ (and thus~$\cR^{\circ}$) is $t$-irreducible.
\end{romanlist}
For a $t$-reducible trellis~$\cR$ we have the following cases:
\\
if~$\cR$ is $(m-t+1)$-unobservable or $(m-t+1)$-uncontrollable then~$\cR$ allows a strict and conservative $t$-reduction.
If~$\cR$ is $(m-t+1)$-observable and $(m-t+1)$-controllable, then~$\cR$ allows non-strict and conservative $t$-reduction which
gives rise to a subsequent strict and conservative $(t+1)$-reduction.
In either case, the dual process is a reduction of the same type for~$\cR^{\circ}$.
\end{theo}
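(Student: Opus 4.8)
The plan is to split the equivalence into its two implications and to extract the reduction cases as a byproduct of the harder one; by Theorem~\ref{T-TPOC} we lose nothing by assuming $\cR$ is TPOC. For (i)$\Rightarrow$(ii), suppose $\cR$ is $(m-t)$-observable and $(m-t)$-controllable. By Theorem~\ref{T-FragObsContr}, $(m-t)$-controllability of $\cR$ is the same as $(m-t)$-observability of $\cR^{\circ}$, so both $\cR$ and $\cR^{\circ}$ are $[j,k)$-observable on every interval $[j,k)$ of length $m-t$. Since a length-$t$ interval $[k,j)$ is exactly the complement of such a $[j,k)$, Theorem~\ref{T-jkObsIrr} yields $[k,j)$-irreducibility of both trellises for every such $[k,j)$, i.e.\ $t$-irreducibility. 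Condition (i) is self-dual via Theorem~\ref{T-FragObsContr} and $t$-reducibility is self-dual by the remark after Definition~\ref{D-LocRed}, which justifies the phrase ``(and thus $\cR^{\circ}$)''.

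For (ii)$\Rightarrow$(i) I argue the contrapositive and read off the reduction cases along the way. Assume $\cR$ is not $(m-t)$-observable or not $(m-t)$-controllable. Every hypothesis in play — condition (i), $t$-reducibility, the bound $\min\{\chi(\cC),\chi(\cC^{\perp})\}>t$, and the case-defining $(m-t+1)$-conditions — is invariant under passing to $\cR^{\circ}$, so I fix the convention of working with whichever of $\cR,\cR^{\circ}$ exhibits the relevant \emph{non-observability} and dualizing the resulting reduction at the end; this also accounts for the symmetric roles of $\chi(\cC)$ and $\chi(\cC^{\perp})$. The engine is Theorem~\ref{T-ZeroRun}, whose hypothesis (Condition~A or A$'$) must first be established.

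The main step, and the main obstacle, is this: if $\cR$ is TPOC with $\chi(\cC)>r$ and is not $[0,m-r)$-observable, then Condition~A or A$'$ (at parameter $r$) holds. By properness there is a nonzero unobservable $[0,m-r)$-path $U$ from some $s_0\neq 0$ to some $s_{m-r}\neq 0$. First I observe that there is \emph{no} valid path from $s_{m-r}$ to $s_0$ in the complementary fragment $\cR^{[m-r,0)}$: concatenating such a path with $U$ would give a valid trajectory vanishing on $[0,m-r)$, hence a codeword supported in the length-$r$ interval $[m-r,0)$; a nonzero such codeword contradicts $\chi(\cC)>r$, while a zero one is an unobservable trajectory with $\sb\neq 0$ (as $s_0\neq 0$), contradicting observability. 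Now if both A and A$'$ failed I would obtain valid paths $P\colon s_{m-r}\to 0\in\cS_{m-1}$ on $[m-r,m-1)$ and $Q\colon 0\in\cS_{m-r+1}\to s_0$ on $[m-r+1,0)$; extending $P$ and $Q$ by the zero branches of $\cC_{m-1}$ and $\cC_{m-r}$ turns them into valid $[m-r,0)$-paths $\bar P\colon s_{m-r}\to 0$ and $\bar Q\colon 0\to s_0$, whence, by linearity of the constraint codes, $\bar P+\bar Q$ is a valid $[m-r,0)$-path from $s_{m-r}$ to $s_0$ — the very path just excluded. Hence A or A$'$ holds, and the dual case is identical with $\cC^{\perp}$ in place of $\cC$, explaining why both span bounds are assumed.

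With Condition~A or A$'$ secured, the two cases follow from Theorem~\ref{T-ZeroRun}. In Case~A ($\cR$ is $(m-t+1)$-unobservable or $(m-t+1)$-uncontrollable), the convention above lets me assume $(m-t+1)$-unobservability, i.e.\ after a shift non-$[0,m-t+1)$-observability; applying the main step with $r=t-1$ (legitimate since $\chi(\cC)>t>t-1$) and then Theorem~\ref{T-ZeroRun} at parameter $t-1$ produces the strict and conservative $t$-reduction it promises, while the degenerate case $t=2$ is handled instead by Lemma~\ref{L-m1UnObs}. In Case~B ($\cR$ is $(m-t+1)$-observable and $(m-t+1)$-controllable), the failure in (i) sits at level $m-t$ itself, so applying the main step with $r=t$ and Theorem~\ref{T-ZeroRun} at parameter $t$ yields the conservative $t$-reduction together with the subsequent strict and conservative $(t+1)$-reduction; in both cases Theorem~\ref{T-ZeroRun} simultaneously delivers the dual reduction of $\cR^{\circ}$. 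The one remaining piece of bookkeeping I expect to be delicate after the span argument is checking that the $t$-reduction is strict in Case~A and, in Case~B, non-strict yet non-isomorphic to $\cR$ (so that it genuinely certifies $t$-reducibility for the equivalence); I would extract this from the dimension counts recorded in Theorem~\ref{T-ZeroRun}.
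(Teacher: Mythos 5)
Your proof is correct and follows essentially the same route as the paper's: (i)$\Rightarrow$(ii) via Theorem~\ref{T-jkObsIrr}, and (ii)$\Rightarrow$(i) by showing that, under the span-length hypothesis, a failure of $(m-t)$-observability forces Condition~A or~A$'$ of Theorem~\ref{T-ZeroRun} (otherwise the two paths, padded with zero branches, would close the unobservable path into a valid trajectory carrying a codeword of span length at most~$t$, or else violate observability), after which Theorem~\ref{T-ZeroRun} delivers the stated reductions and their duals. Your explicit routing of the degenerate case $t=2$ in Case~A through Lemma~\ref{L-m1UnObs} is a small point of care that the paper leaves implicit.
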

\begin{proof}
(i)~$\Rightarrow$~(ii) is Theorem~\ref{T-jkObsIrr}.
As for the converse assume without loss of generality that~$\cR^{[0,m-t)}$ is unobservable, and let~$s_0$ and~$s_{m-t}$ be the
end states of a nontrivial unobservable valid path of this fragment.
Since~$\cR$ is $t$-irreducible, Theorem~\ref{T-ZeroRun} tells us that there must exist valid paths
from~$s_{m-t}$ to $0\in\cS_{m-1}$ in the fragment $\cR^{[m-t,m-1)}$ and from $0\in\cS_{m-t+1}$ to~$s_0$ in $\cR^{[m-t+1,0)}$.
Adding those two paths (suitably appended by zero branches) results in a valid path from $s_{m-t}$ to $s_0$.
But this means that the given unobservable path lies on a valid trajectory in~$\cR$.
Hence it represents a codeword in~$\cC$ that has a span of length at most~$t$, and this
contradicts our assumption.

From Theorem~\ref{T-ZeroRun} it is clear that every $t$-reducible trellis allows a non-strict conservative
$t$-reduction that is non-trim and thus gives rise to a subsequent strict and conservative $(t+1)$-reduction.
Suppose now that~$\cR$ is $(m-t+1)$-unobservable.
Hence there exists an unobservable valid path of length~$m-t+1$, say in $\cR^{[0,m-t+1)}$.
With the aid of $\chi(\cC)>t$, we conclude that this path cannot lie on a valid trajectory, which in turn
means that Condition~A or~A$'$ of Theorem~\ref{T-ZeroRun} must be satisfied.
Consequently, by that theorem there exists a conservative $(t-1)$-reduction
and a strict and conservative $t$-reduction, whose duals are reductions of the same type.
\end{proof}

We remark that Theorem~\ref{T-Charac-chi} applies to conventional trellises as follows.
A trim and proper conventional trellis~$\cR$ and its dual~$\cR^\circ$ are minimal and therefore $t$-irreducible
for all~$t$.
Hence by Theorem~\ref{T-Charac-chi}, $\cR$ and $\cR^\circ$ must be $(m-t)$-observable for all
$t < \min\{\chi(\cC), \chi(\cC^\perp)\}$.
But this is obvious, since in a minimal conventional trellis an unobservable path of length~$m-t$ would imply a nonzero
codeword with circular span length~$t$ or less, since all valid paths lie on valid trajectories.
In this sense, the limit $t < \min\{\chi(\cC), \chi(\cC^\perp)\}$ in Theorem~\ref{T-Charac-chi} is the best possible.

Let us now return to the general situation.
Theorem~\ref{T-Charac-chi} characterizes $t$-irreducibility for small values of~$t$.
For $t\geq\min\{\chi(\cC),\,\chi(\cC^{\perp})\}$ the implication (i)~$\Rightarrow$~(ii)
remains valid (being solely based on Theorem~\ref{T-jkObsIrr}), whereas the converse is not true in general.
For instance, the trellis in Fig.~\ref{F-BCJRRed2} is conventional and minimal, thus $t$-irreducible
for each~$t$, but it is not $2$-observable.

The next example illustrates that in some cases, $t$-reducibility for $t\geq\min\{\chi(\cC),\,\chi(\cC^{\perp})\}$
may follow directly from Theorem~\ref{T-ZeroRun}.

\begin{exa}\label{E-2Irred}
The trellis~$\cR$ in Fig.~\ref{F-2O2C} is the product trellis obtained from the generators
$\underline{11011}0000$, $0\underline{101}00000,\,0000\underline{1101}0,\,\underline{1}000000\underline{11},\,
\underline{011}00000\underline{1},\,\underline{11}000\underline{1101}$ with the indicated spans.
The state labels are suppressed, and it suffices to keep in mind that the states at time zero appear in the
same ordering at the beginning and end of the trellis.
\begin{figure}[ht]
\centering
    \includegraphics[height=5.2cm]{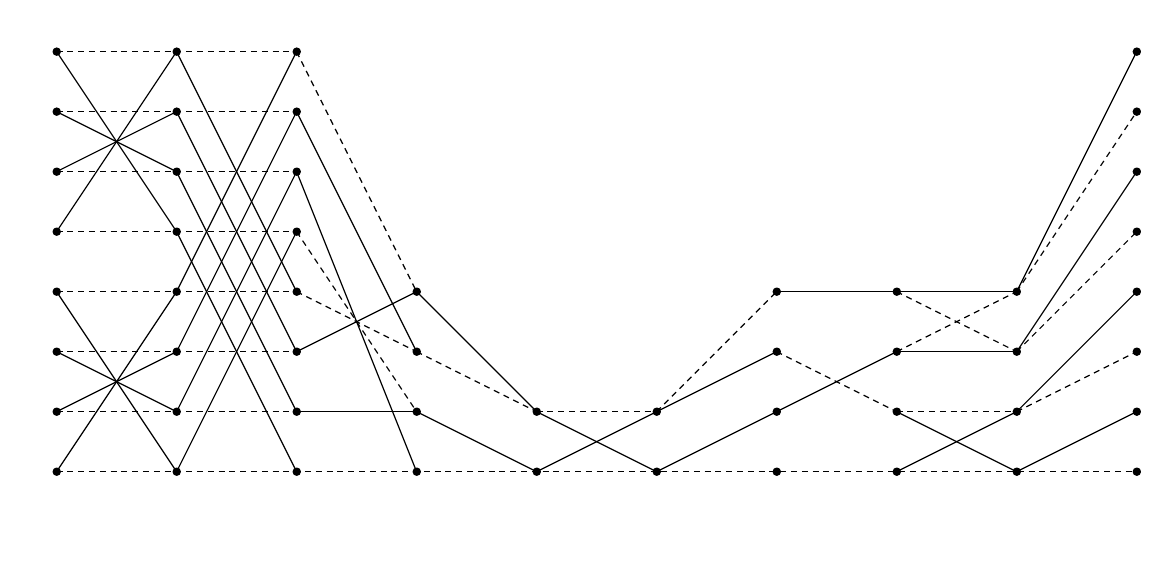}
    \\[-3ex]
    \caption{$2$-irreducible and $(m-3)$-unobservable trellis}
    \label{F-2O2C}
\end{figure}

\noindent The code~$\cC\subseteq\F_2^9$ generated by these vectors satisfies
$\chi(\cC)=3$ and $\chi(\cC^{\perp})=6$.
One can see directly that the trellis is $(m-2)$-observable, and it is also not hard to check that it is
$(m-2)$-controllable.
Thus, by Theorem~\ref{T-Charac-chi} the trellis is $2$-irreducible.
But the trellis is $(m-3)$-unobservable.
Even more, the unobservable valid path in the fragment $\cR^{[0,6)}$ satisfies Condition~A of
Theorem~\ref{T-ZeroRun}: there is no valid path from the ending state at time~$6$ to the zero state at time~$8$.
As a consequence, the trellis is reducible on the interval $[6,0)$.
Performing the reduction procedure as in the proof of Theorem~\ref{T-ZeroRun}, we obtain the
trellis shown in Figure~\ref{F-2O2C2} (where we added suitable state labels for further referencing).
\begin{figure}[ht]
\centering
    \includegraphics[height=4.4cm]{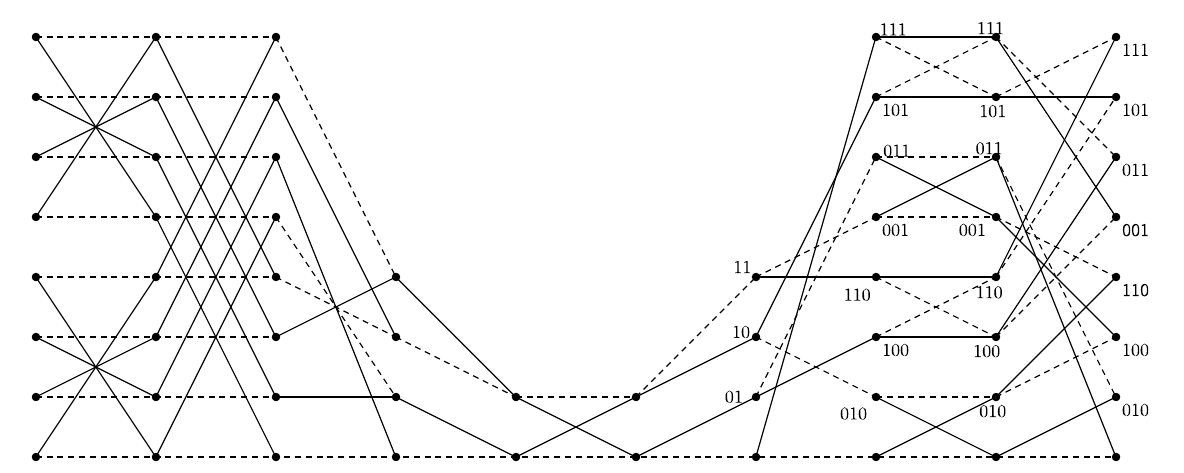}
    \\[-3ex]
    \caption{Step~1 of the reduction}
     \label{F-2O2C2}
\end{figure}

The (unique) subspace~$\cX$ of~$\cS_8^+$ as in Step~2 of the reduction process
is $\cX=\inner{010,111}\subseteq\cS_8^+$.
Trimming~$\cS_8^+$ to~$\cX$ leads to a trellis that is not trim at time~$6$ because there is no valid
path from~$11\in\cS_6$ to any state in~$\cX$.
Thus, as stated in Steps~3 and~4, we can subsequently trim the state spaces $\cS_7^+$ and~$\cS_6$.
In this case, trimming~$\cS_7^+$ to $\inner{010,111}$ and trimming~$\cS_6$ to $\inner{10}$
results in the trellis shown in Figure~\ref{F-2O2C3}.
\begin{figure}[ht]
\centering
    \includegraphics[height=4.4cm]{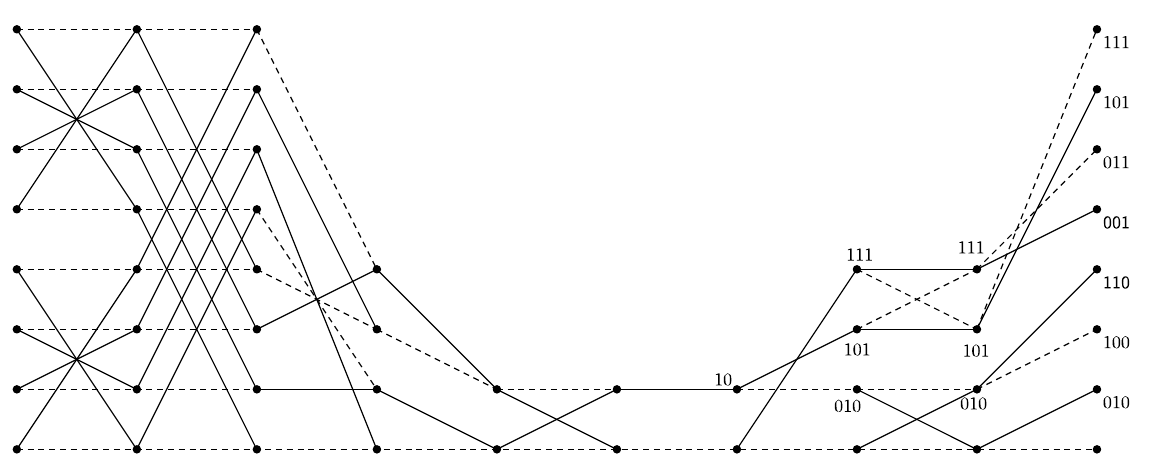}
    \\[-3ex]
    \caption{Steps~3 and~4 of the reduction leading to a strict $4$-reduction}
    \label{F-2O2C3}
\end{figure}
This trellis, denoted by~$\tilde{\cR}$, coincides with~$\cR$ on $[0,5)$ and is thus a strict and conservative
$4$-reduction.
The trellis is $(m-2)$-observable and $(m-2)$-controllable.
Furthermore, it can easily be checked that~$\tilde{\cR}$ and $\tilde{\cR}^{\circ}$ do not contain any unobservable
valid paths satisfying Conditions~A or~A$'$ of Theorem~\ref{T-ZeroRun}, and thus we do not have any other reduction method
at our disposal.
This is not surprising because it can be shown (with the aid of the class of KV-trellises as
introduced by Koetter and Vardy in~\cite{KoVa03}; see also Section~\ref{S-KV}), that~$\tilde{\cR}$ is a minimal trellis and
every trellis for the same code with state spaces of at most the same sizes is isomorphic
to~$\tilde{\cR}$.
Thus~$\tilde{\cR}$ is $t$-irreducible for every~$t$.
\QED
\end{exa}

\begin{exa}\label{E-SmallReduc}
The trellis in Fig.~\ref{F-TEMin1}(a) is the product trellis obtained from the generators
$\underline{1011}00$, $00\underline{1101},\,\underline{011}0\underline{11}$.
Its dual is shown in~(b).
\begin{figure}[!h]
\centering
    \includegraphics[height=2cm]{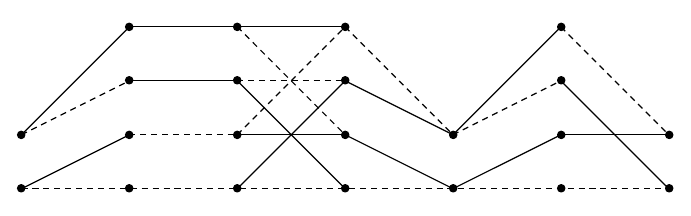}\quad \includegraphics[height=2cm]{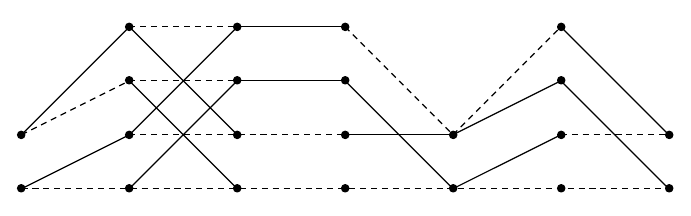}
    \\
    (a)\hspace*{7cm}(b)
    \\
    \caption{Dual pair of nonminimal trellises}
    \label{F-TEMin1}
\end{figure}

The code~$\cC$ generated by these vectors contains the word $100001$ and thus satisfies
$\chi(\cC)=2$, whereas $\chi(\cC^{\perp})=3$.
Hence Theorem~\ref{T-Charac-chi} is not applicable.
Furthermore, it is easy to see that the trellis and its dual do not contain any unobservable valid paths
satisfying Condition~A or~A$'$ from Theorem~\ref{T-ZeroRun}, and thus none of our reduction methods applies.
Yet, in this case the trellises are reducible.
Indeed, Fig.~\ref{F-TEMin3} shows mutually dual strict reductions on the interval $[4,3)$.
\begin{figure}[!h]
\centering
    \includegraphics[height=2cm]{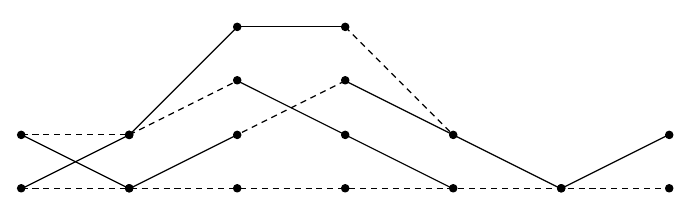}\quad \includegraphics[height=2cm]{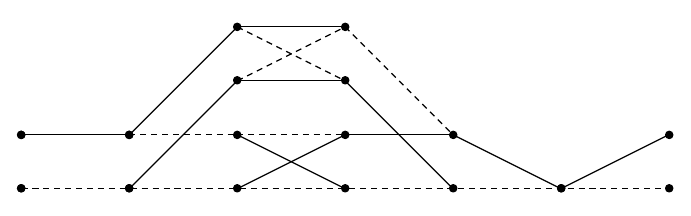}
    \\
    (a)\hspace*{7cm}(b)
    \\
    \caption{Dual pair of $(m\!-\!1)$-reductions}
    \label{F-TEMin3}
\end{figure}

These trellises are conventional, and thus minimal and $t$-irreducible for every~$t$.
Note that the trellis in Fig.~\ref{F-TEMin3}(a) is a conservative reduction of Fig.~\ref{F-TEMin1}(a), whereas
the dual reduction is a non-conservative reduction of Fig.~\ref{F-TEMin1}(b).

The trellises in Fig.~\ref{F-TEMin3} can be obtained constructively from those in Fig.~\ref{F-TEMin1}:
applying twice and in a suitable way the reduction from the proof of Theorem~\ref{T-ZeroRun} to Fig.~\ref{F-TEMin1}(a)
will result in Fig.~\ref{F-TEMin3}(a).
Even though Theorem~\ref{T-ZeroRun} does not apply, this will eventually lead to the reduction on the
interval~$[4,3)$.
With the aid of the class of KV-trellises, one can show that the trellis in Fig.~\ref{F-TEMin3}(a) is
the only trellis for~$\cC$ that is strictly smaller  than that in Fig.~\ref{F-TEMin1}(a).
Therefore, it is also the only possible reduction.
By duality, the same applies to the dual trellises in Fig.~\ref{F-TEMin3}(b) and Fig.~\ref{F-TEMin1}(b).
Hence we conclude that there exist strictly $t$-reducible trellises that do not allow a conservative reduction.
\QED
\end{exa}

As the last two examples have illustrated, it remains an open problem how to characterize $t$-irreducibi\-li\-ty for
$t\geq\ell:=\min\{\chi(\cC),\,\chi(\cC^{\perp})\}$.
By Theorem~\ref{T-jkObsIrr} we can expect $t$-reducibility only if there exists an unobservable
fragment of length $m-t$ in the trellis or its dual.
If such an unobservable valid path satisfies Condition~A or~A$'$ from Theorem~\ref{T-ZeroRun}, then the reduction
procedure from the proof of this theorem is applicable and the trellis is reducible.
As a consequence, the only remaining case is where the unobservable valid path does not satisfy either of
the conditions.
As shown in the proof of Theorem~\ref{T-Charac-chi} this means that the path lies on a valid trajectory (which,
for instance, is the case in any conventional trellis).
Obviously, the support of the associated codeword is contained in an interval of length~$t$.
Summarizing, one needs to study unobservable valid paths of length $\leq m-\ell$ that lie on valid trajectories.
Supported by many examples, we formulate the following conjecture.

\begin{conj}\label{Conj}
Any nonminimal trellis can be reduced constructively using a finite number of steps as in the reduction
procedure from the proof of
Theorem~\ref{T-ZeroRun} along with suitable state trimmings applied to the trellis or its dual.
\end{conj}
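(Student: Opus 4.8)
The plan is to prove Conjecture~\ref{Conj} by induction on the total state complexity $\sum_i\dim\cS_i$, which is a nonnegative integer. By Theorem~\ref{T-TPOC} we may assume at the outset that~$\cR$ is TPOC, since any realization that is not TPOC already admits a strict and conservative $2$-reduction and so has smaller complexity. Thus it suffices to show that every \emph{nonminimal} TPOC trellis~$\cR$ admits a finite sequence of the permitted operations---the expansion-and-trimming procedure of Theorem~\ref{T-ZeroRun} and plain state trimmings, each applied to~$\cR$ or to~$\cR^{\circ}$---whose net effect is a strictly smaller trellis for the same code~$\cC$. Since every such strict reduction lowers $\sum_i\dim\cS_i$, finitely many rounds must terminate, and the terminal trellis, which admits no further strict reduction of this kind, is the candidate minimal realization.

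First I would isolate the driving mechanism, namely unobservable valid paths. As long as~$\cR$ or~$\cR^{\circ}$ possesses an unobservable valid $[j,k)$-path satisfying Condition~A or~A$'$ of Theorem~\ref{T-ZeroRun}, that theorem supplies a strict and conservative reduction directly, and one recurses. Invoking the discussion preceding the conjecture, this reduces the problem to the core regime in which \emph{every} unobservable valid path, in both~$\cR$ and~$\cR^{\circ}$, lies on a valid trajectory; equivalently, each such path of length $\le m-\ell$ (with $\ell=\min\{\chi(\cC),\chi(\cC^{\perp})\}$) is witnessed by a codeword of~$\cC$ or~$\cC^{\perp}$ whose support lies in an interval of length $t\ge\ell$. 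In this regime neither Condition~A nor~A$'$ holds, so Theorem~\ref{T-ZeroRun} does not apply as stated.

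The crux is to handle this regime by the double-application idea illustrated in Example~\ref{E-SmallReduc}. The plan is to first perform an auxiliary, non-conservative \emph{expansion}---a branch-addition on~$\cR$ exactly as in Step~1 of the proof of Theorem~\ref{T-ZeroRun} (equivalently a branch-trimming on~$\cR^{\circ}$)---chosen so that the expanded trellis acquires a \emph{new} unobservable valid path that does satisfy Condition~A or~A$'$. Applying the trimming steps to this manufactured path then lowers a state-space dimension, while the expansion itself touched only constraint dimensions, so the composite of expansion followed by trimming is a genuine strict reduction of $\sum_i\dim\cS_i$. Carrying this out amounts to trading the short codeword that forces the original unobservable path through a nonzero state for a suitable combination whose zero run can be separated from that state; this is precisely the ``apply the Theorem~\ref{T-ZeroRun} procedure twice in a suitable way'' recipe of Example~\ref{E-SmallReduc}.

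The hard part will be establishing, in this last regime, that a productive expansion always exists, together with termination. Existence is delicate because here~$\cR$ is $t'$-irreducible for every $t'<\ell$ by Theorem~\ref{T-Charac-chi}, so the defect is genuinely global and tied to the interaction of several short codewords, as Example~\ref{E-2Irred} (which is $2$-irreducible yet reducible on a longer interval) warns. I would attempt to construct the expansion directly from the witnessing codeword and the geometry of the fragment it traverses, and then verify that the manufactured path meets the required separation condition. For termination one must bound the number of auxiliary expansions needed before a strict state-reduction is realized and rule out cycling among the non-strict operations; the natural device is to pair each expansion with the trimming it enables and treat the pair as an atomic strict reduction. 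The principal gap---and the reason the statement remains a conjecture rather than a theorem---is to exclude a nonminimal \emph{dead end}: a trellis, all of whose unobservable paths lie on valid trajectories, that nonetheless admits no productive expansion, so that the procedure stalls while the realization is still reducible. Certifying that the terminal trellis is actually minimal, which the paper carries out in its examples through the theory of KV-trellises of~\cite{KoVa03}, would be a final, separate step.
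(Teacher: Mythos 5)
The statement you are asked to prove is stated in the paper as a \emph{conjecture}, and the paper offers no proof of it; on the contrary, the authors say explicitly that characterizing $t$-irreducibility for $t\geq\min\{\chi(\cC),\chi(\cC^{\perp})\}$ and finding a constructive reduction procedure in that regime ``remains a largely open problem.'' Your proposal does not close this gap: it is an outline whose skeleton (induction on $\sum_i\dim\cS_i$, reduction to the TPOC case via Theorem~\ref{T-TPOC}, repeated use of Theorem~\ref{T-ZeroRun} whenever Condition~A or~A$'$ holds, and the ``apply the procedure twice'' idea from Example~\ref{E-SmallReduc} for the remaining regime) faithfully reproduces the paper's own informal discussion preceding the conjecture, but the two steps that would turn this outline into a proof are exactly the ones you flag as unresolved. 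To your credit, you identify them precisely: (1) in the regime where every unobservable valid path of $\cR$ and $\cR^{\circ}$ lies on a valid trajectory (so neither Condition~A nor~A$'$ is available), one must show that a ``productive expansion'' --- a branch-addition manufacturing a new unobservable path that \emph{does} satisfy one of the conditions --- always exists for a nonminimal trellis; and (2) one must rule out cycling among non-strict operations so that the process terminates in a genuinely smaller trellis. Neither is established, and (1) is precisely the possible ``dead end'' that makes the statement a conjecture rather than a theorem.

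Two smaller points. First, your framing that the terminal trellis ``which admits no further strict reduction of this kind, is the candidate minimal realization'' quietly assumes the converse direction of the conjecture --- that exhaustion of these particular moves certifies minimality --- which is an additional unproven claim; the paper itself resorts to the external machinery of KV-trellises to certify minimality in Examples~\ref{E-2Irred} and~\ref{E-SmallReduc}. Second, your appeal to Theorem~\ref{T-Charac-chi} to say the trellis is $t'$-irreducible for all $t'<\ell$ in the hard regime is fine as motivation, but it does not by itself localize where the reducibility defect lives, so it does not help construct the needed expansion. In short: the proposal is an accurate and well-organized restatement of the open problem, not a proof of it.
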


\section{Comparison to KV-Trellises}\label{S-KV}
In this section we relate our results to previous work on trellises and their complexity.
To this end, we restrict ourselves to trellises with symbol spaces~$\cA_i=\F$ for all~$i$ and to codes~$\cC$ in~$\F^m$
such that both~$\cC$ and~$\cC^{\perp}$ have full support, that is, the codewords do not all vanish at a fixed coordinate;
in other words, $\min\{\chi(\cC),\chi(\cC^{\perp})\}>1$.

Koetter and Vardy~\cite{KoVa03} showed that the search for possibly minimal trellises
can be narrowed to a certain canonical class, which we call \emph{KV-trellises}.
A KV-trellis is a product trellis based on $\dim \cC$ linearly independent generators with shortest spans that all
start and end at different positions.
A span $[\alpha,\alpha+r]$ (in the circular sense) is called a \emph{shortest span of the code} if~$r$ is the smallest
length of the spans of all codewords (in the sense of Definition~\ref{D-span}) that are nonzero at~$\alpha$;
see also~\cite{GLW11,GLW11a}.
KV-trellises may be regarded as the tail-biting version of the realizations resulting from the ``shortest basis'' approach in~\cite{Fo11a}.

Being product trellises, KV-trellises are state-trim and branch-trim, and from the choice of the generators it
follows that they are proper, observable and controllable.
Moreover, in \cite[Thm.~5.5]{KoVa03}  Koetter and Vardy have shown that each (reduced) minimal trellis is a KV-trellis.
However, the converse is not true: not all KV-trellises are minimal.

In \cite[Thm.~IV.3]{GLW11a} it is proven that the dual of a KV-trellis is a KV-trellis of the dual code.
As a consequence, Theorem~\ref{T-DualProd} implies that KV-trellises are $(m\!-\!1)$-observable and $(m\!-\!1)$-controllable,
and thus $1$-irreducible due to Theorem~\ref{T-1Red}.
With the machinery developed in~\cite{GLW11a} (more precisely, by a generalization of
Theorem~II.13 in~\cite{GLW11a}), one can show that if a code~$\cC$ satisfies $\min\{\chi(\cC),\chi(\cC^{\perp})\}>t$,
then all its KV-trellises (and their duals) are $(m-t)$-observable and hence $t$-irreducible
by Theorem~\ref{T-Charac-chi}.
However, this property does not characterize KV-trellises.
Indeed, the trellis in Fig.~\ref{F-2O2C} (and its dual) is not a KV-trellis\footnote{The sum of the first four generators given in Example~\ref{E-2Irred} results in a codeword with span $[5,8]$. Since this span is shorter than the span $[5,1]$ of the last
generator, the latter is not a shortest span of the code.},
but $2$-irreducible and represents a code satisfying $\min\{\chi(\cC),\chi(\cC^{\perp})\}=3$.

Summarizing, for a given~$t$ and a code~$\cC$ such that $\min\{\chi(\cC),\chi(\cC^{\perp})\}>t>1$
we have the following proper containments of trellis classes
\[
 \underbracket[0.5pt][7pt]{\bigg\{\!\textrm{minimal}\!\bigg\}\subsetneq\bigg\{\!\textrm{KV}\!\bigg\}\subsetneq
                            \bigg\{\!\!\begin{array}{l}\textrm{$t$-irreducible,}\\\textrm{T$_{\rm sb}$POC}\end{array}\!\!\!\bigg\}
   =\bigg\{\!\!\begin{array}{l}\textrm{$(m-t)$-obs./contr,}\\\textrm{T$_{\rm sb}$POC}\end{array}\!\!\!\bigg\}
       }_{\text{invariant under dualization}}\subsetneq
  \underbracket[0.5pt][7pt]{\bigg\{\textrm{NT$_{\rm sb}$POC}\bigg\}\subsetneq\bigg\{\textrm{T$_{\rm sb}$POC}\bigg\}}_{\text{not invariant under dualization}},
\]
where N stands for nonmergeable, T$_{\rm sb}$ for state- and branch-trim (as opposed to the weaker trimness),
and, as before, P,O,C stand for proper, observable, and controllable, respectively.

As indicated, the last two containments are strict as well:
for instance, the trellis in Figure~\ref{F-MergTrellis}(a) is in the rightmost class, but not in $\{\textrm{NT$_{\rm sb}$POC}\}$;
moreover, it is straightforward to verify that the product trellis generated by
$00\underline{101}00,\,000\underline{1011}$, $\underline{1}000\underline{100},\,\underline{011}000\underline{1}$
realizes a code~$\cC$ that satisfies $\min\{\chi(\cC),\chi(\cC^{\perp})\}=3$ and
is in $\{\textrm{NT$_{\rm sb}$POC}\}$, but $2$-reducible.

The three leftmost classes are invariant under taking duals.
This is clear for minimal trellises and for $t$-irreducible trellises and follows from \cite[Thm.~IV.3]{GLW11a} for KV-trellises.
Fig.~\ref{F-BCJR} and Fig.~\ref{F-MergTrellis} show that the two rightmost classes are not invariant under taking duals.

\section{Conclusion}
We have presented constructive procedures for reducing a given tail-biting trellis realization and its dual and have provided criteria for
when a trellis is irreducible on an interval of length~$t$.
The criteria are sufficient for all codes, and necessary and sufficient for codes of minimum span length
bigger than~$t$.
We have also discussed the remaining case of reducibility on intervals of length at least the minimum span length of the code.
While we believe that all nonminimal trellises can be reduced, finding a constructive procedure remains a largely open problem.

As a main tool of our approach, we have used trellis fragments, i.e., realizations obtained by cutting two edges in
the normal graph of the tail-biting trellis.
We have introduced the notions of fragment controllability and observability, which are, naturally, the same as those in classical
linear systems theory, and have shown that they are mutually dual.

With the aid of fragment trimness, which implies that every valid path in the fragment is the restriction of a
valid trajectory in the entire trellis, we have presented criteria for state-trimness and branch-trimness of a tail-biting trellis.
Using the well-known fact that a tail-biting trellis is a product trellis if and only if it is state-trim and branch-trim,
our results have also led to a characterization of when the dual of a product trellis is a product trellis.

Finally, we have discussed the relation of our results to the prior tail-biting trellis literature that relies on product
representations.

Beyond trellises, we believe that many of our results can be generalized to normal realizations on general graphs.

\appendix
\section{Controllability and Connectedness}\label{S-App1}
In this appendix we discuss the relationship between controllability of a trellis realization and connectedness of
the trajectories in its trellis diagram.
In \cite[Thm.~10]{FGL12} it was shown that if a trellis is trim but not controllable, then its
valid trajectories partition into disconnected subsets.
Moreover, it was noted in~\cite{FGL12} that, using the product representation of~\cite{KoVa03}, a state-trim and
branch-trim trellis is uncontrollable if and only if its trajectories are disconnected.
We will now show that this statement holds if the trellis is merely state-trim.

\begin{theo}\label{T-ContrConn}
A state-trim trellis is controllable if and only if it is connected.
\end{theo}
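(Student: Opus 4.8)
The plan is to recast the whole statement in terms of the dual trellis and a notion of ``potential'' on the trellis diagram, and to locate the real work in a single linearity step. Regard the trellis diagram as the graph $G$ whose vertices are the pairs $(i,s)$ with $s\in\cS_i$ ($i\in\Z_m$, with $\cS_m$ identified with $\cS_0$) and whose edges are the branches $(s_i,a_i,s_{i+1})\in\cC_i$; connectedness of $\cR$ means connectedness of $G$. The first thing I would record is the following translation, read directly off the definition of the dual trellis: a family $(\hat s_i)_i$ with $\hat s_i\in\hat\cS_i$ is the state part of an unobservable dual trajectory $(\zerob,\hat\sb)\in\Bf^{\circ}$ if and only if $\inner{\hat s_i,s_i}=\inner{\hat s_{i+1},s_{i+1}}$ for every branch $(s_i,a_i,s_{i+1})\in\cC_i$ and every $i$; that is, iff the potential $\phi_i(s):=\inner{\hat s_i,s}$ is constant along every edge of $G$. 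Since $\cR$ is controllable iff $\cR^{\circ}$ is observable, $\cR$ is controllable iff the only such potential is the zero potential.

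For the direction connected $\Rightarrow$ controllable, suppose $G$ is connected and let $(\hat s_i)_i$ be a potential. Being constant along every edge, $\phi$ is constant on the single component $G$; since $\phi_i(0)=0$ by linearity, it follows that $\phi_i\equiv 0$, hence $\hat s_i=0$ for all $i$, and $\cR$ is controllable. This direction uses no state-trimness; alternatively it is the contrapositive of \cite[Thm.~10]{FGL12}, since a state-trim trellis is trim.

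The substance is the converse, disconnected $\Rightarrow$ uncontrollable, and this is where state-trimness enters. The plan is to show that for a state-trim trellis the connected components of $G$ carry a linear structure, so that the ``which component'' label is itself a potential. Let $\Bf_0\subseteq\Bf$ be the set of (full, closed) trajectories lying in the component of the all-zero trajectory. The key step is to prove that $\Bf_0$ is a subspace by an \emph{offset} argument: given a walk $W$ in $G$ and any trajectory $\sigma\in\Bf$, the walk obtained by adding $\sigma_j$ to the state of $W$ at each time $j$ is again a walk in $G$, because a sum of two branches in $\cC_j$ is a branch in $\cC_j$; applying this to the walks that witness $\tau,\sigma\in\Bf_0$ yields $\tau+\sigma\in\Bf_0$, and scaling is similar. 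State-trimness then guarantees that every vertex of $G$ lies on a trajectory, so that $G$ is disconnected exactly when $\Bf_0\subsetneq\Bf$, and it makes the assignment $\mathrm{comp}_i\colon\cS_i\to\Bf/\Bf_0$, sending $s$ to $\tau+\Bf_0$ for any trajectory $\tau$ through $(i,s)$, well defined, linear, and surjective, with $\mathrm{comp}_i(s_i)=\mathrm{comp}_{i+1}(s_{i+1})$ along every branch. Choosing a nonzero linear functional $\bar\lambda$ on $\Bf/\Bf_0$ (possible since $\Bf_0\subsetneq\Bf$) and setting $\phi_i=\bar\lambda\circ\mathrm{comp}_i$ then produces a nonzero potential, hence a nonzero unobservable dual trajectory, so $\cR$ is uncontrollable.

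The main obstacle is precisely the linearity of the component structure: a priori the components of $G$ are only graph-theoretic objects, and two connecting walks between the same pair of vertices may have entirely different shapes, so one cannot naively add them. The offset-by-a-trajectory device, which replaces ``adding two walks'' by ``translating one walk by a genuine trajectory,'' is what circumvents this, and it is exactly here that one needs a trajectory through every state, i.e. state-trimness. I would also verify that well-definedness of $\mathrm{comp}_i$ rests on the same mechanism: two trajectories through a common vertex differ by an element of $\Bf_0$, since their difference is a trajectory through a zero state. Finally, for the closing remark of the appendix I would produce a small non-state-trim example in which $\mathrm{comp}_i$ fails to be defined and the implication breaks, confirming that state-trimness is necessary.
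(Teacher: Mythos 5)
Your proof is correct, but the hard direction (controllable $\Rightarrow$ connected) follows a genuinely different route from the paper's. Both arguments hinge on the same structural fact, which is exactly where state-trimness enters: the connected component of the zero trajectory carries a linear structure (the paper shows the component's state and constraint sets $\cS_{i,0},\,\cC_{i,0}$ are subspaces; you show the trajectory set $\Bf_0$ is a subspace and that the component labels $\mathrm{comp}_i$ are well-defined linear maps, via the offset-by-a-trajectory device). After that the routes diverge. The paper counts dimensions: each $\cS_i$ and $\cC_i$ is a union of $q^\ell$ cosets of $\cS_{i,0}$ and $\cC_{i,0}$, so $\sum_i(\dim\cC_{i,0}-\dim\cS_{i,0})=\sum_i(\dim\cC_i-\dim\cS_i)=\dim\Bf>\dim\Bf_0$, contradicting the controllability test of Theorem~\ref{T-ContrTest} applied to the subtrellis $\cR_0$. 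You instead dualize and exhibit an explicit witness: a nonzero functional on $\Bf/\Bf_0$ pulled back through $\mathrm{comp}_i$ yields nonzero $\hat{s}_i$ with $\inner{\hat{s}_i,s_i}=\inner{\hat{s}_{i+1},s_{i+1}}$ on every branch, i.e.\ a nonzero unobservable dual trajectory $(\zerob,\hat{\sb})\in\Bf^{\circ}$. Your route buys an explicit certificate of uncontrollability and makes the controllability/observability duality visible without invoking Theorem~\ref{T-ContrTest}; the paper's route is shorter once that test is in hand. One step you only assert and should write out is the edge-compatibility $\mathrm{comp}_i(s_i)=\mathrm{comp}_{i+1}(s_{i+1})$ for a branch $(s_i,a_i,s_{i+1})\in\cC_i$ that lies on no valid trajectory: subtracting from it the time-$i$ branch of a trajectory $\tau$ through $(i,s_i)$ gives a branch from the zero state at time $i$ to $(\tau'-\tau)_{i+1}$ (with $\tau'$ any trajectory through $(i+1,s_{i+1})$), so $\tau'-\tau\in\Bf_0$ as needed. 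Your potential argument for the easy direction is likewise a correct, self-contained replacement for the paper's citation of \cite{FGL12}.
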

\begin{proof}
The if-part has been proven (for trim trellises) in \cite[Thm.~10]{FGL12}.
For state-trim and branch-trim trellises a proof of the converse is sketched in~\cite[Sec.~IV.F]{FGL12} by
using the fact that every such trellis is a product realization of one-dimensional trellises.
For non-branch-trim trellises a proof of the converse is as follows.
Let~$\cR$ be a controllable, state-trim trellis and suppose~$\cR$ is not connected.
Consider the connected component of~$\cR$ containing the zero trajectory.
Denote the state and constraint sets of this subtrellis by~$\cS_{i,0}$ and $\cC_{i,0}$, respectively.
With the aid of state-trimness one easily verifies that~$\cC_{i,0}$ is a linear subspace of~$\cC_i$
for each~$i$.
Notice that $\cS_{i,0}$ is the union of the projections of~$\cC_{i,0}$ and $\cC_{i-1,0}$ on~$\cS_i$.
Again using state-trimness one can see that $\cS_{i,0}$ is a linear subspace of~$\cS_i$ for all~$i$.
Thus, the connected component forms a linear subtrellis~$\cR_0$ with state spaces~$\cS_{i,0}$ and
constraint codes~$\cC_{i,0}$. Denote its behavior by~$\Bf_0$.
Then the behavior~$\Bf$ of~$\cR$ is the union of~$q^\ell$ disconnected subbehaviors, where $\ell:=\dim\Bf-\dim\Bf_0$,
and by linearity and state-trimness each state space~$\cS_i$ and constraint code~$\cC_i$ of~$\cR$ is the union of
$q^\ell$ cosets of~$\cS_{i,0}$ and $\cC_{i,0}$, respectively.
Using controllability of~$\cR$ we obtain
$\dim\Bf_0<\dim\Bf=\sum_i(\dim\cC_i-\dim\cS_i)=\sum_i(\dim\cC_{i,0}-\dim\cS_{i,0})$, and this contradicts
Theorem~\ref{T-ContrTest} for the trellis~$\cR_0$.
\end{proof}

We wish to point out that state-trimness is indeed necessary for the only-if part to be true.
The two linear, non-state-trim trellises shown in Fig.~\ref{F-DisConnContr} are disconnected, yet form controllable
realizations of the code $\cC=\{00\}$.
It is also worth observing that in the trellis in Fig.~\ref{F-DisConnContr}(b) the connected component containing the zero
trajectory is not a linear subtrellis.
\begin{figure}[ht]
\centering
    \includegraphics[height=2.5cm]{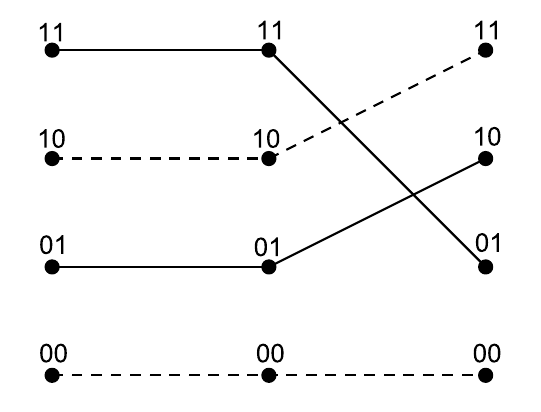}\qquad\qquad \includegraphics[height=1.7cm]{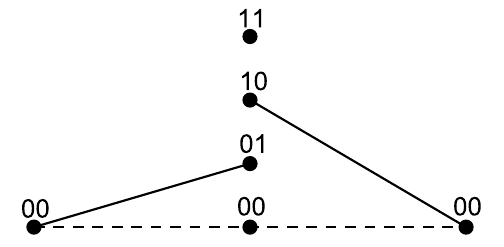}
    \\
    \mbox{}\hspace{-.5cm}(a) \hspace*{4.5cm} (b)
    \caption{Disconnected and controllable trellises}
    \label{F-DisConnContr}
\end{figure}

Finally, we note that~\cite[Sec.~IV.G]{FGL12} shows that Theorem~\ref{T-ContrConn} does not generalize to
normal linear realizations on general graphs with cycles.

\section{Proof of Theorem~\ref{T-ZeroRun}}\label{S-AppA}
Without loss of generality we assume that~$\cR$ satisfies Condition~A.
In a first step we expand the trellis~$\cR$ to an unobservable trellis by increasing
the state spaces at times~$m-t+1,\ldots,m-1$ by one dimension.
Thus, pick new states $\tilde{s}_i\not\in\cS_i$ for $i=m-t+1,\ldots,m-1$.
For ease of notation, define $\tilde{s}_{m-t}:=s_{m-t},\,\tilde{s}_0:=s_0$, and expand
the fragment $\cR^{[m-t,0)}$ via
\begin{equation}\label{e-AexpT}
  \cS^+_i = \cS_i \oplus \langle\tilde{s}_i\rangle\text{ for } i \in (m-t,0)\ \text{ and }\
  \cC^+_i = \cC_i \oplus \langle (\tilde{s}_i, 0, \tilde{s}_{i+1})\rangle\text{ for } i \in [m-t,0).
\end{equation}
By construction, the expanded trellis~$\cR^+$ has a nontrivial unobservable valid trajectory $(\zerob,\tilde{\sb})$ that passes
through the states~$\tilde{s}_i,\,i\in [m-t,0]$.

Using the fact that $\tilde{s}_i\not\in\cS_i$ for $i\in(m-t,0)$,  we obtain immediately for $i\in\{m-t+1,\ldots,m-2\}$
\begin{equation}\label{e-branches}
 (v_i+\alpha\tilde{s}_i,a_i,v_{i+1}+\beta\tilde{s}_{i+1})\in\cC_i^+
 \text{ for some } (v_i,a_i,v_{i+1})\in\cS_i\times\cA_i\times\cS_{i+1},\ \alpha,\beta\in\F\Longrightarrow  \alpha=\beta.
\end{equation}
The trellis~$\cR^+$ has the following properties.

1) The behavior of~$\cR^+$ is given by $\Bf^+=\Bf\oplus\inner{(\zerob,\tilde{\sb})}$, and thus~$\cR^+$ represents
the same code~$\cC$.
To see this, consider a valid trajectory in~$\Bf^+$.
Due to~\eqref{e-branches}, the state sequence must be of the form
$(v_0,\ldots,v_{m-t},\,v_{m-t+1}+\alpha\tilde{s}_{m-t+1},\ldots,v_{m-1}+\alpha\tilde{s}_{m-1},v_0)$,
where $v_i\in\cS_i$ for all~$i$.
Subtracting the unobservable trajectory $\alpha(\zerob,\tilde{\sb})$ yields a valid trajectory that is entirely
in the subtrellis~$\cR$, hence it is an element of~$\Bf$.

2) There is no valid path from~$\tilde{s}_{m-t}\in\cS_{m-t}$ to~$0\in\cS_{m-1}^+$ in the fragment $(\cR^+)^{[m-t,m-1)}$.
To show this, suppose we have such a path.
By~\eqref{e-branches} its state sequence is of the form
$(\tilde{s}_{m-t},v_{m-t+1}+\alpha\tilde{s}_{m-t+1},\ldots,v_{m-2}+\alpha\tilde{s}_{m-2},0)$, where $v_i\in\cS_i$, and
once more by~\eqref{e-branches} we conclude that $\alpha=0$.
But then the given path is a valid path in~$\cR^{[m-t,m-1)}$, and this contradicts Condition~A.

3) There exists a subspace $\cX$ of~$\cS_{m-1}^+$ satisfying
$\cX\oplus\inner{\tilde{s}_{m-1}}=\cS_{m-1}^+$ and such that there is no valid path in~$(\cR^+)^{[m-t,m-1)}$
from~$\tilde{s}_{m-t}$ to any $x\in\cX$.
This can be seen as follows.
Put
\[
    \cY:=\{s\in\cS_{m-1}^+\mid \text{there is a valid path from~$0\in\cS_{m-t}$ to~$s$ in the fragment $(\cR^+)^{[m-t,m-1)}$}\}.
\]
Then $\tilde{s}_{m-1}\not\in\cY$ because if there was a valid path from~$0$ to~$\tilde{s}_{m-1}$, then the existence
of a path with state sequence $(\tilde{s}_{m-t},\tilde{s}_{m-t+1},\tilde{s}_{m-t+2},\ldots,\tilde{s}_{m-1})$ leads to a valid path
in~$(\cR^+)^{[m-t,m-1)}$ from~$\tilde{s}_{m-t}$ to $0\in\cS_{m-1}^+$, and this contradicts our observation in~2).
Using again the path from~$\tilde{s}_{m-t}$ to~$\tilde{s}_{m-1}$ we observe that for each $\alpha\in\F$ the coset
$\cY+\alpha\tilde{s}_{m-1}$ is exactly the set of all states in~$\cS_{m-1}^+$ that can be reached by a valid path from
$\alpha\tilde{s}_{m-t}$.
Now, we may choose any subspace $\cZ\subset\cS_{m-1}^+$ such that $\cZ\oplus\cY\oplus\inner{\tilde{s}_{m-1}}=\cS_{m-1}^+$
and put $\cX=\cZ\oplus\cY$.

Having established these properties we can perform the reduction:

a) Trim~$\cS_{m-1}^+$ to the subspace~$\tilde{\cS}_{m-1}:=\cX$, where~$\cX$ is as in~3).
By Remark~\ref{R-TrimUnObs} this results in a trellis that still represents~$\cC$.
Denote its constraint code at time~$m-2$ by $\tilde{\cC}_{m-2}$.

b) Let $\tilde{\cS}_{m-2}$ be the projection of~$\tilde{\cC}_{m-2}$ on the state space~$\cS_{m-2}^+$.
Then~$\tilde{\cS}_{m-2}$ is contained in the set
$\{s\in\cS_{m-2}^+\mid \text{there is no valid path from~$s_{m-t}$ to~$s$}\}$, and
the latter is a proper subset of~$\cS_{m-2}^+$.
Thus, $\dim\tilde{\cS}_{m-2}<\dim\cS_{m-2}^+$.
Obviously, the states not in~$\tilde{\cS}_{m-2}$ are not on any valid trajectory, and thus we may trim~$\cS_{m-2}^+$
to~$\tilde{\cS}_{m-2}$.
After this trimming denote the constraint code at time~$m-3$ by $\tilde{\cC}_{m-3}$ and continue in the same manner.

c) All this shows that we can trim all state spaces~$\cS^+_{m-1},\,\cS^+_{m-2},\ldots,\cS^+_{m-t+1}$ by  one dimension.
Since the branches that have been added in~\eqref{e-AexpT} will be trimmed, this also reduces the constraint
codes~$\cC^+_i,\,i=m-t,\ldots,m-1$, by one dimension.
Thus the resulting trellis, denoted by~$\tilde{\cR}$, is an $[m-t,0)$-reduction with the same state space and
constraint code dimensions as~$\cR$.
Consequently, the same is true for the dual reduction~$\tilde{\cR}^{\circ}$ of~$\cR^{\circ}$.
Finally, by construction,~$\tilde{s}_{m-t}\in\cS_{m-t}$ is not on any branch in the constraint code~$\tilde{\cC}_{m-t}$,
and thus~$\tilde{\cR}$ is not trim at time~$m-t$.
Trimming results in the desired strict and conservative $[m-t-1,0)$-reduction of~$\cR$, whose dual is a
reduction of~$\cR^{\circ}$ of the same type. \mbox{}\hfill$\Box$

\bibliographystyle{abbrv}
\bibliography{literatureAK,literatureLZ}

\end{document}